
\documentclass[conference]{IEEEtran}

\usepackage{bm}
%
\usepackage{cite}
\ifCLASSINFOpdf
\else
\fi
\usepackage[cmex10]{amsmath}
\usepackage{array}
\usepackage{fixltx2e}
\usepackage{url}
\hyphenation{op-tical net-works semi-conduc-tor}

\usepackage{amsthm}




\theoremstyle{plain}
\newtheorem{theorem}{Theorem}

\newtheorem{lemma}[theorem]{Lemma}
\newtheorem{corollary}[theorem]{Corollary}

\theoremstyle{definition}

\newtheorem{fact}[theorem]{Fact}
\newtheorem{example}[theorem]{Example}

\newcommand{\longv}[1]{}


\newtheorem{thm}[theorem]{Theorem}
\newtheorem{prop}[theorem]{Proposition}
\newtheorem{cor}[theorem]{Corollary}

\newtheorem{deff}[theorem]{Definition}

\newtheorem{rem}[theorem]{Remark}

\newenvironment{varitemize}
{
\begin{list}{\labelitemi}
{\setlength{\itemsep}{0pt}
 \setlength{\topsep}{0pt}
 \setlength{\parsep}{0pt}
 \setlength{\partopsep}{0pt}
 \setlength{\leftmargin}{15pt}
 \setlength{\rightmargin}{0pt}
 \setlength{\itemindent}{0pt}
 \setlength{\labelsep}{5pt}
 \setlength{\labelwidth}{10pt}
}}
{
 \end{list} 
}

\newcounter{number}

\newenvironment{varenumerate}
{\begin{list}{\arabic{number}.}
  {
   \usecounter{number}
   \setlength{\labelwidth}{4.0mm}
   \setlength{\labelsep}{2.0mm}
   \setlength{\itemindent}{0.0mm}
   \setlength{\itemsep}{0.0mm}
   \setlength{\topsep}{0.0mm}
   \setlength{\parskip}{0.0mm}
   \setlength{\parsep}{0.0mm}
   \setlength{\partopsep}{0.0mm}
  }
}
{\end{list}}


\newcommand{\tens}{\otimes}
\newcommand{\lin}{\multimap}


\newcommand{\PCF}{\ensuremath{\mathsf{PCF}}}

\newcommand{\MLL}{\ensuremath{\textsf{MLL}}}


\newcommand{\typeof}{0} %



\usepackage{ifthen}
\newcommand{\condinc}[2]{\ifthenelse{\equal{\typeof}{0}}{#1}{#2}}

\usepackage{graphicx}

\usepackage{tikz}
\usepackage{mathrsfs}
\usepackage{amssymb}

\usepackage{times}
\usepackage{cmll}
\usepackage{stmaryrd}
\usepackage{proof}

\renewcommand{\b}{{}^{\bot}}

\newcommand{\SIAM}{\textsf{SIAM}}

\newcommand{\MELL}{\textsf{MELL}}
\newcommand{\SMLL}{\textsf{SMLL}}
\newcommand{\SMELL}{\textsf{MELLS}}
\newcommand{\SMELLY}{\textsf{SMEYLL}}

\newcommand{\one}{\mathsf{1}}
\newcommand{\botlk}{\mathsf{bot}}
\newcommand{\onelk}{\mathsf{one}}

\newcommand{\cutlk}{\mathsf{cut}}
\newcommand{\axlk}{\mathsf{ax}}

\newcommand{\slk}{\mathsf{sync}}
\newcommand{\dlk}{\mathsf{?d}}

\newcommand{\up}{\uparrow}
\newcommand{\down}{\downarrow}
\newcommand{\stable}{\leftrightarrow}
\newcommand{\dr}{\mathtt{dir}}

\newcommand{\atomone}{\alpha}

\newcommand{\formone}{A}
\newcommand{\formtwo}{B}

\newcommand{\bnf}{::=}
\newcommand{\midd}{\; \; \mbox{\Large{$\mid$}}\;\;}

\newcommand{\posfone}{P}
\newcommand{\negfone}{N}

\newcommand{\netone}{R}
\newcommand{\nettwo}{S}

\newcommand{\red}{\rightsquigarrow}
\newcommand{\rednet}{\red}

\newcommand{\ie}{\emph{i.e.}}
\newcommand{\eg}{\emph{e.g.}}

\newcommand{\ONES}{\mathtt{ONES}}

\newcommand{\DEREL}{\mathtt{DER}}
\newcommand{\DER}{\mathtt{DER}}
\newcommand{\START}{\mathtt{START}}
\newcommand{\PDOORS}{\mathtt{STABLE}}
\newcommand{\id}{\mathtt{CopyID}_{\st}}

\newcommand{\POSALL}{\mathtt{POS}} 
\newcommand{\POSI}{\mathtt{INIT}}
\newcommand{\POSF}{\mathtt{FIN}}

\newcommand{\stI}{\mathbf{I}}




\newcommand{\sem}[1]{[\![#1]\!]}


\usepackage{xcolor}

\newcommand{\todo}[1]{{ \color{red}{#1}}}

\newcommand{\blue}[1]{{\color{blue}{#1}}}


\newcommand{\BB}{b}

\newcommand{\topnodes}{\mathcal T}

\newcommand{\bbox}{$\bot$-box}
\newcommand{\bboxes}{$\bot$-boxes}

\newcommand{\st}{\mathbf{T}}
\newcommand{\scod}{\mathtt{Current_{\st}}}
\newcommand{\sdom}{\mathtt{Dom_{\st}}}
\newcommand{\pos}{\mathtt{pos_{\st}}}

\newcommand{\sttwo}{\mathbf{U}}

\newcommand{\postwo}{\mathtt{pos_{\sttwo}}}

\newcommand{\ones}{\mathtt{ONES}}

\newcommand{\pp}{\mathbf{p}}
\newcommand{\qq}{\mathbf{q}}
\renewcommand{\ss}{\mathbf{s}}

\renewcommand{\b}{{}^{\bot}}


\newcommand{\encode}[2]{\lceil #1, #2 \rceil}

\newcommand{\stk}{\mathord{s}}
\newcommand{\bstk}{\mathord{t}}
\newcommand{\edg}{\mathord{e}}

\newcommand{\emp}{\epsilon}

\newcommand{\M}{\mathcal{M}}
\newcommand{\machine}[1]{\M_{#1}}
\newcommand{\redsiam}{\rightarrow}
\newcommand{\stopsiam}{\nrightarrow}
\newcommand{\sts}{\mathcal{S}}
\newcommand{\trsf}{\mathrm{trsf}}
\newcommand{\orig}{\mathrm{orig}}
\newcommand{\partto}{\rightharpoonup}

\renewcommand{\sem}[1]{\llbracket{#1}\rrbracket}

\newcommand{\tensor}{\otimes}
\newcommand{\bang}{\mathop{!}\nolimits} 

\newcommand{\separ}{\,|\,}

\newcommand{\typeone}{A}
\newcommand{\typetwo}{B}
\newcommand{\typethree}{C}


\newcommand{\PCFlproj}{\pi_l}
\newcommand{\PCFrproj}{\pi_r}
\newcommand{\PCFpair}[1]{{\langle{#1}\rangle}}
\newcommand{\PCFn}[1]{\overline{#1}}
\newcommand{\PCFzero}{\PCFn{0}}
\newcommand{\PCFsucc}{\mathtt{s}}
\newcommand{\PCFpred}{\mathtt{p}}
\newcommand{\PCFifzero}[3]{{{\tt if}\,{#1}\,{\tt then}\,{#2}\,{\tt
      else}\,{#3}}}
\newcommand{\PCFletrec}[4]{{{\tt letrec}\,{#1}\,{#2}={#3}\,{\tt in}\,{#4}}}

\newcommand{\PCFnat}{\mathbb{N}}
\newcommand{\PCFarrow}{\to}
\newcommand{\PCFprod}{\times}

\newcommand{\PCFentail}{\vdash}
\newcommand{\PCFcbn}{\to_{\it cbn}}
\newcommand{\PCFcbv}{\to_{\it cbv}}

\usepackage{gensymb}
%

\setlength{\abovecaptionskip}{2.5pt}

\begin{document}
%
\title{Parallelism and Synchronization\\ in an Infinitary Context}

\author{
\IEEEauthorblockN{Ugo Dal Lago}
\IEEEauthorblockA{Univ. di Bologna \& INRIA
}
\and
\IEEEauthorblockN{Claudia Faggian}
\IEEEauthorblockA{CNRS \& Univ. Paris Diderot
}
\and
\IEEEauthorblockN{Beno\^it Valiron}
\IEEEauthorblockA{CentraleSup\'{e}lec \& LRI, Univ. Paris Sud
}
\and
\IEEEauthorblockN{Akira Yoshimizu}
\IEEEauthorblockA{Univ. of Tokyo
}
}

\maketitle

\begin{abstract}
  We study multitoken interaction machines in the context of a very
  expressive linear logical system with exponentials, fixpoints and
  synchronization. The advantage of such machines is to provide models
  in the style of the Geometry of Interaction, \ie, an interactive
  semantics which is close to low-level implementation. On the one
  hand, we prove that despite the inherent complexity of the
  framework, interaction is guaranteed to be deadlock-free. On the
  other hand, the resulting logical system is powerful enough to embed
  \PCF{} and to adequately model its behaviour, both when call-by-name
  and when call-by-value evaluation are considered. This is not the
  case for single-token stateless interactive machines.
\end{abstract}

\IEEEpeerreviewmaketitle

\section{Introduction}
What is the inherent parallelism of higher-order functional programs?
Is it possible to turn $\lambda$-terms into low-level programs, at the
same time exploiting this parallelism? Despite great advances in very
close domains, these questions have not received a definite answer, 
yet. The main difficulties one faces when dealing with parallelism
and functional programs are due to the higher-order nature of those
programs, which turns them into objects having a non-trivial
interactive behaviour.

The most promising approaches to the problems above are based on Game
Semantics~\cite{HylandO00,AbramskyJM00} and the Geometry of
Interaction~\cite{Girard89} (GoI), themselves tools which were
introduced with purely semantic motivations, but which have later been
shown to have links to low-level formalisms such as asynchronous
circuits~\cite{GhicaSS11}. This is especially obvious when Geometry of
Interaction is presented in its most operational form, namely as a
token machine~\cite{DanosRegnier}.

Most operational accounts on the Geometry of Interaction are in
\emph{particle-style}, i.e., a \emph{single} token travels around the
net; this is largely due to the fact that parallel computation without
any form of synchronization nor any data sharing is not particularly
useful, so having multiple tokens would not add anything to the
system. While some form of synchronization was implicit in earlier
presentations of GoI, the latter has been given a proper status only
recently, with the introduction of \SMLL~\cite{lics2014}, where
\emph{multiple} tokens circulate simultaneously, and also
\emph{synchronize} at a new kind of node, called a \emph{sync
  node}. All this has been realized in a minimalistic logic, namely
multiplicative linear logic, a logical system which lacks any copying
(or erasing) capability and, thus, is not an adequate model of realistic
programming languages (except purely linear ones, whose role is
relevant in quantum computation~\cite{SelingerValiron05}).

Multitoken GoI machines are relatively straightforward to define in a
linear setting: all \emph{potential} sources of parallelism give rise
to \emph{actual} parallelism, since erasing and copying are simply
forbidden. As a consequence, managing parallelism, and in particular
the spawning of new tokens, is easy: the mere syntactical occurrence
of a source of parallelism triggers the creation of a new
token. Concretely, these sources of parallelism are \emph{unit nodes}
(when thought logically), or \emph{constants} (when read through the lenses of
functional programming). The reader will find an example in
Section~\ref{sect:multexpo}, Fig.~\ref{fig:potential}.

But can all this scale to more expressive proof theories and
programming formalisms? If programs or proofs are allowed to copy or
erase portions of themselves, the correspondence between potential and
actual parallelism vanishes: any occurrence of a unit node can
possibly be erased, thus giving rise to \emph{no} token, or copied,
thus creating \emph{more than one} token. The underlying interactive
machinery, then, necessarily becomes more complex. But \emph{how}? The
solution we propose here relies on linear logic itself: it is the way
copying and erasing are handled by the exponential connectives of
linear logic which gives us a way out.  We find the resulting theory
simple and elegant.

In this paper we generalize the ideas behind \SMLL{} in giving a
proper status to synchronization and parallelism in GoI. We show that
multiple tokens and synchronization can work well together in a
\emph{very expressive } logical system, namely multiplicative linear
logic with \emph{exponentials}, \emph{fixpoints}, and
\emph{units}. The resulting system, called \SMELLY, is then
general enough to simulate universal models of functional programming:
we prove that \PCF{} can be embedded into \SMELLY, both when
call-by-name and call-by-value evaluation are considered.  The latter
is not the case for single-token machines, as we illustrate in
Section~\ref{sect:multexpo}.

This is a   version extended with proofs and more details  of an eponymous paper \cite{lics2015} which appeared in the
proceedings of the Thirteenth Annual Symposium on Logic in Computer Science .

\subsection*{Contributions}
This paper's main contributions can be summarized as follows:
\begin{varitemize}
\item 
  \emph{An Expressive Logical System.} We introduce \SMELLY{} nets,
  whose expressiveness is increased over $\MELL$ nets by several
  constructs: we have \emph{fixpoints} (captured by the $Y$-box), an
  operator for \emph{synchronization} (the sync node), and a
  \emph{primitive conditional} (captured by the $\bot$-box).  The
  presence of fixpoints forces us to consider a restricted notion of
  reduction, namely closed \emph{surface reduction} (\ie, reduction
  never takes place inside a box).  Cuts can \emph{not} be eliminated
  (in general) from \SMELLY{} proofs, as one expects in a system with
  fixpoints.  Reduction, however, is proved to be
  \emph{deadlock-free}, \ie, normal forms cannot contain surface cuts.
\item 
  \emph{A Multitoken Interactive Machine.} \SMELLY{} nets are seen as
  interactive objects through their synchronous interactive abstract
  machine (\SIAM{} in the following). Multiple tokens circulate around
  the net simultaneously, and synchronize at sync nodes.  We prove
  that the \SIAM{} is an \emph{adequate computational model}, in the
  sense that it precisely reflects normalization through machine
  execution.  The other central result about the \SIAM{} is
  \emph{deadlock-freeness}, \ie, if the machine terminates it does so
  in a final state. In other words, the execution does not get stuck,
  which in principle could happen as we have several tokens running in
  parallel and to which we apply guarded operators (\eg,
  synchronization). Our proof comes from the interplay of nets and
  machines: we transfer \emph{termination} from machines to 
  nets, and then transfer \emph{back deadlock-freeness} from nets to
  machines.
\item 
  \emph{A Fresh Look at CBV and CBN.} 
  A slight variation on {\SMELLY} nets, and the corresponding notion
  of interactive machine, is shown to be an adequate model of
  reduction for Plotkin's \PCF~\cite{Plotkin}. This works both for call-by-name and
  call-by-value evaluation and, noticeably, the \emph{same}
  interactive machine is shown to work in \emph{both} cases: what
  drives the adoption of each of the two mechanisms is, simply, the
  translation of terms into proofs. What is surprising here is that
  CBV can be handled by a stateless interactive machine, even without
  the need to go through a CPS translation. This is essentially due to
  the presence of multiple tokens.
\item 
  \emph{New Proof Techniques.} {Deadlock-freeness} is a key issue when
  working with multitoken machines.  A direct scheme to prove it (the
  one used in \cite{lics2014}) would be: (i) prove cut elimination for
  the nets, (ii) prove soundness for the machine, and (iii) deduce
  deadlock-freeness from (i) and (ii).  However, in a setting with
  fixpoints, cut elimination is not available because termination
  simply does not hold\footnote{Even without fixpoints, there is to
    the authors' knowledge no direct combinatorial proof of
    termination for surface reduction.}.  Instead, we develop a new
  technique, which heavily exploit the interplay between net rewriting
  and the multitoken machine.  Namely, we \emph{transfer} termination
  of the machine (including termination as a deadlock) into
  termination of the nets. This combinatorial technique is novel and
  uses multiple tokens in an essential way. It appears to be of
  technical interest in its own.
\end{varitemize}
\subsection*{Related Work}
Almost thirty years after its introduction, the literature on GoI is
vast. Without any aim of being exhaustive, we only mention the works
which are closest in spirit to what we are doing here.

The fact that  GoI can be turned into an implementation scheme for
purely functional (but expressive) $\lambda$-calculi, has been
observed since the beginning of the nineties~\cite{DanosRegnier,Mackie95}. Among
the different ways GoI can be formulated, both (directed) virtual
reduction and bideterministic automata have been shown to be amenable
to such a treatment. In the first case, parallel
implementations~\cite{PediciniQ07,Pinto01} have also been introduced. We
claim that the kind of parallel execution we obtain in this work is
different, being based on the underlying automaton and not on virtual
reduction.

The fact that GoI can simulate call-by-name evaluation is well-known,
and indeed most of earlier results relied on this notion of reduction.
As in games~\cite{AbramskyM97}, call-by-value requires a more
sophisticated machinery to be handled by GoI. This machinery, almost
invariably, relies on effects~\cite{HoshinoMH14,Schopp14}, even when
the underlying language is purely functional. This paper suggests an
alternative route, which consists in making the underlying machine
parallel, nodes staying stateless.

Another line of work is definitely worth mentioning here, namely Ghica
and coauthors' Geometry of Synthesis~\cite{Ghica07,GhicaS10}, in which
GoI suggests a way to compile programs into circuits. The obtained circuit,
however, is bound to be sequential, since the interaction
machinery on which everything is based is particle-style.

On the side of nets, Y-boxes allow us to handle \emph{recursion}. A
similar box was originally introduced by
Montelatici~\cite{Montelatici03}, even though in a polarized
setting. Our Y-box differs from it both in the typing and in the
dynamics; these differences are what make it possible to build a GoI
model.

\section{On Multiple Tokens and the Exponentials}\label{sect:multexpo}
In this section, we will explain through a series of examples
\emph{how} one can build a multitoken machine for a non-linear typed
$\lambda$-calculus, and \emph{why} this is not trivial.

Let us first consider a term computing a simple arithmetical
expression, namely $M=(\lambda x.\lambda y.x+y)(4-2)(1+2)$. This term
evaluates to $5$ and is purely linear, i.e. the variables $x$ and $y$
appear exactly once in the body of the abstraction.  How could one
evaluate this term trying to exploit the inherent parallelism in it?
Since we \emph{a priori} know that the term is linear, we know that
the subexpressions $S=(4-2)$ and $T=(1+2)$ are indeed needed to
compute the result, and thus can be evaluated in parallel. The
subexpression $x+y$ could be treated itself this way, but its
arguments are missing, and should be waited for. What we have just
described is precisely the way the multitoken machine for \SMLL{}
works~\cite{lics2014}, as in Fig.~\ref{fig:potential} (left): each
constant in the underlying proof gives rise to a separate token, which
flows towards the result. Arithmetical operations act as
synchronization points.
\begin{figure}[h]
\begin{center}
\fbox{
\begin{minipage}{.47\textwidth}
\begin{center}\includegraphics[width=8cm]{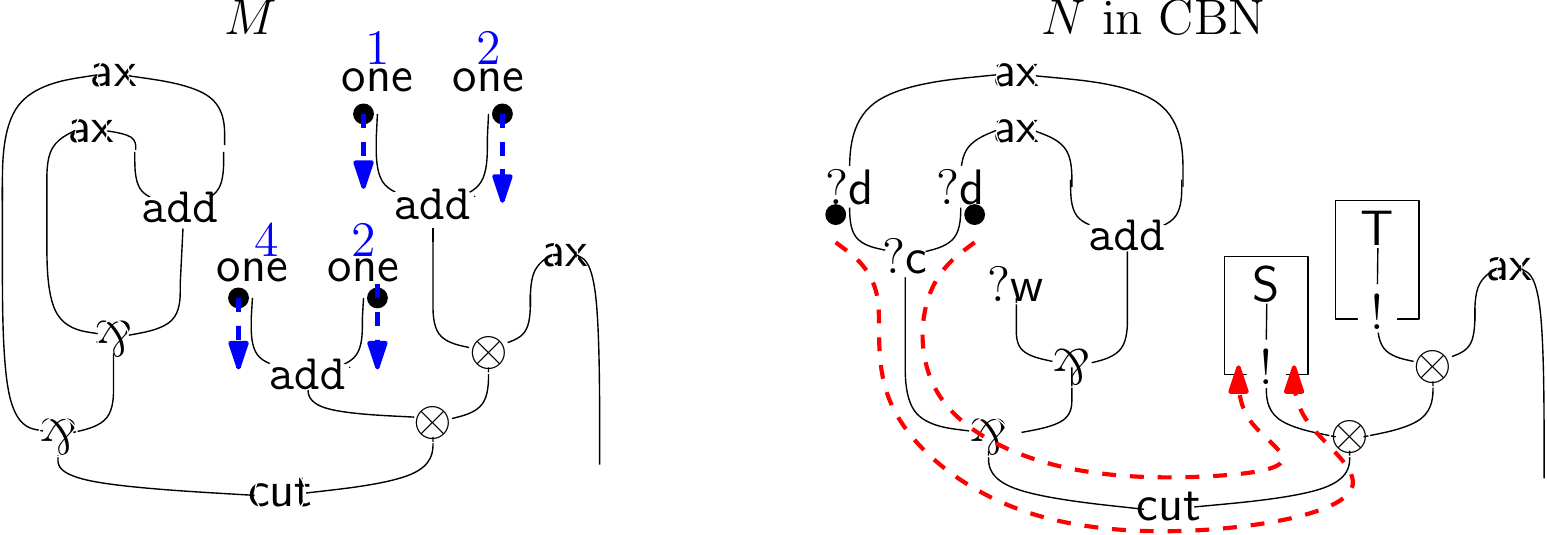}\end{center}
\end{minipage}}
\end{center}
\caption{Actual vs. Potential Parallelism.}\label{fig:potential}
\end{figure}
Now, consider a slight variation on the term $M$ above, namely
$N=(\lambda x.\lambda y.x+x)(4-2)(1+2)$. The term has a different
normal form, namely $4$, and is \emph{not} linear, for two different
reasons: on the one hand, the variable $x$ is used twice, and on the
other, the variable $y$ is not used at all. How should one proceed,
then, if one wants to evaluate the term in parallel? One possibility
consists in evaluating subexpressions \emph{only if} they are really
needed. Since the subexpression $x+x$ is of course needed (it is,
after all, the result!), one can start evaluating it. The value of the
variable $x$, as a consequence, is needed, and the subexpression it
will be substituted for, namely $4-2$, must itself be evaluated.  On
the other hand, $1+2$ should not be evaluated, simply because its value
does not contribute to the final result. This is precisely what
call-by-name evaluation actually do.  The interactive machine
which we define in this paper captures this process.  It has to be
noticed, in particular, that discovering that one of the 
subexpressions is needed, while the other is not, requires some
work. The way we handle all this is strongly related to the structure
of the exponentials in linear logic. We give the CBN translation
of $N$ in Fig.~\ref{fig:potential} (right). The two rightmost
subterms are translated into exponential boxes (where $\mathsf{S}$ is
the net for $4-2$ and $\mathsf{T}$ for $1+2$), which serve as
\emph{boundaries} for parallelism: whatever potential parallelism a
box includes, must be triggered before giving rise to an actual
parallelism. Each of the occurrences of the variable $x$ triggers a
new kind of token, which starts from the dereliction nodes ($?d$) at the 
surface  and whose purpose is precisely to look for the box the
variable will be substituted for. We call these \emph{dereliction
  tokens}.

What happens if we rather want to be consistent with
call-by-\emph{value} evaluation?  In this case, both subterms $(4-2)$
and $(1+2)$ in the term $N$ above should be evaluated. Let us however
consider a more extreme example, in which call-by-name and
call-by-value have different \emph{observable} behaviors, for example
the term $L=(\lambda x.1)\Omega$, where $\Omega=(\lambda x.xx)(\lambda
x.xx)$. The call-by-value evaluation of $L$ gives rise to divergence,
while in call-by-name $L$ evaluates to $1$.  Something extremely
interesting happens here.  We give the call-by-value translation of
$L$ in Fig.~\ref{fig:cbvomega}.
\begin{figure}[h]
\begin{center}
\fbox{
\begin{minipage}{.47\textwidth}
\begin{center}\includegraphics[width=8cm]{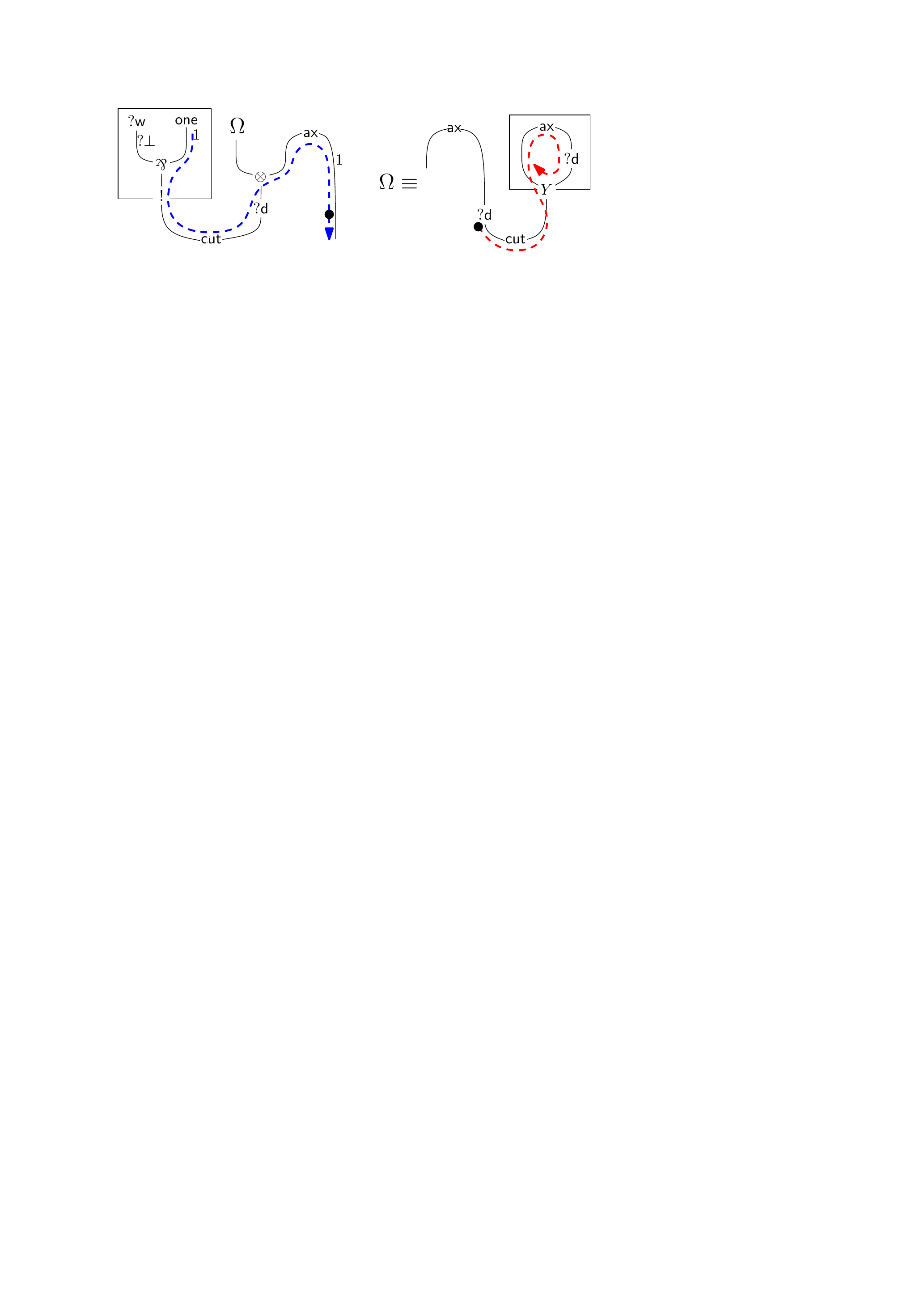}\end{center}
\end{minipage}}
\end{center}
\caption{The CBV-translation of $(\lambda x.1)\Omega$.}\label{fig:cbvomega}
\end{figure}
First of all, we observe that a standard \emph{single-token} machine
would start from the conclusion, find the node $\onelk$, and exit
again: such a machine would simply converge on the term $L$. When
running on the term $\Omega$ alone, the machine would diverge, but as
subterm of $L$, $\Omega$ is never reached, so the machine's behaviour
on $L$ is not the one which we would expect in call-by-value.  Our
multitoken machine, instead, simultaneously launches tokens from all
dereliction nodes at surface: the dereliction token coming out of
$\Omega$ (represented on the right in Fig.~\ref{fig:cbvomega}) reaches
the Y-box, and makes the machine diverge.

We end this section by stressing that the interactive machine we use
is the same, and that this machine correctly models CBN and CBV, solely
depending on the chosen translation of terms into nets.  The
call-by-name translation of $L$ puts the subterm $\Omega$ in a box
which is simply unreachable from the rest of the net (as in the case
of $\mathsf{T}$ in Fig.~\ref{fig:potential}), and our machine
converges as expected. The call-by-value translation of $L$, on the
other hand, does \emph{not} put $\Omega$ inside a box. As a
consequence, there is no barrier to the computation to which $\Omega$
gives rise---the same as if $\Omega$ would be alone---and our machine
correctly diverges. This is the key difficulty in any interactive
treatment of CBV, and we claim that the way we have solved it is
novel.

\section{Nets and a Multitoken Interactive Machine }\label{sec:SMELLY}
We start with an overview of this section, which is divided into four
parts.
\paragraph*{Nets and Their Dynamics} 
\SMELLY\ nets come with \emph{rewriting} rules, which provide an operational
semantics for them, and with a \emph{correctness} criterion, which
ultimately guarantees that nets rewriting is deadlock-free.
\paragraph*{Multitoken Machines}
On any net we define a \emph{multitoken} machine, called \SIAM, which
provides an effective computational model in the style of GoI. A
fundamental property we need to check for the machine is
\emph{deadlock-freeness}, \ie, if the machine terminates it does so in
a final state. From the beginnings of linear logic, the correctness
criterion of nets has been interpreted as deadlock-freeness in
distributed systems \cite{Asperti}; this is also the case for
\SMELL. Here, however, we work with surface reduction, and we have
fixpoints. For these reasons, a rather refined approach is needed.
\paragraph*{The Interplay Between Nets and Machines}
Nets rewriting and the \SIAM{} are tightly related. We establish the
following results. Let $R$ denote a net, $\M_R$ its machine, and
$\rednet$ the net rewriting relation. First of all, we know that (i)
if $R$ is cut-and-sync-free, the machine $\M_R$ terminates in a final
state. On the net hand, we establish that (ii) \emph{net rewriting is
  deadlock-free}: if no reduction is possible from $R$, then $R$ is
cut-and-sync-free.  On the machine side, we establish that (iii) if
$R\rednet S$ then $\M_R$ converges/deadlocks iff the same holds for
$\M_{S}$. We then use the multitoken paradigm to provide a decreasing
parameter for net rewriting, and establish that (iv) if $\M_R$
terminates, then $R$ has no infinite sequence of reductions.
Putting all this together, it follows that \emph{multitoken
  machines are deadlock-free}.
\paragraph*{Computational Semantics}
Finally, by using the machine representation, we associate a
denotational semantics to nets, which we prove to be sound with
respect to net reduction.
\subsection{Nets and Their Dynamics.}
In this section we introduce \SMELLY\ nets, which are a generalization
of \MELL\ proof nets. For a detailed account on proof nets,
we refer the reader to Laurent's notes \cite{LaurentTorino}: our
approach to correctness, as well as the way to deal with weakening, is
very close to the one described there.
\subsubsection{Formulas}
The language of \SMELLY{} \emph{formulas} is identical to the one for
\MELL. 
The language of formulas is therefore:
$$
\formone \bnf \one\midd
\bot\midd X\midd X\b\midd\formone\otimes\formone\midd\formone\parr\formone\midd!\formone\midd ?\formone,
$$ 
where $X$ ranges over a denumerable set of {propositional variables}.
The constants $1,\bot$ are the \emph{units}. \emph{Atomic formulas} are
those formulas which are either propositional variables or units.
Linear negation $(\cdot)\b$ is extended into an involution on all
formulas as usual: $A\b\b\equiv A$, $1\b\equiv \bot $,
$(\formone\otimes\formtwo)\b\equiv \formone\b\parr \formtwo\b $,
$(!\formone)\b \equiv{} ?\formone\b$.  Linear implication is a defined
connective: $\formone\lin\formtwo\equiv\formone\b\parr\formtwo$.

Atoms and connectives of linear logic are usually divided in two
classes: positive and negative. Here however, we define
\emph{positive} (denoted by $P$) and \emph{negative} (denoted by $N$)
those formulas which are built from units in the following way: $\posfone \bnf
\one \midd\posfone\otimes\posfone$, and $\negfone\bnf \bot
\midd\negfone\parr\negfone$. So in particular, there are formulas
which are neither positive nor negative, e.g.\ $\bot\parr\one$.
\subsubsection{Structures}
A \SMELLY{} \emph{structure} is a finite labeled \emph{directed} graph
built over the alphabet of nodes which is represented in
Fig.~\ref{SMELLYnets} (where the \emph{orientation} is the top-bottom
one). All edges have a source, but some edges may have no target; such
dangling edges are called the \emph{conclusions of the structure}. The
edges are labeled with \SMELLY{} formulas; the label of an edge is
called its \emph{type}.  We call those edges which are represented
below (resp. above) a node \emph{conclusions} (resp. \emph{premisses})
of the node.  We will often say that a node ``has a conclusion
(premiss) $\formone$'' as shortcut for ``has a conclusion (premiss) of
type $\formone$''.  When we need more precision, we distinguish
between an edge and its type, and we use variables such as $e,f$ for
the edges.

The nodes $! $, $Y$ and $\bot$ are called \emph{boxes}. One among
their conclusions (the leftmost ones in Fig.~\ref{SMELLYnets}, which
have type $!A$, $!A$ and $\bot$, respectively) is said to be
\emph{principal}, the other ones being \emph{auxiliary}.
$!$-boxes and Y-boxes are \emph{exponential}. An exponential box is
\emph{closed} if it has no auxiliary conclusions. To each box is associated, in an
 inductive way,  a structure which is called the
\emph{content} of the box. To the $!$-box we associate a structure
with conclusions $A,?\Gamma$. To the Y-box corresponds a structure of
conclusions $A, ?A\b, ?\Gamma$. To the {\bbox} is associated a
structure of non-empty conclusions $\Gamma$, together with a new node
$\botlk$ of conclusion $\bot$. We represent a box $b$ and its content
$S$ as in Fig.~\ref{SMELLYboxes}.  With a slight abuse of terminology,
the nodes and edges of $S$ are said to be \emph{inside} $b$.
Similarly, a crossing of any box's border is said to be a \emph{door},
and we often speak of premiss and conclusion \emph{of a} (principal or
auxiliary) \emph{door}. Note that the principal door of the Y-box
(marked by Y) has premisses $A,?A\b$ and conclusion $!A$.

A node \emph{occurs at depth 0} or \emph{at surface} in the structure
$R$ if it is a node of $R$, while it \emph{occurs at depth $n+1$} in
$R$ if it occurs at depth $n$ in a structure associated to a box of
$R$. Please observe that nets being defined inductively, the depth of
nodes is always finite.
 \begin{figure}[htbp]
   \begin{center}
     \fbox{
       \begin{minipage}{.47\textwidth}
         \begin{center}\includegraphics[width=8cm]{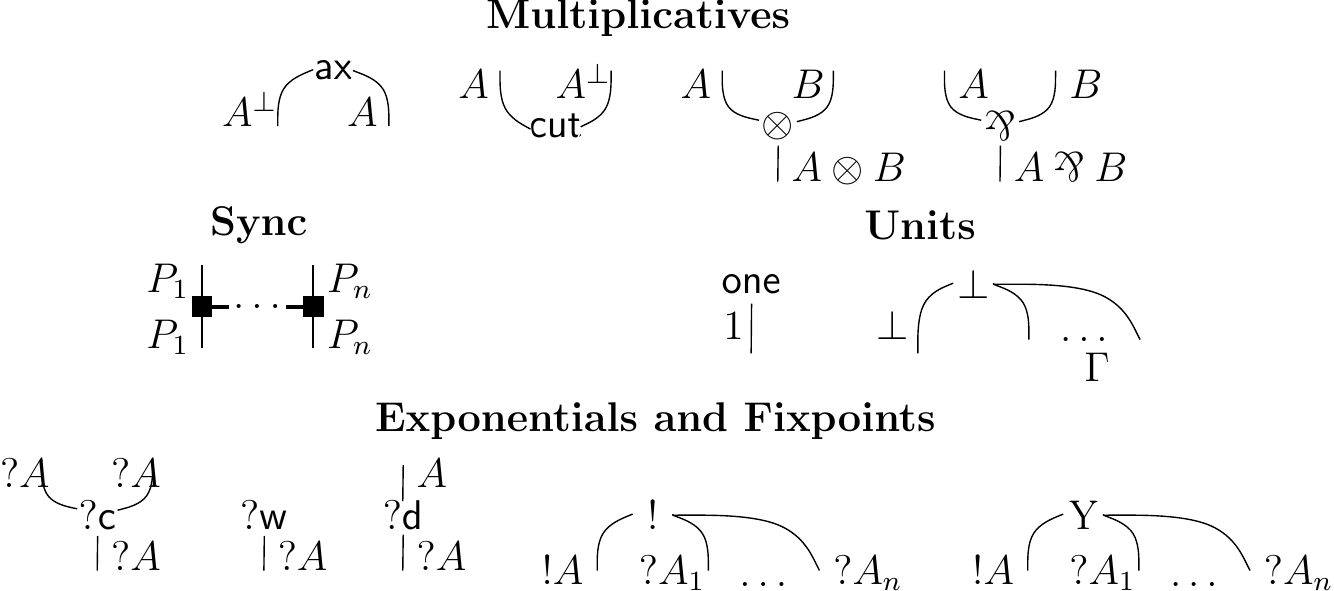}\end{center}
     \end{minipage}}
   \end{center}
   \caption{\SMELLY\ Nodes.}\label{SMELLYnets}
 \end{figure}
 \begin{figure}[htbp]
   \begin{center}
     \fbox{
       \begin{minipage}{.47\textwidth}
         \begin{center}\includegraphics[width=8cm]{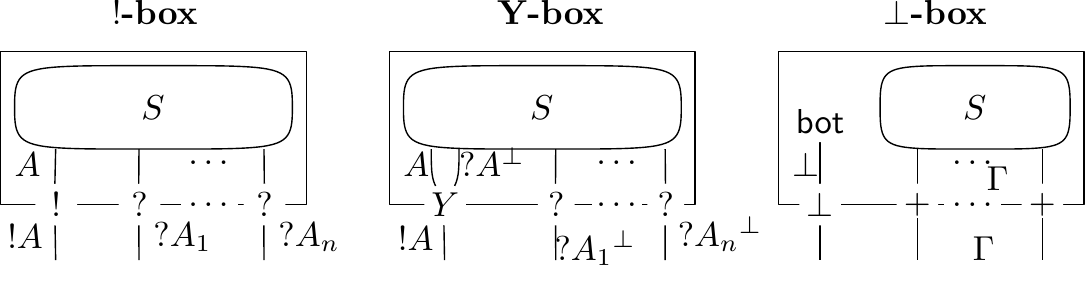}\end{center}
     \end{minipage}}
   \end{center}
   \caption{\SMELLY\ Boxes.}\label{SMELLYboxes}
 \end{figure}
The sort of each node induces constraints on the number and the labels
of its premisses and conclusions, which are shown in
Fig.~\ref{SMELLYnets}. We observe that the $\bot$-box is \emph{the
  same} as in \cite{Girard87} and corresponds to the sequent calculus
rule $\infer[]{ \vdash \bot, \Gamma }{\vdash \Gamma}$. All nodes are
standard except sync nodes and Y-boxes, which need some further
explanation:
\begin{varitemize}
\item 
  Y-boxes model \emph{recursion} (more on this when we introduce the
  reduction rules). Proof-theoretically, the Y-box corresponds to
  adding the following fix-point sequent calculus rule to \MELL: $$
  \infer[Y]{ \vdash !A,?\Gamma }{\vdash A, ?A\b,?\Gamma} $$
\item 
  Sync nodes model {\em synchronization points}. A sync node has $n$
  premisses and $n$ conclusions; for each $i$ ($1\leq i\leq n$) the
  $i$-th premiss and the \emph{corresponding} $i$-th conclusion are
  typed by the \emph{same} formula, which needs to be \emph{positive}.
\end{varitemize}

\vskip 4pt

\noindent\emph{Simple and positive structures.}
Two relevant classes of structures are  simple and positive structures. A formula is \emph{simple} it is is built out of $\{X,X\b, 1, \otimes, \parr\}$. A structure
$R$ is \emph{simple (resp. positive)} if all its conclusions are simple (resp. positive)
formulas. This \emph{does not} mean that all formulas occurring in $R$
are  simple (resp. positive). $R$ can be very complex; the
constraint only deals with $R$'s conclusions.

\subsubsection{Correctness}
A \emph{net} is a structure which fulfills a \emph{correctness
  criterion} defined by means of {switching paths} (see
\cite{LaurentTorino}).  A \emph{switching path} on the structure $R$
is an undirected path\footnote{By path, in this paper we always mean a
  \emph{simple path} (no repetition of either nodes or edges).}  such that 
(i) for each $\parr$-or-$?c$-node, the path uses at most one of the
two \emph{premisses}, and (ii) for each sync node, the path uses at
most one of the \emph{conclusions}.  The former condition is standard,
the latter condition rules out paths which bounce on sync nodes ``from
below'': a path crossing a sync node may traverse one premiss and one
conclusion, or traverse two distinct premisses. A structure is
\emph{correct} if:
\begin{varenumerate}
\item 
  none of its switching paths are cyclic, and
\item  
  the content of each of its boxes is itself correct.
\end{varenumerate}
The reader familiar with linear logic correctness criteria did
probably notice that the only condition we require is acyclicity, and
that connectedness is simply not enforced (as, e.g., in Danos and
Regnier's criterion~\cite{DanosRegnierMult}).  Actually, the only role of
connectedness consists in ruling out the so-called Mix rule from the
sequent calculus. This is not relevant in our development, so we will
ignore it. An advantage of accepting the Mix rule is that we do not
need extra conditions for dealing with weakening. A similar approach
is adopted by Laurent~\cite{LaurentTorino}.  In the following, when we
talk of \MELL{} (resp. \MLL{}), we actually always mean \MELL{} + Mix
(resp. \MLL{} + Mix).
\subsubsection{Net Reduction}
Reduction rules for nets are sketched in Fig.~\ref{SMELLYred}.
Reduction rules can be applied only at surface (\ie, when the redex
occurs at depth $0$), and not in an arbitrary context.  Moreover,
observe that reduction rules involving an \emph{exponential} box can
only be applied when the box is \emph{closed}, \ie, when it has no
auxiliary doors.  We write $\red$ for the rewriting relation induced
by these rules. Some reduction rules deserve some further
explanations:
\begin{varitemize}
\item 
  The $y$-rule unfolds a Y-box, this way modeling recursion.  The
  intuition should be clear when looking at the translation of the
  \PCF\ term $L=\PCFletrec{f}{x}{M}{N}$, which reduces to the explicit
  substitution of $f$ by $\lambda x.\PCFletrec{f}{x}{M}{M}$ in $N$,
  call it $P$. Indeed, the encoding of $L$ reduces to the encoding of
  $P$:
  \begin{center}
    \includegraphics[width=6cm]{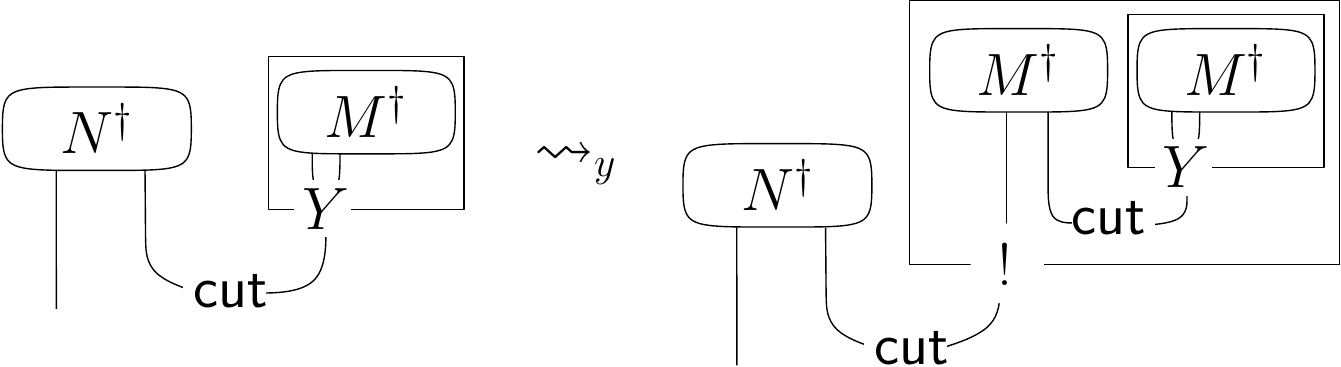}\label{fig:Yexample}
   
  \end{center}
  (where $M^\dagger$ and $N^\dagger$ stand for the encodings of $M$
  and $N$, respectively). When (and only if!) $N$
  recursively calls $f$, the corresponding $d$ node
  ``opens'' the $!$-box for the first iteration of $f$; if
  $f$ further uses a recursive call of itself, the $Y$-box again
  turns into yet another $!$-box and is opened, and so on.
\item 
  The $s.el$-rule erases a sync link whose premisses are \emph{all}
  conclusions of $\onelk$ nodes.
\item 
  The $w$-rule, corresponding to a cut with weakening, \emph{deletes}
  the redex (because the box has no auxiliary conclusions).
\item 
  The $bot.el$-rule opens a $\bot$-box.
\end{varitemize}
\begin{figure}[htbp]
  \begin{center}
    \fbox{
      \begin{minipage}{.47\textwidth}
        \begin{center}
          \includegraphics[width=8cm]{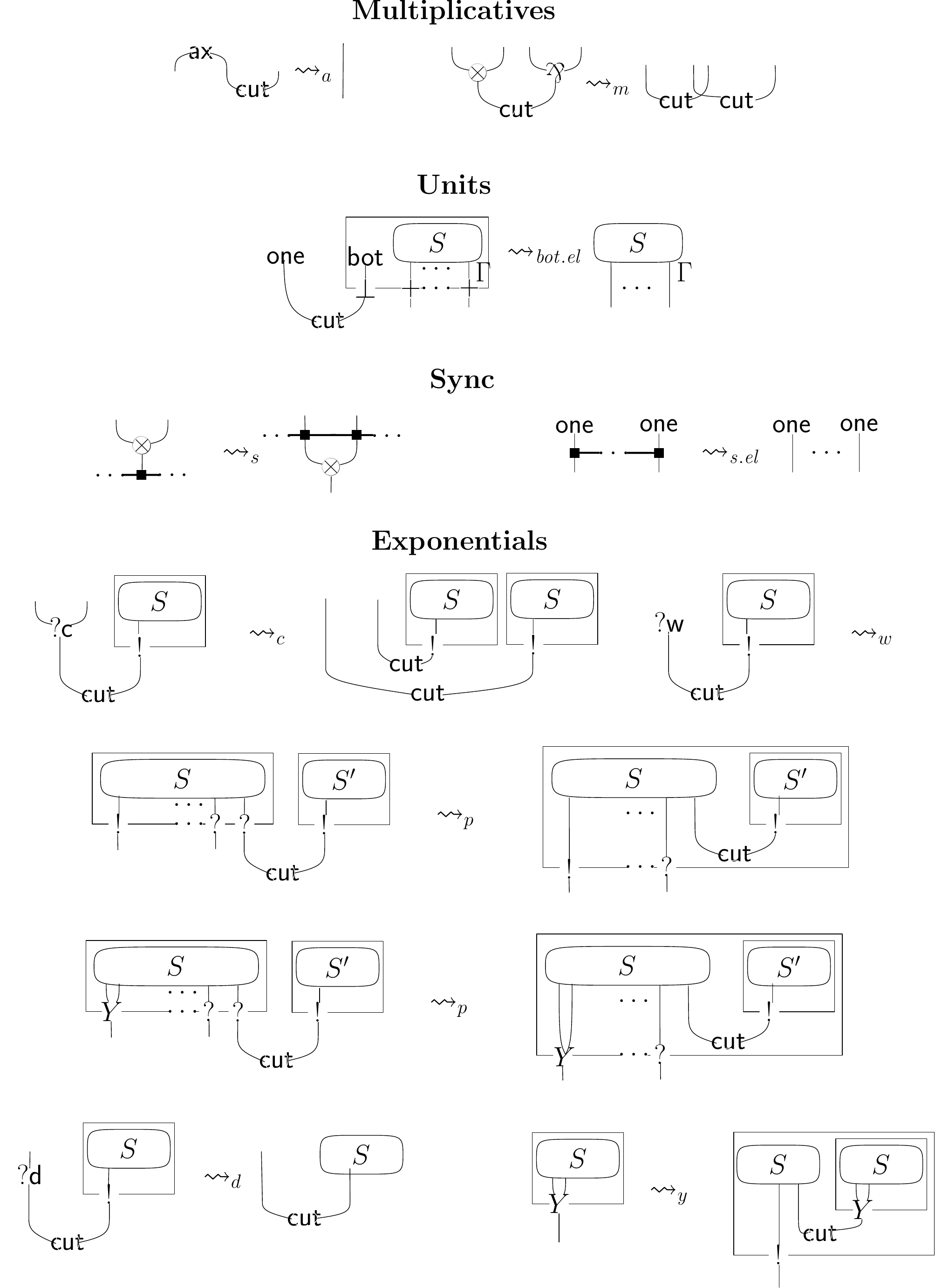}
        \end{center}
    \end{minipage}}
  \end{center}
  \caption{\SMELLY\ Net Rewriting Rules.}\label{SMELLYred}
\end{figure}
It is immediate to check that correctness is preserved by all reduction rules.
\begin{lemma}
  If $R$ is a net and $R\red S$, then $S$ is itself a net.
\end{lemma}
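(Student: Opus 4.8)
Recall that $R$ is a net exactly when (i) no switching path of $R$ is cyclic and (ii) the content of every box of $R$ is again a net. I would prove that both clauses survive the step $R\red S$ by a direct case analysis on the rewriting rule of Fig.~\ref{SMELLYred} that is applied, using the correctness of $R$ where needed and using in an essential way that reduction fires only at surface.

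Clause (ii) is the routine half. Since the redex lies at depth $0$, every box of $S$ falls into one of four groups: a box of $R$ left untouched, with its content unchanged; a verbatim copy of such a box, created when a $?c$-cut duplicates a closed box; a box that used to sit inside a box of $R$ which the step opens, so that its content is unchanged and only its depth decreases by one; or the fresh $!$-box produced by the $y$-rule, whose content is built from the content of the unfolded (closed) $Y$-box --- correct in $R$ by (ii) --- a new $\dlk$ node, and one further copy of that $Y$-box, plugged together along cuts in a way that manifestly keeps switching paths acyclic. In each case the content is a net, so (ii) holds for $S$.

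For clause (i) I would argue contrapositively, turning a cyclic switching path $\phi$ of $S$ into one of $R$. The organising remark is that the exponential rules fire only on \emph{closed} boxes, and a closed box --- having no auxiliary door, hence a single incident edge --- has degree one and so lies on no switching cycle; the same holds for $\botlk$, $\onelk$ and $?w$ nodes, and for a sync node all of whose premisses are conclusions of $\onelk$ nodes, since such a node cannot be entered from above and so, by the sync clause in the definition of switching path, cannot be crossed. Hence for the multiplicative rules ($\axlk$ and $\otimes/\parr$), for the $w$-rule, the $s.el$-rule, the $y$-rule (which merely swaps a degree-one closed $Y$-box for a degree-one closed $!$-box), the $?c$-cut on a closed box, and any further rule of Fig.~\ref{SMELLYred} that only reorganises a bounded region with no effect on surface cycles, the part of the net touched by the step carries no cycle on either side, so $\phi$ restricts to --- or, by a short local surgery at the redex, yields --- a cyclic switching path of $R$, contradicting correctness of $R$. (The $\otimes/\parr$-rule is the only one of these needing a second's thought: if $\phi$ uses both residual cuts, one extracts from the rest of $S$ a path joining the two premisses of the erased $\otimes$ and closes it through that $\otimes$ in $R$.) The dereliction rule opening a closed $!$-box is disposed of in the same vein: the exposed content has a single conclusion, hence is attached to the rest of $S$ by exactly one edge, so $\phi$ cannot meet it.

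The genuinely non-local case --- and the one I expect to be the main obstacle --- is the $bot.el$-rule, which opens a $\bot$-box $b$: not being exponential, $b$ carries auxiliary doors $\Gamma_1,\dots,\Gamma_k$ (with $k\geq 1$ by definition of $\bot$-box) and so may lie on a switching cycle. If $\phi$ avoids the content of $b$ now exposed in $S$, it is already a cyclic switching path of $R$. Otherwise I would cut $\phi$ into its maximal sub-arcs lying inside the exposed content and its sub-arcs lying outside it. Each inside sub-arc is a switching path of the content of $b$, which is a net by (ii), hence not a cycle, so it runs between two \emph{distinct} former doors of $b$; replacing each such sub-arc by a single crossing of the $\bot$-box node of $R$ between the two corresponding conclusions turns $\phi$ into a closed walk of $R$. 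This walk still obeys the switching discipline --- the restrictions at $\parr$, $?c$ and sync nodes are inherited from $\phi$, while the box node imposes none --- and it uses at least one edge outside the content (otherwise $\phi$ would be a cycle inside the content, contradicting (ii)), so it contains a cyclic switching path of $R$, the desired contradiction. The only delicate point is the bookkeeping: checking that every inside sub-arc joins two \emph{distinct} doors, so that it can legitimately be traded for a crossing of the box node, and that the simple cycle finally extracted from the closed walk still satisfies the switching conditions --- both of which follow from correctness of the content of $b$ and from the fact that the switching conditions pass to sub-paths.
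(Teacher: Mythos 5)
Your rule-by-rule verification is correct and is exactly the check the paper leaves implicit: the lemma is stated there with no proof beyond the remark that preservation of correctness is "immediate to check" for each rewriting rule. Your plan supplies that check soundly, rightly isolating the $bot.el$ step (the only rule that opens a box with several doors) as the one place where a cyclic switching path of $S$ must be traded, via its arcs through the exposed content, for a crossing of the $\bot$-box node of $R$.
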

Since the constraints exclude most of the commutations
which are present in \MELL, rewriting enjoys a strong form of
confluence:
\begin{prop}[Confluence and Uniqueness of Normal Forms]\label{net_conf}
  The rewriting relation $\red$ has the following properties:
  \begin{varenumerate}    
  \item 
    it is confluent and normal forms are unique;
  \item 
    any net weakly normalizes iff it strongly normalizes.
\end{varenumerate}
\end{prop}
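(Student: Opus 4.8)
The plan is to reduce both items to a single strong property of $\red$, namely the \emph{diamond property}: if $R \red R_1$ and $R \red R_2$ with $R_1 \neq R_2$, then there is a net $T$ with $R_1 \red T$ and $R_2 \red T$ (single steps each). This is strictly more than confluence, and it is exactly what the surface/closed-box discipline buys us; once it is available, part~(1) is immediate and part~(2) follows from a short induction, with no appeal to termination (which indeed fails here because of the Y-box).

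The core of the argument is thus to establish the diamond property by analysing overlapping redexes. Since reduction is allowed only at surface, two redexes of $R$ are two clusters among its depth-$0$ nodes: a cut together with the two nodes it connects, a sync node together with its conclusions, or a closed box together with the node on its principal door. Each depth-$0$ node lies in at most one such cluster, the only exception being an axiom node whose \emph{two} conclusions are both cut. Hence only two cases arise:
\begin{varitemize}
\item If the two redexes share no node they are disjoint; contracting one leaves the other untouched, and contracting it afterwards from either side yields the same net (the parallel-moves property for disjoint graph redexes). It matters here that, although box duplication and box opening copy/expose material, that material lives at depth $\ge 1$ and so contains no \emph{surface} redex: contracting a redex never erases or duplicates another surface redex, which is what keeps the closing step of length one.
\item If the two redexes overlap, they are two cuts on a shared axiom node, cut against nodes $m_1$ and $m_2$, say; contracting either cut produces, up to the obvious isomorphism, the same net---the one in which $m_1$ and $m_2$ are directly cut---so in fact $R_1 = R_2$ and there is nothing to join.
\end{varitemize}
No further critical pair occurs, and this is precisely where the constraints are used: the exponential rules (including the $y$-rule) apply only to \emph{closed} boxes, so a box-redex has no auxiliary door through which it could share a node with another redex, and for a given closed box at most one rule is applicable; likewise two $s.el$-redexes, or an $s.el$-redex and anything else, cannot share a $\onelk$ node.

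From the diamond property, confluence of $\red$ follows by the usual tiling argument, \emph{without} a termination hypothesis, and uniqueness of normal forms is then immediate; this settles part~(1). For part~(2) it suffices to show that if $R$ admits a reduction of length $n$ to a normal form, then every reduction from $R$ has length at most $n$: ``weakly normalizing'' then forces all reductions to be finite, i.e.\ ``strongly normalizing'', while the converse is trivial. This is proved by induction on $n$. If $R \red R_1 \red^{*} R_0$ with $R_0$ normal and $R \red R'$ is any step, then either $R' = R_1$, and we conclude by the induction hypothesis at $R_1$; or $R' \neq R_1$ and the diamond gives $T$ with $R_1 \red T$ and $R' \red T$. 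The induction hypothesis applied to $R_1$ (which normalizes in $n-1$ steps) bounds every reduction from $R_1$, hence every reduction from $T$, by $n-2$; so $R'$ normalizes in at most $n-1$ steps, and a final use of the induction hypothesis at $R'$ bounds every reduction from $R'$ by $n-1$. Either way every reduction from $R$ has length at most $n$.

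I expect the overlap analysis to be the delicate point: one must be certain that the surface-and-closed-box discipline genuinely rules out \emph{every} bad critical pair---in particular that no reduction step is at once avoidable and able to duplicate or delete another surface redex, which is exactly the phenomenon that defeats the diamond property for unrestricted \MELL{} cut elimination and that would here break the length-one closing and hence the equivalence in part~(2). Everything downstream of the diamond property is routine abstract rewriting.
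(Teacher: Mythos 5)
Your proof is correct and follows essentially the same route as the paper: the paper's argument is precisely that the only critical pairs are the trivial \MLL{} ones (two cuts sharing an axiom), yielding the same net, so that $\red$ enjoys the diamond (uniform confluence) property, from which both items follow. You merely spell out in more detail why the surface/closed-box discipline prevents duplication or erasure of surface redexes and how the diamond yields the equivalence of weak and strong normalization, which the paper leaves implicit as ``direct consequences''.
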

\begin{proof}
The only critical pairs are the trivial ones of \MLL, leading to the
same net. Therefore, reduction enjoys a  diamond property (uniform
confluence): if $R \red S$ and $R\red T$, then either $S=T$ or
there exists $U$ such that $S\red U$ and $T\red U$.
(1) and (2) are direct consequences.
\end{proof}
The strict constraints on rewriting, however, render cut elimination
non-trivial: it is not obvious that a reduction step is available
whenever a cut is present. We need to prove that in presence of a cut,
there is always a valid redex (\ie, it is surface, and any exponential
box acted upon is closed). The main difficulty comes from \bboxes, as
they can hide large parts of the net, and in particular dereliction
nodes which may be necessary to fire a reduction. However, the
following establishes that as long as there are cuts or syncs, it is
always possible to perform a valid reduction.
\begin{thm}[Deadlock-Freeness for Nets]\label{main_lem} 
  Let $R$ be a simple \SMELLY{} net. If $R$ contains cuts or sync nodes,
  then  a reduction applies, \ie\ there exists $S$ such that $R\red S$.
\end{thm}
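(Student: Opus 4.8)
The plan is to argue by contradiction: suppose $R$ is a simple net in normal form for $\red$ that nonetheless contains a surface cut or a surface sync node, and exhibit a redex after all. The decisive leverage is an observation about \emph{simplicity}. Every conclusion of $R$ is built from $\{X, X\b, 1, \otimes, \parr\}$, so any surface edge whose type contains an occurrence of $!$, $?$ or $\bot$ cannot be a conclusion of $R$ and, being directed with its source at surface, must be a premiss of a surface node. Following such an edge downward, the only nodes it can meet are $\otimes$, $\parr$, $?c$ and $\cutlk$ — a $?c$ node takes only $?$-formulas, a $\slk$ node only positive ones — and all of these except $\cutlk$ keep the forbidden occurrence in their output type. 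Since the surface graph is finite and acyclic in the top-to-bottom orientation, this descent must terminate at a $\cutlk$. In particular, \emph{the principal conclusion of every surface box reaches a cut} through a (possibly empty) chain of $\otimes$/$\parr$ nodes; a box at surface can never be left stranded with nothing to interact with, which is the first thing one needs.

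Next I would run a \emph{trace} argument. Start from the given surface cut or, if only a surface sync node $\sigma$ is given, observe that since $R$ is normal the $s.el$-rule cannot apply, so $\sigma$ has a premiss that is not the conclusion of a $\onelk$ node, and start there. The trace follows an edge upward toward its principal subformula; because $R$ is normal, the obvious redexes — $\axlk$-cut, multiplicative cut, $\onelk$ against a $\bot$-box (triggering $bot.el$), a closed exponential box against $\dlk$, $?c$ or $?w$ — are all excluded, which forces the trace to arrive at an \emph{auxiliary} door of some box, where it jumps to the principal door of that same box (a single node in the surface graph) and then descends, by the paragraph above, to a new cut. I claim this trace is well-founded: a non-terminating trace repeats an edge, and, using the descent-to-a-cut property together with the fact that the ascending, descending and door-jumping steps are all admissible for a switching path — here the clause of the switching condition reserved for sync nodes, namely that a path uses at most one of a sync node's conclusions, is exactly what makes the ascents/descents through syncs legal — one extracts from the repetition a cyclic switching path, contradicting correctness. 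Being well-founded, the trace must terminate, and a terminal position is by construction one of the redexes we excluded — contradiction.

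The hard part, as already flagged, is $\bot$-boxes. Unlike exponential boxes, a $\bot$-box carries \emph{arbitrary} formulas on its auxiliary conclusions, so once the trace jumps to a $\bot$-box's principal door and descends — necessarily to a cut against a $1$-typed edge — the $1$-side may again be the auxiliary conclusion of another $\bot$-box (or the conclusion of a sync node), spawning a fresh chain; dually, a dereliction or weakening node needed to make some exponential box \emph{closed} may sit inside a $\bot$-box and become usable only after that box is opened by $bot.el$. The argument must show that such chains of $\bot$-boxes terminate — there are only finitely many boxes — and bottom out at a genuinely reducible $\bot$-box (one whose principal cut faces a $\onelk$), and it must pair auxiliary doors with principal doors carefully enough that the jumps never manufacture a spurious edge repetition that would corrupt the acyclicity accounting. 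Getting this bookkeeping right, rather than any single clever step, is where the real work lies.
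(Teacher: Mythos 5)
Your overall strategy---follow a path through the surface of the net, use correctness (acyclicity of switching paths) to force it to terminate, and argue that wherever it stops a redex must sit---is indeed the strategy of the paper, which implements it via \emph{priority paths} and a strict partial order on the set $\topnodes_R$ of surface sync nodes, boxes and polarized axioms. But your write-up stops exactly where the proof actually lives, and you say so yourself (``getting this bookkeeping right \ldots is where the real work lies''). Two concrete gaps. First, the case analysis of your trace is not exhaustive, and the missing cases are the decisive ones. When the ascent reaches a cut whose one premiss is the principal door $!A$ of a \emph{non-closed} exponential box and whose other premiss is a $?d$, $?c$ or $?w$ conclusion, there is no redex (the rule requires the box to be closed) and you are not at an auxiliary door either; the trace must then leave through an auxiliary door of that box and descend again, a move your ``auxiliary-in/principal-out'' jumps do not include. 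More importantly, nothing in a single forward trace guarantees that the exponential box at which it finally stops is closed, so your final ``a terminal position is one of the excluded redexes'' does not go through. The paper gets closedness globally: priority paths induce a strict partial order on $\topnodes_R$, a \emph{maximal} exponential box has no auxiliary doors (its Lemma on maximal boxes), and a careful choice of the starting node ensures that the principal door of that maximal box is cut against a $?$-node which is not hidden as an auxiliary conclusion of a \bbox, so a closed reduction fires. You flag the \bbox\ chains as the hard part but offer no mechanism (maximality, a measure, anything) to discharge them.

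Second, your well-foundedness step is also not yet an argument. In this paper paths are \emph{simple} by definition, so ``a non-terminating trace repeats an edge, hence yields a cyclic switching path'' requires extracting from the repetition a simple cycle that still respects both the switching conditions (at most one premiss of each $\parr$/$?c$, at most one conclusion of each sync) and non-bouncing at boxes, axioms and cuts; a naive shortcut at the repeated node can violate these. This extraction is precisely the content of the paper's proof that the priority relation is irreflexive and transitive: transitivity is proved by showing, via the characterization of how a priority path may enter and exit each kind of node, that two priority paths meeting at an intermediate node could be rerouted into a priority path from a node to itself, contradicting acyclicity. So the skeleton you propose is the right one, but the two pillars it rests on---termination of the trace as a consequence of correctness, and closedness of the exponential box it ends at---are asserted rather than proved, and they are exactly the non-trivial parts of the theorem.
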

The rather long proof is given in Appendix~\ref{app:SMELLY}. The key element is the
definition of an order on the boxes which occur  at depth $0$ in $R$;
the existence of such an order relies on the correctness criterion. The order
captures the dependency among boxes, \ie, exposes the order in which
cuts are eliminated.

\begin{cor}[Cut Elimination] \label{cutel} 
  Let $R$ be a simple \SMELLY\ net. If $R\red^* S$ and $S$ cannot be
  further reduced, then $S$ is a cut free \MLL{} net\footnote{Precisely, \MLL{} + Mix, as we have already pointed
      out.}, \ie, it  only containing $\axlk$,
  $\onelk$, $\otimes$, $\parr$ nodes. 
\end{cor}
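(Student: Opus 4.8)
The plan is to read the corollary off Theorem~\ref{main_lem}, once we know that $\red$ keeps us inside the class of \emph{simple} nets, and then to run a short syntactic argument on the node alphabet of a cut-free, sync-free simple net.

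First I would check that the hypotheses of Theorem~\ref{main_lem} survive reduction. By the Lemma, $R\red^* S$ yields a net $S$; and a direct inspection of the rules of Fig.~\ref{SMELLYred} shows that no rewriting step creates, deletes or relabels a conclusion of the ambient net (exponential boxes acted upon are closed, and the $\bot$-box rule only relocates its auxiliary edges $\Gamma$ without retyping them), so $R$ and $S$ have the very same conclusions and $S$ is again simple. As $S$ is moreover a normal form, the contrapositive of Theorem~\ref{main_lem} applied to $S$ gives at once that $S$ contains neither cuts nor sync nodes.

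It then remains to show that a cut-free, sync-free simple net $S$ is built solely from $\axlk$, $\onelk$, $\otimes$, $\parr$ nodes. The crux is to prove that $S$ has no \emph{bad} edge, where an edge is bad if the formula it carries contains, anywhere, an occurrence of $?$, $!$ or $\bot$. Suppose $e_0$ were a bad edge. By simplicity no conclusion of $S$ is bad, so $e_0$ has a target node. Orient every edge from its premiss-side to its conclusion-side: the resulting graph is acyclic, because a directed cycle traverses each node it meets from a single premiss to a single conclusion and hence uses at most one premiss of any $\parr$- or $?c$-node and at most one conclusion of any sync node, \ie\ it would be a cyclic switching path, contradicting correctness. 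Since $\cut$ --- the only sort with no outgoing edge --- is absent, the maximal directed path issuing from $e_0$ is finite and terminates at a conclusion of $S$; but every edge along it stays bad, because moving from a bad premiss $e$ of a node $w \notin \{\cut,\text{sync}\}$ to a conclusion $e'$ of $w$, one checks sort by sort ($?d$, $?c$, $\otimes$, $\parr$, and any box) that $e'$ still carries an occurrence of $?$, $!$ or $\bot$: these connectives only wrap the type of $e$ into a larger one, and a box always carries $?$ or $\bot$ on its principal conclusion. This bad net-conclusion contradicts simplicity, so $S$ has no bad edge. In particular $S$ has no $\botlk$, $?d$, $?c$ or $?w$ node (each carries $?$ or $\bot$ on a conclusion), no $!$-box or $Y$-box (principal conclusion $!A$), no $\bot$-box (principal conclusion $\bot$), and each axiom link is labelled by $?/!/\bot$-free formulas; hence $S$ is a cut-free \MLL{} ($+$ Mix) net.

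The only step carrying real content is the last one, and its subtle point is the choice of the propagated invariant: the occurrence of $?$, $!$ or $\bot$ that makes $e_0$ bad may start as the outermost connective of $e_0$ and sink to an arbitrarily deep position along the path, so one must carry ``the type contains such an occurrence somewhere'' and not ``the type has a fixed outer shape''. Termination of the downward walk (via directed acyclicity) and invariance of the net's conclusions under $\red$ are then routine.
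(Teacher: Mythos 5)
Your proposal is correct and follows exactly the route the paper leaves implicit: reduction preserves correctness and the (simple) conclusions, the contrapositive of Theorem~\ref{main_lem} then excludes cuts and sync nodes from the normal form $S$, and a typing/descent argument shows such a net can only contain $\axlk$, $\onelk$, $\otimes$, $\parr$ nodes (the same observation underlying Lemma~\ref{cuts_lem} in the appendix). Only a cosmetic slip: the principal conclusion of a $!$- or $Y$-box carries $!$, not ``$?$ or $\bot$'', but since your notion of bad edge includes $!$ the invariant you propagate is unaffected.
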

\paragraph*{Discussion on simple structures}
The hypothesis which we make in Theorem \ref{main_lem}  that a structure is simple 
is an assumption which in this section we use 
\emph{to simplify auxiliary lemmas}; it will not appear in our main
result, namely Theorem~\ref{SIAM deadlock free}.
\subsection{\SIAM}\label{SIAM}
All along this section, $\netone$ indicates a \SMELLY\ structure
(with no other  hypothesis, unless otherwise stated).
\subsubsection{Preliminary Notions}
Some auxiliary definitions are needed before we can introduce our
interactive machines.  \emph{Exponential signatures} are defined by
the following grammar
$$
  \sigma \bnf {*} \midd l(\sigma) \midd r(\sigma)
  \midd \encode{\sigma}{\sigma} \midd y(\sigma,\sigma),
$$
while \emph{stacks} are defined as follows
$$
  \stk \bnf \emp \midd l.\stk \midd r.\stk \midd \sigma.\stk \midd \delta,
$$
where $\epsilon$ is the empty stack and $.$ denotes concatenation
(and, thus, $s.\epsilon=s$). Given a formula $\typeone$, a stack $\stk$
\emph{indicates an occurrence $\atomone$ of an atom} (resp. \emph{an
  occurrence $ \mu$ of a modality}) in $\typeone$ if $\stk[A]=
\atomone$ (resp. $\stk[A]= \mu$), where $\stk[A]$ is defined as
follows:
\begin{varitemize}
\item 
  $\emp [\atomone] = \atomone$,
\item 
  $\sigma.\delta[\mu B] = {\mu}$,
\item 
  $\sigma.t[\mu \typetwo] = t[\typetwo] $ whenever $t\neq\delta$,
\item 
  $l.t[\typetwo \Box \typethree]= t[\typetwo] $ and $r.t[\typetwo
  \Box \typethree]=t[\typethree]$, where $\Box$ is either $\otimes$
  or $\parr$.
\end{varitemize}
We observe that a stack can indicate a modality only if its head is
$\delta$. 
\begin{example}\label{ex:indic}
  Given the formula $A =\bang(\bot\otimes{!\one})$, the stack $*.\delta$
  indicates the first occurrence of $\bang$, $ *.r.*.\delta[A]$ gives the
  second occurrence of $!$, and $*.\delta, *.l[A]=\bot$.
\end{example}
The set of $R$'s \emph{positions} $\POSALL_R$ contains all the triples
in the form $(\edg, \stk, \bstk)$, where:
\begin{varenumerate}
\item  
  $\edg$ is an edge of $\netone$, 
\item    
  the \emph{formula stack} $\stk$ is either $\delta$ or a stack which
  indicates an occurrence of atom or modality in the type $A$ of
  $\edg$,
\item 
  the \emph{box stack} $\bstk$ is a stack of $n$ exponential
  signatures, where $n$ is the number of exponential boxes inside
  which $\edg$ appears.
\end{varenumerate} 
We use the metavariables $\ss$ and $\pp$ to indicate positions. For
each position $\pp=(\edg,\stk,\bstk)$, we define its \emph{direction}
$\dr (\pp)$ as \emph{ upwards} ($\up$) if $\stk$ indicates an
occurrence of $!$ or of negative atom, as \emph{downwards} ($\down$)
if $\stk$ indicates an occurrence of $?$ or of positive atom, as
\emph{stable} ($\stable$) if $\stk= \delta$ or if the edge $\edg$ is
the conclusion of a $\botlk$ node. A position $\pp=(\edg, \stk, \emp)$
is \emph{initial} (resp. \emph{final}) if $e$ is a conclusion of
$\netone$, and $dir(\pp)$ is $\up$ (resp. $\down$). For simplicity, on
initial (final) positions, we require all exponential signatures in
$\stk$ to be $*$. So for example, if $!(\bot\otimes{!\one})$ is a
conclusion of $R$, there is one final position ($s=*.r.*$), and three
initial positions (the three stacks given in Example~\ref{ex:indic}). The
following subsets of $\POSALL_R$ play a crucial role in the definition
of the machine:
\begin{varitemize}
\item 
  the set $\POSI_R$ of all \emph{initial positions};
\item 
  the set $\POSF_R$ of all \emph{final positions};
\item 
  the set $\ONES_{\netone}$ of positions $(\edg, \emp, \bstk)$ where
  $\edg$ is the conclusion of a $\onelk$ node;
\item 
  the set $\DEREL_{\netone}$ of positions $(\edg, *.\delta, \bstk)$
  where $\edg$ is the conclusion of a $\dlk$ node;
\item 
  the starting positions $\START_{\netone}= \POSI_R \cup
  \ONES_{\netone}\cup \DEREL_{\netone} $;
%
\item   
  the set $\PDOORS_R$ of the positions $\pp$ for which
  $\dr(\pp)=\stable$.
\end{varitemize}
The multitoken machine $\machine{\netone}$ for $\netone$
consists of a set of \emph{states} and a \emph{transition} relation
between them. These are the topics of the following two subsections.
\subsubsection{States}
A state of $\machine{\netone}$ is a snapshot description of the tokens
circulating in $\netone$. We also need to keep track of the positions
where the tokens started, so that the machine only uses each starting
position once. Formally, a \emph{state} $\st = ( \scod, \sdom)$ is a
set of positions $\scod \subseteq \POSALL_R$ together with a set of
positions $\sdom \subseteq \START_R$.  Intuitively, $\scod $ describes
the current position of the tokens, and $\sdom$ keeps track of which
starting positions have been used\footnote{In Section~\ref{tracing} we
  show that $\sdom$ is actually redundant; we have however decided to
  give it explicitly, because it makes the definition of the machine
  simpler.}. A state is \emph{initial} if $ \scod=\sdom=\POSI_R$. We
indicate the (unique) initial state of $\M_R$ by $\stI_R$. A state
$\st$ is \emph{final} if all positions in $\scod$ belong to either
$\POSF_R$ or $\PDOORS_R$. The set of all states will be denoted by
$\sts_{\netone}$. Given a state $\st$ of $\M_R$, we say that
\emph{there is a token in $\pp$} if $\pp\in \scod$.  We use
expressions such as ``a token moves'', ``crosses a node'', in the
intuitive way.
\subsubsection{Transitions}
The transition rules of $\machine{\netone}$ are given by the
transitions described in Fig.~\ref{fig:trRules} (where $\Box$ stands
for either $\otimes$ or $\parr$). The rules marked by (i)--(iii) 
make the machine concurrent, but the constraints they need to satisfy are rather
technical and for this reason we prefer to postpone the related
discussion.
\paragraph*{Transition Rules, Graphically}
The position $\pp = (\edg, \stk, \bstk)$ is represented graphically by
marking the edge $e$ with a bullet $\bullet$, and writing the stacks
$(\stk, \bstk)$.  A transition $\st \redsiam \sttwo$ is given by
depicting only the positions in which $\st$ and $\sttwo$ differ.  It
is of course intended that all positions of $\st$ which do not
explicitly appear in the picture also belong to $\sttwo$.  \condinc{}{
  \todo{ To represent a transition $\st \redsiam \sttwo$, we depict
    $\pos(\ss)$ in the left-hand-side, and $\postwo(\ss)$ on the
    right-hand-side of the arrow, for each $\ss\in\sdom$ such that
    $\postwo(\ss)\not=\pos(\ss)$.  It is of course intended that
    $\postwo(\ss)=\pos(\ss)$ for all $\ss$ whose value is not
    explicitly appearing in the picture.  }}  To save space, in
Fig.~\ref{fig:trRules} we annotate the transition arrows with a
\emph{direction}; we mean that the rule applies (only) to positions
which have that direction.  We sometimes explicitly indicate the
direction of a position by directly annotating it with
${}^{\down},{}^{\up}$ or ${}^{\stable}$. Notice that no transition is
defined for stable positions.  We observe that tokens \emph{changes
  direction} only in one of two cases: either when they move from an
edge of type $A$ to an edge of type $A\b$ (\ie, when crossing a
$\axlk$ or a $\cutlk$ node), or when they cross a $Y$-node, in the
case where the transitions are marked by (*): moving down from the
edge $A$ and then up to $?A\b$, or vice versa.  Whenever a token is on
the conclusion of a box, it can move into that box (graphically, the
token ``crosses'' the border of the box) and it is modified as if it
were crossing a node. For exponential boxes, in Fig.~\ref{fig:trRules}
we depict only the border of the box. The transitions for the
multiplicative nodes $\axlk$, $\cutlk$, $\otimes$, $\parr$ are the
standard ones.  The rules for \emph{exponential nodes} are mostly
standard. There are however two novelties: the introduction of
``dereliction tokens", \ie, tokens which start their path on the
conclusion of a $?d$ node, and the $Y$ box. We discuss both below.
\paragraph*{Some Further Comments}
Certain peculiarities of our interactive machines need to be further
discussed:
\begin{varitemize}
\item 
  \emph{Y-boxes}.  The recursive behaviour of Y-boxes is captured by
  the exponential signature in the form $y(\cdot,\cdot)$, which
  intuitively keeps track of how many times the token has entered a
  Y-box so far. Let us examine the transitions via the $Y$ door.  Each
  transition from $!A$ (conclusion of $Y$) or from $?A\b$ (premiss of
  $Y$) to the edge $A$ (premiss of $Y$) corresponds to a recursive
  call. The transition from $A$ to $?A\b$ captures the return from a
  recursive call; when all calls are unfolded, the token exits the
  box.  The auxiliary doors of a $Y$-box have the same behaviour as
  those of $!$-boxes.
\item 
  \emph{Dereliction Tokens}.  As we have explained in
  section~\ref{sect:multexpo}, this is a key feature of our machine.
  A dereliction token is generated (according to conditions (i) below)
  on the conclusion of a $\dlk$ node, as depicted in
  Fig.~\ref{fig:trRules}. Intuitively, each dereliction token
  corresponds to a copy of a box.
\item 
  \emph{Box Copies and stable tokens.} A token in a stable position is said to be
  \emph{stable}. 
   Each such token is the remains of a token which
  started its journey from $\DER$ or $\ONES$, and flowed in the graph
  ``looking for a box''. 
   This stable token that was once roaming the net
  therefore witnesses the fact that \emph{an instance} of dereliction
  or of $\onelk$ ``has found its box''.  Stable tokens play an
  essential role, as they keep track of box copies. We are going to
  formalize this immediately below.
  
  It is immediate to check that a stable token
    can only be located inside a box, more precisely on the premiss of
    its principal door.
  In  Fig.~\ref{SIAM_stable} we indicate   explicitly  all the exponential  transitions which lead to a stable  position; the other transition leading to a stable position is the one on \bbox. 
      \begin{figure}[htbp]
        \centering
        \fbox{
        \includegraphics[width=7cm]{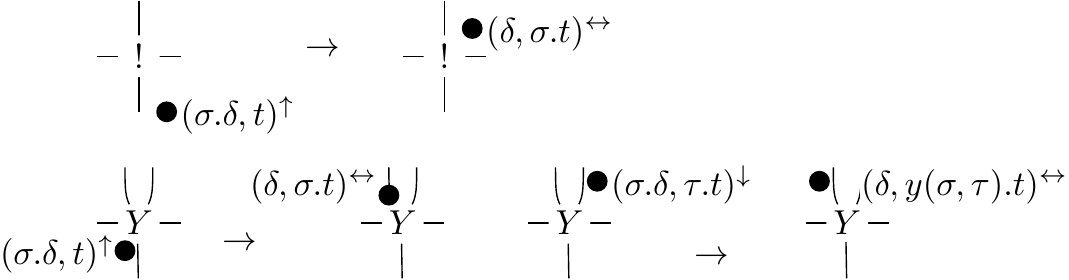}}
        \caption{Exponential transitions to a stable position }\label{SIAM_stable}
      \end{figure}
  
\end{varitemize}
\begin{figure}[htbp]
  \begin{center}
    \fbox{
    \begin{minipage}{.47\textwidth}
      \begin{center}
        \includegraphics[width=8cm]{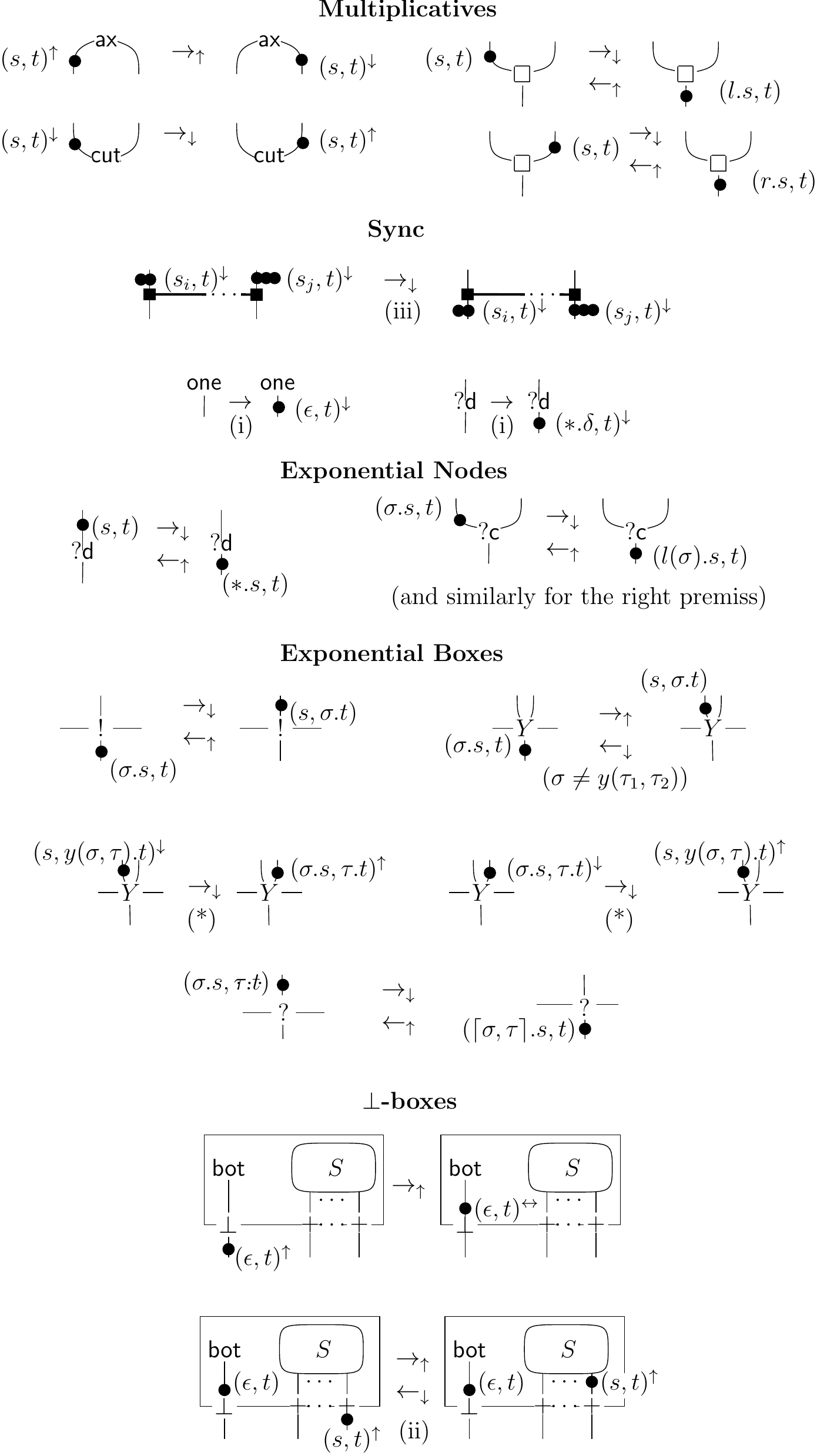}
      \end{center}
    \end{minipage}}
  \end{center}
  \caption{\SIAM{} Transition Rules.} \label{fig:trRules}
\end{figure}

\paragraph*{Multitoken Rules} 
The rules (i)--(iii) from Fig.~\ref{fig:trRules} are where the  multitoken
nature of the \SIAM{} really comes into play.  Those rules are subject to certain conditions,
which are intimately related to box copies. Given a state $\st$ of
$\M_R$, we define $\id(S)$ to be $\{\emp\}$ if $R=S$ (we are at depth 0).
Otherwise, if $S$ is the structure associated to a box node $\BB$ of $R$, we
define $\id(S)$ as the set of all $t$ such that $t$ is the box stack
of a stable token at the principal door of $\BB$.  Intuitively, as we
discussed above, the box stack of each such a token \emph{identifies a
  copy of the box} which contains $S$.
Rules marked as (i)--(iii) only apply if certain conditions are
satisfied:
\begin{varitemize}
\item[(i)] 
  The position $(\edg,\emp,\bstk)$ (resp. $(\edg,\delta,\bstk)$) does
  not already belong to $\sdom$, and $\bstk\in \id(S)$, where $S$ is
  the structure to which $e$ belongs.  If both conditions are
  satisfied, $\scod$ and $\sdom$ are extended with the position $\pp$.
  This is the only transition changing $\sdom$. Intuitively, each
  $\bstk$ corresponds to a copy of the (box containing the) $\onelk$ (resp. $?d$) node.
\item[(ii)] 
  The token moves inside the \bbox\ only if its box stack $\bstk$
  belongs to $\id(S)$, where $S$ is the content of the \bbox.
  (Notice that if the \bbox\ is inside
  an exponential box, there could be several stable tokens at its
  principal door, one for each copy of the box.)
\item[(iii)] 
  Tokens cross a sync node $l$ only if for a certain $\bstk$, there is
  a token on each position $(e,\stk,\bstk)$ where $e$ is a premiss of
  $l$, and $\stk$ indicates an occurrence of atom in the type of
  $e$. In this case, all tokens cross the link simultaneously.
  Intuitively, insisting on having the same stack $\bstk$ means that
  the tokens all belong to the same box copy. The simultaneous
  transition of the tokens has to be related to the $s.el$-rule, which
  takes place only when \emph{all} premisses are conclusions of $\onelk$
  nodes. Note that the tokens traverse a sync link only downwards,
  because all edges are positive.
\end{varitemize}

A \emph{run} of the \SIAM{} machine of $R$ is a \emph{maximal}
sequence of transitions $\stI_R \redsiam  \cdots \redsiam
\st_n \redsiam \cdots $ from an initial state $\stI_R$.\\
  
\subsubsection{Basic Properties}
In this and next section, we study some properties of the \SIAM.  We
write $\st \stopsiam$ if no reduction applies from $\st$. A non final
state $\st \stopsiam$ is called a \emph{deadlock} state.  If $\stI_R
\redsiam \st_1 \redsiam ...\redsiam \st_n \stopsiam $ is a run of
$\M_R$ we say that the run \emph{terminates} (in the state $\st_n$).
A run of $\M_R$ \emph{diverges} if it is infinite, \emph{converges}
(resp. \emph{deadlocks}) if it terminates in a final (resp. non final)
state.

\noindent
  \begin{prop}[Confluence and Uniqueness of Normal Forms]\label{lem:diamProp}\label{machine_conf} 
    The relation $\redsiam$ enjoys the following properties:
    \begin{varitemize}
    \item 
      it is confluent and normal forms are unique;
    \item  
      if a run of the machine $\M_R$ terminates, then all runs of $\M_R$ terminate.
    \end{varitemize}
  \end{prop}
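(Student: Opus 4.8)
The plan is to reduce both items to a single \emph{diamond property} for the transition relation, in the same spirit as the proof of Proposition~\ref{net_conf} for net rewriting. Concretely, I will show that $\redsiam$ is \emph{uniformly confluent}: whenever $\st\redsiam\sttwo$ and $\st\redsiam\sttwo'$ with $\sttwo\ne\sttwo'$, there is a state $\st'$ with $\sttwo\redsiam\st'$ and $\sttwo'\redsiam\st'$. From uniform confluence, confluence and uniqueness of normal forms follow by the usual tiling argument, and the second bullet follows from the standard fact about abstract rewriting that under a diamond property all maximal rewrite sequences from a given state have the same length in $\NN\cup\{\infty\}$; in particular, if one run of $\M_R$ terminates, every run of $\M_R$ does.

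For the diamond property I would first collect, by direct inspection of the rules in Fig.~\ref{fig:trRules}, a handful of invariants. \textbf{(a)} The relation is \emph{deterministic on each single token}: a token in a non-stable position admits at most one transition that moves it, so the only source of nondeterminism is the choice of \emph{which} movable token, or \emph{which} enabled rule~(i) creation, fires next. \textbf{(b)} A stable token never moves and is never removed (no machine transition deletes a token), hence the set of stable tokens only grows along a run; consequently each set $\id(S)$ is monotone non-decreasing, and so is $\sdom$ (changed only, and monotonically, by rule~(i)). \textbf{(c)} No transition ever places a token on a \emph{starting} position: landing on a $\DEREL$ position would require crossing a $\dlk$ node from a token whose formula stack is $\delta$, i.e.\ a stable token, which cannot move by (b); and $\POSI$ and $\ONES$ positions lie on conclusions of $R$ and on conclusions of $\onelk$ nodes, which have no premiss, so no transition produces them either. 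Since the initial state has $\scod=\sdom=\POSI_R$, invariant (c) gives that any starting position occurring in $\scod$ also occurs in $\sdom$; in particular a rule~(i) step never overwrites an existing token, and — since no starting position is a stable position at a principal door — rule~(i) steps never change any $\id(S)$. \textbf{(d)} Every edge is the premiss of at most one node; hence the tokens feeding a given sync node $l$ (for a fixed box stack) feed only $l$, and the only transition that can move such a token is the rule~(iii) crossing of $l$ itself.

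The diamond property is then a short case analysis on the transitions $\tau_1,\tau_2$ realizing $\st\redsiam\sttwo$ and $\st\redsiam\sttwo'$. If both are single-token moves (the $\bot$-box rule~(ii) included), then by (a) they move distinct tokens, hence touch disjoint positions, and the only side condition in play — $\bstk\in\id(S)$ for rule~(ii) — is preserved by the other step thanks to (b); so each transition remains enabled after the other and firing it yields the common reduct $\st'$. If one of them, say $\tau_1$, is a rule~(i) creation at a starting position $\pp$, then $\tau_2$ leaves $\pp\notin\sdom$, leaves $\id(S)$ unchanged at $\pp$'s box stack (or only enlarges it), and by (c) does not move any token onto $\pp$, while $\tau_1$ by (b)–(c) neither affects any $\id(S)$ nor occupies a position $\tau_2$ needs (note that if $\tau_2$ consumes a token already sitting on $\pp$, then $\pp\in\scod$, hence $\pp\in\sdom$, contradicting that $\tau_1$ is enabled); hence the two commute. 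Finally, if $\tau_1$ (say) is a rule~(iii) sync crossing at a node $l$ with box stack $\bstk$, then by (d) the tokens it consumes are untouchable by any distinct transition — a second sync crossing at $l$ must use another box stack, hence disjoint tokens — and the side conditions involved are monotone by (b); so $\tau_1$ and $\tau_2$ commute. In every case we obtain the required $\st'$ in one step from each side.

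I expect the only real difficulty to be the bookkeeping around the multitoken rules~(i)–(iii): one has to be certain that firing a transition can never \emph{disable} a transition that was already enabled, and this is exactly what invariants (b) (monotonicity of $\id(S)$ and $\sdom$), (c) (tokens never land on starting positions, so occupied starting positions are already in $\sdom$), and (d) (a sync node ``owns'' its feeding tokens) are there to guarantee. The multiplicative and exponential single-token fragment is routine GoI bookkeeping and, modulo the presence of several tokens, parallels the corresponding step for nets; once these invariants are in place the joins are all one-step, giving uniform confluence and hence both claims.
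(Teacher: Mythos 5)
Your proposal is correct and follows essentially the same route as the paper: the paper's proof also establishes a diamond (uniform confluence) property for $\redsiam$ by inspecting pairs of transition rules and observing that they do not interfere, and then derives both bullets by the standard abstract-rewriting consequences. Your invariants (monotonicity of stable tokens, $\id(S)$ and $\sdom$, no landing on starting positions, sync nodes owning their feeding tokens) simply make explicit the non-interference that the paper asserts by direct case inspection.
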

  \begin{proof}
    By checking each pair of transition rules we observe that  
    $\redsiam$ has the diamond property, because the transitions 
    do not interfere with each other.
  \end{proof}
  
\condinc{}{\todo By Lemma \ref{lem:diamProp}, all runs of $\M_R$ have the same behaviour. We can therefore say that 
  $\M_R$ \emph{diverges} if    $\M_R$ has an infinite run,  \emph{converges} if its runs converge,   \emph{deadlocks} if its runs do.
By using Lemma \ref{trsf_prop} we can prove that

  \begin{prop}[\todo{What is a good name for this??}]\label{basic_soundness}
    Assume  $\netone \red \nettwo$. $\M_R$ diverges, converges or deadlocks if and only if  $\M_{S}$   diverges, converges or deadlocks.\\

  \end{prop}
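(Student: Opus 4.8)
By Proposition~\ref{machine_conf}, $\redsiam$ is uniformly confluent, so for a fixed net all maximal runs from the initial state share the same \emph{fate}: either they all diverge, or they all terminate in the (unique) final normal state, or they all terminate in the (unique) non-final one. Hence ``$\M_R$ diverges/converges/deadlocks'' is well defined, and to prove the statement it suffices to exhibit, for $\netone\red\nettwo$, one run of $\M_R$ and one run of $\M_{S}$ with the same fate, together with a way back.

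The plan is to build, for each net rewriting rule, a \emph{transfer map} $\trsf$ between the states of $\M_R$ and those of $\M_{S}$ --- this is the content of Lemma~\ref{trsf_prop} --- by rerouting token positions across the redex. Since the redex lies at surface, a token whose edge is outside the redex is carried over verbatim (the corresponding edge of $\nettwo$ being canonically determined), while a token near the redex has its formula stack rewritten according to how edges are merged or split. The multiplicative rules ($\axlk/\cutlk$, $\otimes/\parr$, $\one/\bot$) and the $bot.el$-rule --- where a token on a $\botlk$ conclusion is stable, hence inert, and stays in place --- give a routine case analysis. The delicate cases are the exponential rules, where box copies are created, merged, split, or deleted: the $y$-rule replaces a $Y$-box by a $!$-box together with a fresh $Y$-box, so the box stacks of the tokens inside must be reindexed using the signature shape $y(\cdot,\cdot)$ (and $\encode{\cdot}{\cdot}$, $l(\cdot)$, $r(\cdot)$ for the dereliction and contraction steps), while the stable tokens at principal doors --- which by definition determine the copy-identity sets $\id(\cdot)$ --- must be transported coherently, so that $\id(\cdot)$ matches on both sides; for the $w$-rule the erased box is closed and no token can ever enter it, so nothing essential is lost.

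Granting Lemma~\ref{trsf_prop}, one checks that $\trsf$ (i) sends $\stI_R$ to $\stI_{S}$, up to finitely many administrative steps, and is a bijection between the reachable states of $\M_R$ and those of $\M_{S}$; (ii) turns each transition of $\M_R$ into a transition of $\M_{S}$ and conversely, a few rule-specific cases being matched by a bounded number of steps; (iii) preserves being a final state and --- being a bijection on reachable states --- preserves being a deadlock state, the crucial point being that $\trsf$ respects the enabling conditions (i)--(iii) of Fig.~\ref{fig:trRules}, so that creating a new $\ONES$- or $\DEREL$-token, entering a $\bot$-box, or firing a sync node is available in $\M_R$ exactly when the transferred move is available in $\M_{S}$. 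It then follows that $\trsf$ sends a run of $\M_R$ to a run of $\M_{S}$ with the same fate, and $\trsf^{-1}$ gives the converse; Proposition~\ref{machine_conf} lifts the equivalence from single runs to the machines. The main obstacle is precisely the exponential bookkeeping --- well-definedness and bijectivity of $\trsf$ at the level of box stacks and the faithful transport of $\id(\cdot)$ --- since this is exactly what makes the ``same fate'' argument go through for deadlocks and not merely for divergence versus convergence; everything else is mechanical.
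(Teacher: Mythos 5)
Your overall outline (define a transfer map $\trsf$ along the rewrite step, push runs of $\M_R$ to runs of $\M_{S}$, and use Proposition~\ref{machine_conf} to speak of \emph{the} fate of a machine) is the paper's route, but the execution hinges on a claim that is false for this construction: $\trsf$ is \emph{not} a bijection between reachable states, and there is no $\trsf^{-1}$. The map of the paper (and any natural rerouting map) is a \emph{partial}, token-deleting, non-injective function: in the $d$ and $bot.el$ steps a stable token and the dereliction token are simply erased, and for several rules a genuine machine transition $\st \redsiam \sttwo$ \emph{collapses}, i.e.\ $\trsf(\st)=\trsf(\sttwo)$ (a token crossing the axiom/cut of an $a$-redex moves in $\M_R$ while nothing moves in $\M_{S}$). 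So distinct reachable states of $\M_R$ are identified in $\M_{S}$, step counts along corresponding runs differ, and a bijection commuting with transitions cannot exist. Two consequences of this non-injectivity are exactly where the real work lies, and your plan leaves both unaddressed. First, deadlock preservation cannot be obtained ``because $\trsf$ is a bijection''; one must prove directly that $\st \stopsiam$ implies $\trsf(\st)\stopsiam$, i.e.\ that deleting tokens and re-indexing box stacks never \emph{enables} a move in the reduct (the paper does this by case analysis plus the parametricity of transitions in the box stack). Second, and more seriously, divergence preservation is not the routine ``each transition is matched by a bounded number of steps'': because of collapsing transitions an infinite run of $\M_R$ could a priori be crushed to a finite run of $\M_{S}$; the paper rules this out by showing that only boundedly many \emph{consecutive} transitions can collapse (finitely many tokens sit on the redex, each admitting at most a couple of collapsing moves), so an infinite run contains infinitely many non-collapsing transitions.

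Note also that the ``way back'' you ask for is unnecessary, which is precisely why the missing inverse does not doom the statement itself: since (by Proposition~\ref{machine_conf}) each machine diverges, converges, or deadlocks, and these alternatives are mutually exclusive and exhaustive, forward preservation of each of the three alternatives along $\trsf$ already yields the biconditional. That is how the paper concludes from Lemma~\ref{trsf_prop} alone, with no bijection and no $\trsf^{-1}$. To repair your proposal, drop the bijectivity claim, add the direct $\stopsiam$-preservation argument, and add the bound on consecutive collapsing transitions for the divergence case.
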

}

\subsubsection{Tracing Back}\label{tracing} 
For each position $\pp$ in $R$, we observe (by examining the cases in
Fig.~\ref{fig:trRules}) that there is at most one position from which
$\pp$ can come via a transition. When disregarding the conditions we
impose on rules labelled as (i)--(iii), the transitions also apply to
a single token, in isolation. By reading the transitions
``backwards'', we can therefore define a partial function $\orig :
\POSALL_R \partto \START_R$, where $\orig(\pp) := \ss$ if $\pp$ traces
back to $\ss$. But there is more:
\begin{lemma}\label{lemma:traceback}
  For any state $\st$ such that $\stI_R \redsiam^* \st$, the
  restriction of $\orig $ to $\scod$ is a total, injective function.
\end{lemma}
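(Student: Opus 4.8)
The plan is to argue by induction on the length of the run $\stI_R \redsiam^* \st$, proving the slightly stronger invariant: for every reachable state $\st=(\scod,\sdom)$, the partial function $\orig$ is \emph{defined and injective on $\scod$}, and moreover $\orig(\scod)\subseteq\sdom$. This last clause is exactly the bookkeeping role played by $\sdom$, and it is what lets the spawning rule go through. Two ingredients, both already noted just before the lemma, are used repeatedly. First, every position has at most one predecessor under the single-token reading of the transitions; hence $\orig$ is well defined, and whenever $\pp\redsiam\qq$ is (the single-token instance of) a transition, $\orig(\qq)=\orig(\pp)$ provided the latter is defined. Second, a position of $\START_R$ (in particular any initial, $\ONES$ or $\DER$ position) that has no predecessor satisfies $\orig(\pp)=\pp$.

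For the base case $\st=\stI_R$ we have $\scod=\sdom=\POSI_R\subseteq\START_R$. No transition can place a token on a conclusion of $R$ pointing upwards (a conclusion edge has no node below it, and tokens only change direction at $\axlk$, $\cutlk$ or $Y$, always at a different edge), so each initial position has no predecessor and $\orig$ restricted to $\POSI_R$ is the identity --- total, injective, with image $\POSI_R=\sdom$. For the inductive step, assume the invariant for $\st$ and let $\st\redsiam\sttwo$; the transitions split into three kinds. \emph{Single-token moves} (all multiplicative and ordinary exponential rules, the box-entry rules, the $Y$-door rules, and rule~(ii)): a token at some $\pp\in\scod$ moves to a distinct position $\qq$. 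By the induction hypothesis $\orig(\pp)$ is defined and lies in $\sdom$, and by the first ingredient $\orig(\qq)=\orig(\pp)$; were $\qq$ already in $\scod$, then $\pp$ and $\qq$ would be distinct positions of $\scod$ with equal origin, contradicting injectivity, so $\scodtwo=(\scod\setminus\{\pp\})\cup\{\qq\}$ and $\orig$ restricted to $\scodtwo$ is again total, injective, with image contained in $\orig(\scod)\subseteq\sdom=\sdomtwo$.

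\emph{Creation, rule~(i)}: a fresh token is added at a position $\pp\in\START_R$ (a $\ONES$ or a $\DER$ position), with side condition $\pp\notin\sdom$, and $\scodtwo=\scod\cup\{\pp\}$, $\sdomtwo=\sdom\cup\{\pp\}$. Here $\orig(\pp)=\pp$, so $\orig$ stays total on $\scodtwo$; for injectivity we need $\pp\neq\orig(\ss)$ for every $\ss\in\scod$, which is exactly $\pp\notin\orig(\scod)$, and this holds because $\orig(\scod)\subseteq\sdom$ and $\pp\notin\sdom$. Finally $\orig(\scodtwo)=\orig(\scod)\cup\{\pp\}\subseteq\sdomtwo$. \emph{Synchronisation, rule~(iii)}: finitely many tokens sitting on the positions $(e_i,\stk_j,\bstk)$ --- $e_i$ the premisses of a sync node, $\stk_j$ the atom occurrences in their types, $\bstk$ a common box stack --- move simultaneously to the matching positions on the conclusions. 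Each target traces back through its own source, so on this finite set $\orig$ acts as a bijection onto the set of the corresponding origins; since the sources are pairwise distinct, the induction hypothesis makes these origins pairwise distinct as well, no target can already lie in $\scod$, $\sdom$ is unchanged, and the invariant is preserved exactly as in the single-token case.

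The lemma is then the first two clauses of the invariant. The one genuinely delicate point is the creation rule: the reason the machine carries $\sdom$ --- equivalently, the reason we strengthen the statement with $\orig(\scod)\subseteq\sdom$ --- is precisely to forbid a newly spawned $\ONES$ or $\DER$ token from reusing the origin of a token that is still roaming the net; without this, injectivity would fail at rule~(i). The other cases are routine once one records the ``no two tokens occupy the same position'' consequence of injectivity, which is what rules out token merging along moves and syncs.
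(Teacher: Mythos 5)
Your proof is correct and matches the route the paper intends: the paper in fact states Lemma~\ref{lemma:traceback} without an explicit proof, relying on the preceding observation that every position has at most one backward predecessor, and your induction on the length of the run is the natural formalization of that remark. The strengthened invariant $\orig(\scod)\subseteq\sdom$ is exactly the right bookkeeping, and you correctly identify the only delicate case, namely the side condition of rule~(i), which is precisely the role $\sdom$ plays in the machine.
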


%

Therefore, for every position $\pp$  which appears in a run of $\M_R$,
$\orig(\pp)$ is defined.

With this in mind, $\START_R$ can be seen as an index set identifying
each token. For most of this section (until Theorem~\ref{soundness})
we are only interested in the ``wave'' of tokens, and do not need to
distinguish them individually. In Section~\ref{sec:beyond}, 
however, we heavily rely on $\orig$ to associate values and
operations to tokens.

\begin{rem}
Tracing back from $\scod$ allows us to reconstruct $\sdom$ from the set of current positions. We have preferred to carry along $\sdom$ in the definition of state to make it more immediate,
since the definition of $\orig$ is rather technical. Similarly, 
 in order not to trace back all the way each time we need the starting position, one can also make the choice to  carry the function  along with the state.  We made a similar choice  in our  
 previous work \cite{lics2014}, where a state was defined as a function $ \sdom \to \POSALL_R$
The two definitions  are of course equivalent for all states which can be reached from the initial state, thanks to Lemma~\ref{lemma:traceback}.
\end{rem}

\subsubsection{State Transformation}\label{def:Transformation}
Our central tool to relate net rewriting and the \SIAM{} is a mapping
of states to states. More precisely, if $\netone \red \nettwo$, we
define a \emph{transformation} as a partial function $\trsf_{\netone
  \red \nettwo}: \POSALL_{\netone} \partto \POSALL_{\nettwo}$, which
extends to a transformation on states $\trsf_{\netone \red \nettwo}:
\sts_{\netone} \partto \sts_{\nettwo}$ in the obvious way,
point-wise. We will omit the subscript $\netone \red \nettwo$ of
$\trsf_{\netone \red \nettwo}$ whenever it is obvious.  

Assume $\netone \red_{a} \nettwo$ (axiom step), and $\pp=(d, s,
\epsilon) \in \POSALL_{\netone}$. If $d\in \{e, f, g\}$ as shown in
Fig.~\ref{fig:trsf}(a), then $\trsf_{\netone \red \nettwo}(\pp):=(h,
s, \epsilon) \in \POSALL_{\nettwo} $.  For the other edges, $
\trsf_{\netone \red \nettwo}(\pp):= \pp$.  This definition can
rigorously be described as in Fig.~\ref{fig:trsf}(b), where the
mapping is shown by the dashed arrows. We give some other cases of
reductions in Fig.~\ref{fig:trsfOthers}. $\trsf$ acts as the identity
on all positions $\pp$ relative to those edges which are not modified
by the reduction rule, \ie, $\trsf (\pp) = \pp$. The cross symbols
$\pmb{\times}$ serves to indicate that the source position has no
corresponding target in $\nettwo$ (remember that the mapping is
partial).  Intuitively, the token on that position is \emph{deleted}
by the mapping.  It is important to observe that in the case of steps
$bot.el$ and $d$ (the only rules which open a box), a stable token is
always deleted.

\begin{fact}\label{fact:trsf}
If $R\red R'$ via a  $bot.el$ or $d$ step, the action of $\trsf$ always delete a stable token.
\end{fact}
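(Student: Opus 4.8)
The proof is a direct inspection of the two rewriting rules that remove a box border --- namely $d$ and $bot.el$ --- together with the clauses defining $\trsf$ for them, which I read off from Fig.~\ref{fig:trsfOthers}. For each of the two rules, and for an arbitrary state $\st$ of $\M_R$, the plan is to determine which tokens of $\st$ are deleted by $\trsf_{R\red R'}$ (i.e.\ sit on a position carrying the cross symbol $\pmb{\times}$) and to check that any such token is stable. Concretely I will exhibit, for each rule, the single position that $\trsf$ leaves without an image, and verify it is a stable position; since a token on a stable position is stable by definition, and a token is deleted precisely when it occupies such a position, the fact follows.

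Consider first a $d$ step: the redex is a cut between a $\dlk$ node, with premiss $e$ of type $A$ and conclusion $f$ of type $?A$, and a \emph{closed} $!$-box, whose principal conclusion has type $!A\b$ and whose content has unique conclusion $g$ of type $A\b$. Opening the box brings its content to the surface and replaces the $\dlk$ node and the principal $!$-door by a cut node between $e$ and $g$. Since the box is closed it has no auxiliary door, so the only destroyed edges are $f$ and the principal conclusion of the box; going through the clauses of $\trsf$ one checks that every position on $f$ (resp.\ on that conclusion) carrying an atom of $A$ (resp.\ of $A\b$) is re-routed onto $e$ (resp.\ onto $g$), as are the dereliction positions, which land on $g$. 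The only position of $R$ left with no image is the one on $g$ whose formula stack is $\delta$ and whose box stack witnesses the (unique) copy of the box --- i.e.\ the position occupied by the stable token that has ``found this box''. It is stable exactly because its formula stack is $\delta$.

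For a $bot.el$ step the redex is a surface $\bot$-box; the rule erases its border, so that its content --- a structure with conclusions $\Gamma$ together with the internal $\botlk$ node --- ends up at the surface, the conclusion of $\botlk$ now playing the role of the former principal conclusion of the box. The auxiliary doors are transparent crossings, hence the positions on the $\Gamma$-conclusions are mapped identically; the former principal conclusion is identified with the conclusion of the internal $\botlk$ node, and $\trsf$ sends each position on the former (which has direction $\up$) to the corresponding position on the latter (which has direction $\stable$). The unique position left with no image is then the one that was already on the conclusion of the internal $\botlk$ node, and it is stable because its edge is the conclusion of a $\botlk$ node.

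The delicate point --- the one I would check carefully --- is the bookkeeping of the stacks: one must verify that exiting the opened box removes exactly one exponential signature from the box stack, so that the box stack of the deleted stable token is the expected one (an element of $\id(S)$, with $S$ the content of the opened $!$-box, in the $d$ case; and $\epsilon$ in the $bot.el$ case, since $\bot$-boxes do not contribute to box stacks), and that no position on a surviving edge --- an auxiliary door, an edge inside the box, or a conclusion of $R$ untouched by the rule --- is accidentally dropped. Once this routine verification is done, we conclude that in each of the two rules $\trsf$ deletes exactly one position, and that position is stable.
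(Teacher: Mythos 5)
Your argument follows the same route the paper intends: Fact~\ref{fact:trsf} is stated as an immediate observation on the definition of $\trsf$ in Fig.~\ref{fig:trsfOthers}, obtained by inspecting the two box-opening rules and noting that a stable position is among those marked $\pmb{\times}$, namely the $\delta$-position at the premiss of the principal door of the opened $!$-box (case $d$) and the position on the conclusion of the internal $\botlk$ node (case $bot.el$). You identify both of these correctly, so the Fact itself is established. However, your detailed reading of $\trsf$ does not match the paper's, and the ``only/unique position left with no image'' claims are inaccurate. In the $d$ case the dereliction token of the reduced $\dlk$ node is erased \emph{wherever it happens to be} (the paper says it ``disappears in $S$''), so the in-transit positions with formula stack $*.\delta$ on the conclusion of the $\dlk$ node and on the principal conclusion of the box are deleted as well; re-routing them onto a $\delta$-position on $g$, as you suggest, would put a stable token at the surface of $R'$ and would break Lemma~\ref{trsf_prop}, since the later entry of that token into the box in $R$ would have no counterpart in $R'$. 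Similarly, in the $bot.el$ case the $\onelk$ token on the $\one$ premiss of the cut and the position on the principal $\bot$ conclusion are deleted together with the stable token at the internal $\botlk$ node (in the paper's case analysis for Lemma~\ref{lem:trsfProps}, the transitions crossing that cut or entering the box collapse under $\trsf$), because none of those edges survives the reduction. These slips do not affect the Fact, which only needs the stable position to be among the deleted ones, but the uniqueness and re-routing claims should be dropped.
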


\begin{figure}[htbp]
\begin{center}
  \fbox{
    \begin{minipage}{.47\textwidth}
      \begin{center}
      \includegraphics[scale=.6]{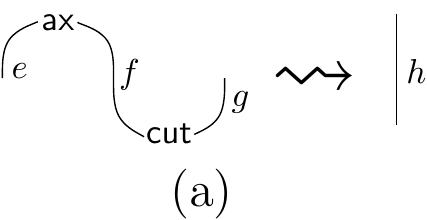}
      \qquad\qquad
      \includegraphics[scale=.6]{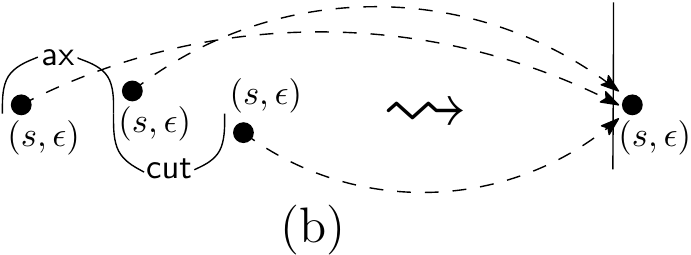}\\
      \end{center}
  \end{minipage}}
\end{center}
  \caption{$\trsf_{R\red S}$, Formally and as a Drawing.}\label{fig:trsf}
\end{figure}
The cases of $d$ and $y$ deserve some further discussion:
\begin{varitemize}
\item   
  In the $d$ rule, the token generated on the $?d$ node is deleted,
  and disappears in $S$. For the other tokens, those outside the !-box
  are modified by removing the signature $*$  
  (which was acquired while crossing that $?d$ node) from the formula stack. 
  The tokens $(e,\stk,\bstk)$ inside the !-box are modified by removing the
  signature $*$ from the \emph{bottom} of the box stack $\bstk$, which
  is coherent with the invariant on the size of $\bstk$ (its size is
  its exponential depth). Why from the bottom of the stack? Because
  the box $b$ which disappears is at depth 0 in $R$, therefore for
  each position $(e,\stk,\bstk)$ inside the box, the signature
  corresponding to $b$ is at the bottom of $\bstk$.
\item 
  In the $y$ rule, things are slightly more complicated. What happens
  to the tokens lying inside a Y-box depends on the bottom element of
  their box stack, which is the signature corresponding to the Y-box.
  If the signature at the bottom of the stack is not of the form
  $y(\cdot,\cdot)$, the token has entered the Y-box only once (\ie, it
  belongs to the first recursive call) and hence the token is mapped
  onto a token in the copy of $S$ outside the Y-box.  Otherwise, the
  token is mapped onto a token in the Y-box; it loses one $y(\cdot,\cdot)$
  symbol (\ie, it does one iteration less), but the box stack becomes
  longer (which is coherent with the increase in depth).  We show an
  example in Fig.~\ref{fig:Ytrsf_example}.  The (stable) token with a
  stack $(\delta, *)$ on the premise of $Y$-node is mapped onto a token
  on the premise of the $\bang$ node, with the same stack.  In
  contrast, the token with a stack $(\delta, y(*,y(*,*)))$ is mapped
  onto a token on a premise of the $Y$ node on the right-hand side,
  now with a stack $(\delta, y(*,*).*)$ --- it loses a $y$ symbol. 
\end{varitemize}

\begin{figure*}[htbp]
  \begin{center}
    \fbox{
      \begin{minipage}{.97\textwidth}
        \begin{center}
        \includegraphics[width=15cm]{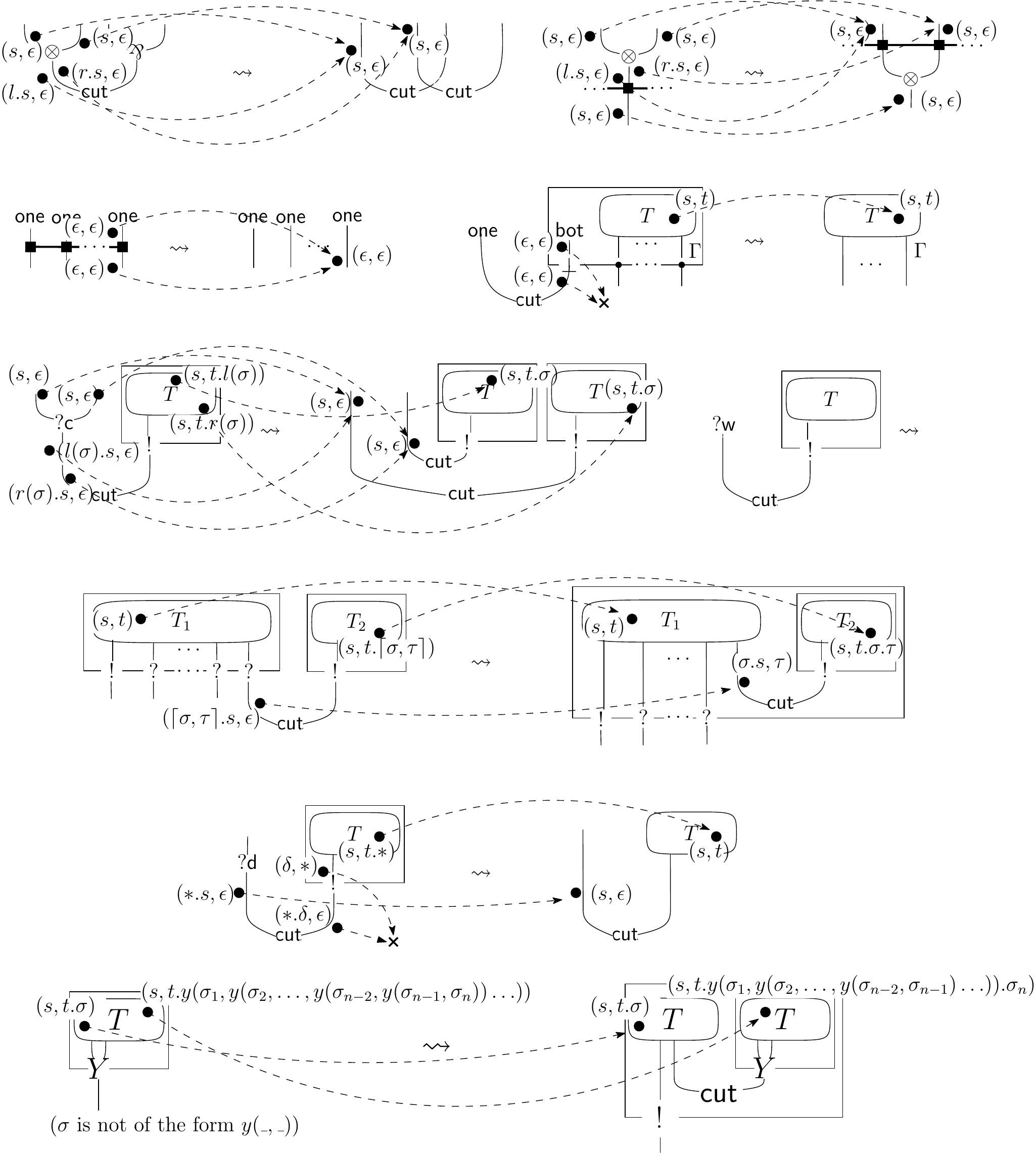}
        \end{center}
    \end{minipage}}
  \end{center}
  \caption{The Function $\trsf_{R\red S}$.}\label{fig:trsfOthers}
\end{figure*}
\begin{figure}[htbp]
  \begin{center}
    \fbox{
      \begin{minipage}{.47\textwidth}
      \begin{center}
        \includegraphics[width=8cm]{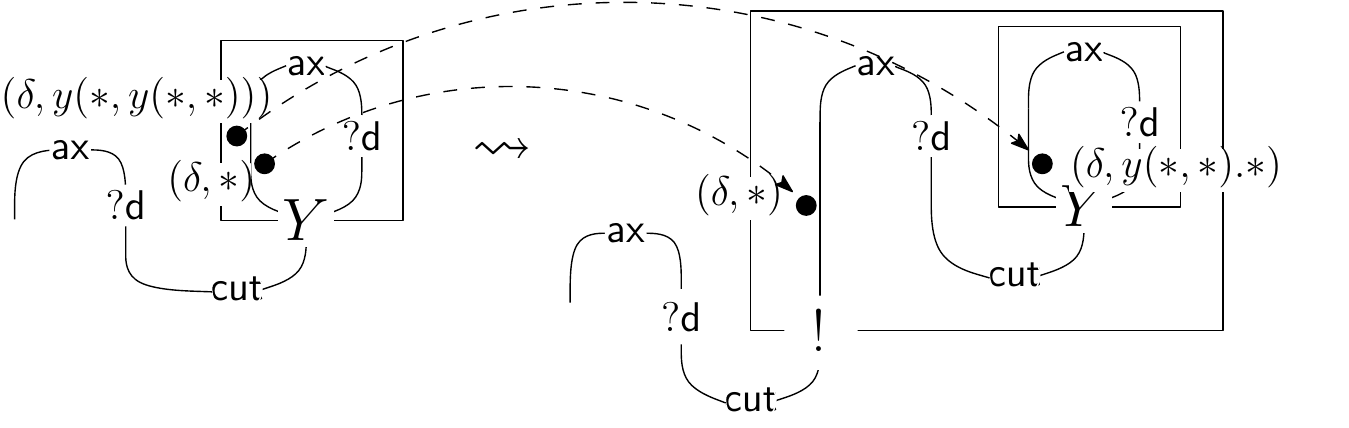}
      \end{center}
      \end{minipage}}
\end{center}
  \caption{$\trsf_{R\red S}$ on $y$-reduction.}\label{fig:Ytrsf_example}
\end{figure}

Each statement below can be proved by case analysis. The proof is given in Appendix \ref{app:SIAM}.
\begin{lemma}[Properties of $\trsf$]\label{trsf_prop}
  Assume $\netone \red \nettwo$.
  \begin{varenumerate}
  \item 
    If $\st \redsiam \sttwo$ in $\machine{\netone}$ then $\trsf(\st)
    \redsiam^* \trsf(\sttwo)$ in $\machine{\nettwo}$.
  \item 
    If $I_R \to \cdots \to \st_n\cdots $ is a run of $\M_R$, then
    $\trsf(I_R) \to^* \cdots \to^* \trsf(\st_n) \cdots$ is a run of
    the machine $\M_{S}$.
  \item\label{trsf_main} 
    $I_R \to \cdots \to \st_n \cdots $ diverges/converges/deadlocks
    iff $\trsf(I_R) \to^* \cdots \to^* \trsf(\st_n) \cdots$ does.
  \end{varenumerate}
\end{lemma}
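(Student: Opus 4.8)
The plan is to prove all three items by a simultaneous case analysis on the net reduction rule realizing $\netone\red\nettwo$ and on the \SIAM{} transition fired, with items (2) and (3) reduced to (1) once the behaviour of $\trsf$ on the initial state and on blocked/final states is pinned down. The guiding observation is that $\trsf$ is the identity on every position relative to an edge untouched by the redex, and these positions make up the bulk of every state; the interesting action is confined to the handful of edges the redex creates or destroys.

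\emph{Item (1).} A transition $\st\redsiam\sttwo$ of $\M_\netone$ that does not act on the redex (it neither moves a token across it nor reads it through conditions (i)--(iii)) is mapped verbatim to the same transition of $\M_\nettwo$; here one still has to check that the side conditions (i)--(iii) survive, which follows because $\trsf$ identifies the sets $\id(\cdot)$ of all boxes it does not destroy (up to the relabelling it performs on box stacks), while for the box destroyed by a $d$- or $y$-step the witnessing stable token is deleted (Fact~\ref{fact:trsf}) and the box itself no longer occurs in $\nettwo$, so its condition is vacuous. The finitely many transitions that do touch the redex are then checked one by one against Figures~\ref{fig:trsf}, \ref{fig:trsfOthers} and~\ref{fig:Ytrsf_example}: for the $a$- and $m$-steps $\trsf$ is essentially a bijection of positions and the transition is reproduced step for step; for the $w$-step nothing is near the redex (no token ever reaches a box cut with weakening, since it is closed and its principal door can only be approached from below); for the $s.el$- and $bot.el$-steps the affected positions are either identified with surviving ones (and the transition is reproduced) or deleted (and the transition is \emph{absorbed}, i.e.\ mapped to zero steps); the delicate cases are $d$ and $y$, where opening an exponential box shifts the bottom signature of every inner box stack, and in the $y$-case duplicates the content of the $Y$-box, so that a single move across the old $Y$-border becomes a path through the freshly installed copy of its content, i.e.\ a single \SIAM{} step of $\M_\netone$ may correspond to zero, one or more steps of $\M_\nettwo$. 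The invariant that keeps the bookkeeping coherent is that $\trsf$ is backward closed along transitions: if $\pp\redsiam\qq$ and $\trsf(\qq)$ is undefined then $\trsf(\pp)$ is undefined too, so $\trsf(\sttwo)$ is always reached from $\trsf(\st)$ by genuine (possibly zero) \SIAM{} steps, never by losing a token.

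\emph{Items (2) and (3).} Reduction preserves the conclusions of the net and $\trsf$ is the identity there, so $\trsf(I_\netone)=I_\nettwo$, and by (1) the image of a run of $\M_\netone$ is a sequence of states consecutive for $\redsiam^*$ starting at $I_\nettwo$; splicing in the intermediate states yields an honest transition sequence. For this sequence to be a \emph{run} (maximal) one needs: if the $\M_\netone$ run terminates in $\st_n$ with $\st_n\stopsiam$, then $\trsf(\st_n)\stopsiam$, and $\st_n$ is final iff $\trsf(\st_n)$ is final --- both obtained from the redex-local cases of (1) read \emph{backwards} (every transition available from $\trsf(\st)$ in $\M_\nettwo$ lifts to one available from $\st$ in $\M_\netone$). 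Granting this, convergence and deadlock are preserved in both directions by (2) together with Proposition~\ref{machine_conf} (all runs of a machine behave alike). The remaining content is divergence. Here one proves that absorbed transitions are bounded, up to a constant factor, by the non-absorbed ones: an absorbed transition either creates a token with undefined image, which by condition (i) consumes one of the finitely many starting positions $(\edg,*.\delta,\bstk)$ with $\bstk\in\id(S)$, and $|\id(S)|$ grows only through non-absorbed transitions (a stable token arriving at a principal door does so away from the deleted part), or it moves such a token within the constantly many edges the redex destroys, which the token leaves after a constant number of steps (it is pushed across the cut and into the remaining content, where its image is again defined). Consequently an infinite run of $\M_\netone$ contains infinitely many non-absorbed transitions, hence its image is an infinite run of $\M_\nettwo$; the converse uses the backward lifting in the same counting fashion.

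The main obstacle is the $d$/$y$ fragment of the case analysis of item (1): tracking the box-stack surgery when a box is opened --- stripping the bottom signature for a $d$-step, and in the $y$-step producing the longer stack inside the duplicated content while peeling off one $y(\cdot,\cdot)$ --- and checking that the dynamically defined side conditions (i)--(iii) are preserved under $\trsf$; together with the counting argument for divergence in item (3), which is the genuinely multitoken ingredient and the place where one must argue that $\trsf$ cannot collapse an infinite run to a finite one.
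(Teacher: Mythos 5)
Your proposal is correct and follows essentially the same route as the paper: a case analysis on the reduction rule for item (1) together with preservation of initial, terminal, final and deadlock states (the paper's parametricity-in-box-stacks observation playing the role of your ``backward lifting''), and then items (2)--(3) via the observation that an infinite run must contain infinitely many non-collapsed (non-absorbed) transitions, which is exactly the paper's counting of collapsing vs.\ non-collapsing transitions on the redex edges.
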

\newcommand{\der}[1]{\mathtt{weight}(#1)} 
We end this section by looking at the number of circulating tokens.
We observe that the number of tokens, and stable tokens in particular,
in any state $\st$ which is reached in a run of $\M_R$ is finite.  We
denote by $\der{\st}$ the number of stable tokens in $\st$ (\ie,
$\scod \cap \PDOORS_R$). The following is immediate by analyzing
Fig.~\ref{fig:trsfOthers} and checking which tokens are deleted.
\begin{lemma}\label{lem:derTokens}\label{derTokens}
  Assume  $\netone \red \nettwo$. We have that $\der{\st} \geq \der{\trsf(\st)}$.
  Moreover, if $\netone \red \nettwo $  via the $d$-rule or $bot.el$-rule,
  then $\der{\st} > \der{\trsf(\st)}$.
\end{lemma}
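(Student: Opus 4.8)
The plan is to establish both statements by a case analysis on the rule used in $\netone \red \nettwo$, reading off the behaviour of $\trsf$ from Fig.~\ref{fig:trsf} and Fig.~\ref{fig:trsfOthers}. Everything will follow from a single structural property of $\trsf$: \emph{no position is sent to a stable position unless it was already stable}; formally, for every $\pp \in \POSALL_\netone$ on which $\trsf$ is defined, $\trsf(\pp) \in \PDOORS_\nettwo$ implies $\pp \in \PDOORS_\netone$.

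Granting this property, the first inequality is immediate. Write $A := \scod \cap \PDOORS_\netone$ for the set of stable tokens of $\st$. The property yields $\trsf(\scod) \cap \PDOORS_\nettwo \subseteq \trsf(A)$, whence $\der{\trsf(\st)} = |\trsf(\scod) \cap \PDOORS_\nettwo| \le |\trsf(A)| \le |A| = \der{\st}$, using that the image of a set under a partial function is never larger than the set. For the ``moreover'' part, assume $\netone \red \nettwo$ is a $d$- or $bot.el$-step. By Fact~\ref{fact:trsf}, $\trsf$ deletes the stable token sitting on the premiss of the principal door of the box being opened, so $|\trsf(A)| \le |A|-1$; combined with the property above this gives $\der{\trsf(\st)} \le |A|-1 < \der{\st}$.

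It remains to prove the structural property, which is where the actual work lies. A position is stable exactly when its formula stack is the symbol $\delta$, or its edge is the conclusion of a $\botlk$ node, so for each rule one must check that $\trsf$ neither creates a fresh $\botlk$ conclusion nor turns the formula stack of a surviving token into $\delta$. For the multiplicative rules $\trsf$ only shortens formula stacks by popping a leading $l$ or $r$, and a stack of the form $l.\delta$ or $r.\delta$ is not legal, so no $\delta$ appears; the $w$- and $s.el$-rules erase a region that contains no stable token. For the $y$-rule, Fig.~\ref{fig:Ytrsf_example} shows that every surviving token keeps its \emph{formula} stack (only box stacks and $y(\cdot,\cdot)$ symbols are rearranged), so $\delta$-stacks stay $\delta$-stacks and nothing becomes newly stable. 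The genuinely delicate case is the box-opening $d$-rule: here $\trsf$ strips a $*$ from the top of the formula stack of the tokens left outside the opened $!$-box (and from the bottom of the box stack of those left inside it), and one must verify that removing that $*$ never produces the stack $\delta$ on a surviving token --- the only candidate is the dereliction token generated on the very $?d$ node being reduced, whose formula stack is $*.\delta$, but this is precisely the token that $\trsf$ deletes. For $bot.el$, $\trsf$ leaves all stacks untouched and merely removes the internal $\botlk$ conclusion together with its (deleted) token, so nothing becomes newly stable there either. This establishes the property.

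The obstacle, then, is exactly this last bit of bookkeeping: matching the stack pops performed by $\trsf$ in the $d$- (and $y$-) cases against the definition of the direction $\stable$, so as to be certain that stability is never \emph{gained} --- its being spuriously \emph{lost} being harmless for the inequality. The remaining cases are routine inspection of Fig.~\ref{fig:trsfOthers}.
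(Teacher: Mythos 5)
Your proof is correct and follows essentially the same route as the paper: the paper's own (one-line) argument is precisely to inspect Fig.~\ref{fig:trsfOthers}, observe that $\trsf$ only deletes tokens or reshuffles stacks in a way that never makes a surviving position stable (whence $\der{\st}\geq\der{\trsf(\st)}$), and invoke Fact~\ref{fact:trsf} for the strict decrease on $d$- and $bot.el$-steps, so you are merely making the ``stability is never gained'' check explicit. The only caveat --- shared with the paper, hence not a divergence --- is that the strict inequality tacitly assumes the deleted stable position is actually occupied in $\st$, which holds for the terminal states to which Lemma~\ref{net_termination} applies the result (the dereliction or $\onelk$ token must have been spawned and parked at that stable position in a maximal run) but not for an arbitrary state such as the initial one.
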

\subsection{The Interplay of Nets and Machines}\label{interplay}
We already know that if a simple net $R$ reduces to a normal form $S$, then
$S$ is an \MLL{} net (Corollary~\ref{cutel}), actually a very simple one. It is immediate that  in
this case, every run of the machine $\M_S$ terminates in a final
state: 
each token in the initial state flows to a final position (the net has neither sync nor boxes to stop them). Given an
arbitrary net $R$, we of course do not know if it reduces to a normal
form, but we are still able to use the facts above to prove that
$\M_R$ is deadlock-free.
\begin{lemma}[Mutual Termination]\label{net_termination}
Let $R$ be a simple net, as in Theorem~\ref{main_lem}. We have:
\begin{varenumerate}
\item  
  if a run of $\M_R$ terminates, then each sequence of reductions
  starting from $R$ terminates;
\item 
  if a sequence of reductions starting from $R$ terminates, then each
  run of $\M_R$ terminates in a \emph{final} state.
\end{varenumerate}
\end{lemma}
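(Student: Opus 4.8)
The plan is to prove the two implications separately, in both cases using the state transformation $\trsf$ of Lemma~\ref{trsf_prop} to carry machine behaviour along $\red$, with Lemma~\ref{lem:derTokens} (stable tokens never increase under $\trsf$, and strictly decrease on box-opening steps) and the machine confluence of Proposition~\ref{machine_conf} as the quantitative inputs. \emph{Direction~(2).} Suppose some reduction sequence from $R$ terminates. By Proposition~\ref{net_conf} (weak normalization implies strong normalization, plus confluence) there is a unique normal form $S$ with $R\red^* S$, and by Corollary~\ref{cutel} $S$ is a cut- and sync-free $\MLL$ net (with Mix). For such an $S$ the machine $\M_S$ converges: since $S$ has no sync node, no box and no cut, every token of the initial state flows monotonically to a final position. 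Now let $\rho$ be any run of $\M_R$. Iterating Lemma~\ref{trsf_prop}(2)--(3) along $R\red^* S$, $\trsf$ sends $\rho$ to a run of $\M_S$ with the same outcome (diverges/converges/deadlocks); since every run of $\M_S$ converges, so does $\rho$, i.e.\ it terminates in a final state. As $\rho$ was arbitrary, this is the claim.

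\emph{Direction~(1).} Assume, for contradiction, an infinite reduction $R=R_0\red R_1\red R_2\red\cdots$. First, $\M_{R_i}$ terminates for every $i$: by Proposition~\ref{machine_conf} all runs of $\M_R$ terminate if one does; the $\trsf$-image of a terminating run of $\M_{R_i}$ is again a finite (maximal) run of $\M_{R_{i+1}}$ by Lemma~\ref{trsf_prop}(2); one iterates and uses Proposition~\ref{machine_conf} again. By the diamond property and uniqueness of normal forms (Proposition~\ref{machine_conf}) all runs of $\M_{R_i}$ have a common length $m_i$ and a common final state $\st^i$; put $n_i:=\der{\st^i}$, a finite number of stable tokens, and let $|R_i|$ be the number of nodes of $R_i$ at all depths. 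The goal is to show that the lexicographically ordered triple $\mu_i:=(n_i,m_i,|R_i|)\in\NN^3$ strictly decreases at every step, contradicting well-foundedness of $<_{\mathrm{lex}}$ on $\NN^3$.

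We analyse the rule fired in $R_i\red R_{i+1}$, writing $\trsf$ for $\trsf_{R_i\red R_{i+1}}$ and using throughout that $\trsf$ maps the run of $\M_{R_i}$ onto the run of $\M_{R_{i+1}}$ (Lemma~\ref{trsf_prop}(2)), so that $\st^{i+1}=\trsf(\st^i)$, that $m_{i+1}\le m_i$ (each transition is sent to at most one transition), and that $n_{i+1}=\der{\trsf(\st^i)}\le n_i$ (Lemma~\ref{lem:derTokens}). If the step opens a box, i.e.\ is a $d$- or a $bot.el$-step, then $n_{i+1}<n_i$ by Fact~\ref{fact:trsf} and Lemma~\ref{lem:derTokens}, so $\mu_{i+1}<_{\mathrm{lex}}\mu_i$. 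If the step is an axiom, a multiplicative, an $s.el$-, a $?c$- or a $y$-step, then $m_{i+1}<m_i$: in each case the run of $\M_{R_i}$ contains at least one transition in which a token crosses a node that the rule removes or merges away --- the $\axlk$ node, the $\cutlk$ node, the sync node, the $?c$-node, or (for $y$) the transition across the $Y$-node realizing the outermost recursive call, which in $R_{i+1}$ has been explicitly unfolded --- and $\trsf$ collapses that transition to zero transitions; combined with $n_{i+1}\le n_i$ this gives $\mu_{i+1}<_{\mathrm{lex}}\mu_i$. Finally, for a $w$-step (weakening cut) the erased closed box and the $?w$-node are dynamically dead --- the $?w$-node has no premiss, and no token can enter, or be generated inside, a box cut against it --- so $\trsf$ is the identity on states, $n_{i+1}=n_i$, $m_{i+1}=m_i$, but $|R_{i+1}|<|R_i|$; again $\mu_{i+1}<_{\mathrm{lex}}\mu_i$. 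This exhausts the rules and yields the contradiction.

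The hard part is the step "$m_{i+1}<m_i$" for the exponential rules, and above all for $y$: justifying it requires following precisely how $\trsf_{R\red S}$ relabels the tokens that bounce through a $Y$-node (Figure~\ref{fig:Ytrsf_example}) and checking that unfolding one recursion level genuinely deletes at least one machine transition, rather than merely redistributing transitions between the two copies of the box's content. The $w$-case rests on the (easy, but slightly informal) observation that a weakened closed box carries no token in any reachable state; alternatively one could eliminate $w$-steps by a weakening-postponement argument and drop $|R_i|$ from the measure, keeping only the pair $(n_i,m_i)$.
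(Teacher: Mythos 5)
Your direction~(2) is exactly the paper's argument (normal form $S$ is a cut- and sync-free \MLL{} net, every run of $\M_S$ converges, and $\trsf$ transports an infinite or deadlocking run of $\M_R$ to one of $\M_S$), so that half is fine. Direction~(1), however, replaces the paper's argument by a lexicographic measure $(n_i,m_i,|R_i|)$ in which the machine-run length $m_i$ is supposed to do the work for the non-box-opening steps, and this is where the proof breaks. First, your case analysis is not exhaustive: the system also has the sync/tensor commutation rule $\red_s$ (distinct from $s.el$; see the case $\red_s$ in the proof of Lemma~\ref{lem:trsfProps}), and for that rule $\trsf$ maps a \emph{single} sync-crossing transition to a sync crossing followed by several tensor crossings. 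So the blanket claim ``$m_{i+1}\le m_i$ because each transition is sent to at most one transition'' is false, and on an $s$-step none of your three components decreases (stable tokens and node count are unchanged, the run may get longer). Second, the strict decrease $m_{i+1}<m_i$ that you claim for the $a$, $m$, $s.el$, $c$ and $y$ steps presupposes that some token of the terminating run actually traverses the redex, and nothing guarantees this: at this point of the development you may not assume the run converges (deadlock-freeness of the machine is a \emph{consequence} of this lemma), and even in a converging run a surface redex can lie in a token-free region. A concrete failure: a closed $!$-box cut against a $?c$ node whose two premisses are conclusions of $?w$ nodes is a valid $c$-redex, no token ever crosses that $?c$, so $n_{i+1}=n_i$ and $m_{i+1}=m_i$, while $|R_{i+1}|>|R_i|$ because the box is duplicated --- your triple does not decrease. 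The same phenomenon hits the $y$-rule (which also increases the node count), precisely the case you yourself flag as unresolved.

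The paper's proof is structured so as to avoid exactly these traps: from the machine it extracts only that the weight $\der{\st}$ of the terminating run never increases under $\trsf$ (Lemma~\ref{trsf_prop}, Lemma~\ref{lem:derTokens}) and \emph{strictly} decreases on the two box-opening rules $d$ and $bot.el$ (Fact~\ref{fact:trsf}); it then disposes of all remaining rules by a purely syntactic finiteness claim about net rewriting, namely that only finitely many steps can be performed without opening a box (each $c$, $w$, $d$ step consumes a surface $?$-node that surface reduction cannot recreate, $y$ buries its new Y-box at depth one, and multiplicative, $s$ and $s.el$ steps decrease formula-size measures). If you want to salvage your argument, you should drop the run-length component and supply such a combinatorial bound on box-free reduction sequences instead; as written, the measure argument has genuine gaps at the $s$, $c$ and $y$ rules.
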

\newcommand{\ww}[1]{\mathtt{weight}(#1)}
\begin{proof}
Let us first consider Point 1. By hypothesis, there is a run of $\M_R$
which terminates in a state $\st$.  We define $\ww{R}:=\der{\st}$.  By
Lemma~\ref{trsf_prop}, if $R \red S$, $\trsf$ maps the run of $\M_R$
into a run of $\M_{S}$ which terminates in the state $\trsf(\st)$.  By
Lemma \ref{derTokens}, $ \der{\trsf(\st)} \leq \der{\st}$, hence
$\ww{S} \leq \ww{R}$.  Using Lemma \ref{derTokens}, we prove that it
is not possible to have an infinite sequence of $\red$ reductions
starting from $R$, because: (i) each rewriting step which opens a box
($d$, or $bot.el$) strictly decreases $\ww{R}$; (ii) there is only a
finite number of rewriting steps which can be performed without
opening a box. Let us then consider Point 2. By hypothesis, $R$
reduces to a cut free net $S$, which has the form described in Corollary~\ref{cutel}. On such a net, all runs of $\M_S$ terminate in
a final state. If $\M_R$ has a run which is infinite
(resp. deadlocks), by Lemma~\ref{trsf_prop} $\trsf$ would map it into
a run of $\M_S$ which is infinite (resp. deadlocks).
\end{proof}
Lemma~\ref{net_termination} entails deadlock-freeness of the \SIAM{} as an immediate
consequence:
\begin{thm}[Deadlock-Freeness of the \SIAM]\label{SIAM deadlock free}
  Let $R$ be a \SMELLY{} net such that no $?$ appears in its
  conclusions. If a run of $\M_R$ terminates in the state $\st$, then
  $\st$ is a final state.
\end{thm}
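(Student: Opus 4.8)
The plan is to reduce the general statement to the "simple net" case already handled by Lemma~\ref{net_termination}, by peeling off the offending structure at the conclusions. Since the hypothesis only forbids $?$ in the conclusions (but allows, say, $!$, $\otimes$, $\parr$, units, etc.), the net $R$ need not be simple, so Lemma~\ref{net_termination} does not apply directly. However, the \SIAM{} transitions and the transformation machinery developed in Section~\ref{def:Transformation} are stated for arbitrary nets, so the strategy is to cleanly relate $\M_R$ to $\M_{R'}$ for a modified net $R'$ that \emph{is} simple, in a way that preserves termination and the final-vs-deadlock distinction.

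Concretely, here is how I would carry it out. First, observe that the only thing blocking the applicability of the earlier results is the shape of the conclusion formulas of $R$; in particular the non-simple connectives are $!$, $?$, and the units $\one,\bot$ when they occur at a conclusion in a non-simple way. By hypothesis $?$ is excluded, so the problematic cases are $!$-conclusions and unit-conclusions. For each such conclusion, I would perform a surgery on $R$: for a conclusion of type $!A$ attach a fresh $\dlk$ node turning it into a $?A\b{}^\bot$... — more carefully, plug each non-simple conclusion so that the resulting net $R'$ is simple, while arranging that the surgery is invisible to the machine in the relevant sense. The key point to verify is that the added nodes do not create cuts or syncs at surface and do not alter the set of runs up to the correspondence: every run of $\M_R$ lifts to a run of $\M_{R'}$ and conversely, with terminating runs corresponding to terminating runs and final states corresponding to final states (a token that would stop at a $!$-conclusion now stops just past the attached node, still at a final or stable position). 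Alternatively, and perhaps more cleanly, I would argue directly: the proofs of Proposition~\ref{net_conf}, Theorem~\ref{main_lem}, Corollary~\ref{cutel}, and Lemma~\ref{net_termination} only use simplicity to streamline the auxiliary lemmas — and as the authors remark in the "Discussion on simple structures" paragraph, the hypothesis is not essential. So the real content is: re-examine which steps of Lemma~\ref{net_termination} genuinely need simplicity, and show that the weaker hypothesis "no $?$ in the conclusions" suffices to run the same argument. The decreasing-parameter argument via $\ww{R}=\der{\st}$ and Lemma~\ref{derTokens} is completely insensitive to simplicity; what one must recheck is that normal forms still have the form that makes $\M_S$ converge, i.e. that a cut-free, sync-free net whose conclusions contain no $?$ has the property that every token flows from its initial position to a final position without getting stuck — there are no surface boxes left (by cut-freeness, closed exponential boxes would be cut, and open ones have auxiliary $?$-conclusions which would surface as $?$-conclusions of the whole net) and no syncs, hence nothing can stop a token except a final position or a stable position inside a $\bot$-box, and the no-$?$ hypothesis rules out the box cases.

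So the actual proof I would write: assume a run of $\M_R$ terminates in $\st$. By Proposition~\ref{machine_conf} all runs of $\M_R$ terminate. Using Lemma~\ref{trsf_prop} together with the weight $\ww{R}:=\der{\st}$ and Lemma~\ref{derTokens} exactly as in Point~1 of Lemma~\ref{net_termination}, conclude that $R$ has no infinite reduction sequence; by Proposition~\ref{net_conf} it strongly normalizes to some $S$ with $R\red^* S$. Then invoke the analysis above to show $S$ is cut-and-sync-free and, crucially, that $\M_S$ has only converging runs (every token reaches a final position); this is where the hypothesis "no $?$ in the conclusions" is used, to exclude residual boxes at surface. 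Finally, by Lemma~\ref{trsf_prop}(3) and a straightforward induction along $R\red^* S$, $\M_R$ converges iff $\M_S$ does; since $\M_S$ converges, so does $\M_R$, meaning the terminating state $\st$ is final.

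The main obstacle I expect is precisely the last structural claim — that a cut-free, sync-free \SMELLY{} net whose conclusions contain no $?$ has no surface boxes, and hence that $\M_S$ always converges. For $!$-boxes and $Y$-boxes: a closed one would be cut with something (otherwise its principal $!$-conclusion would be a $!$-conclusion of $S$, which is allowed, so I must handle that — a surface $!$-box with no incident cut is actually fine, but then does $\M_S$ still converge? the token entering such a box would traverse its content, which by correctness and cut-freeness is again a simple acyclic net, so yes, it exits), an open one has auxiliary $?$-conclusions that propagate to conclusions of $S$, contradiction. For $\bot$-boxes: an open surface $\bot$-box has non-empty conclusions $\Gamma$ that become conclusions of $S$; a token reaching such a conclusion from below can enter and end up stable inside, which is a \emph{final}-compatible outcome, not a deadlock. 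Pinning down all these cases — especially reconciling "$S$ may legitimately contain surface $!$-boxes and $\bot$-boxes" with "$\M_S$ only has converging runs" — is the delicate part, and it is where I would spend most of the write-up, likely by an induction on depth showing that $\M_T$ converges for every cut-and-sync-free net $T$ with no $?$ among its conclusions.
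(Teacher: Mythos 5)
There is a genuine gap, and it sits exactly where you placed the weight of your argument: the claim that the normal form $S$ of $R$ is cut-and-sync-free. That claim is an extension of Theorem~\ref{main_lem} beyond simple nets, and it is \emph{false} under the weaker hypothesis ``no $?$ in the conclusions''. Concretely, take two {\bboxes} $B_1,B_2$, each containing a single $\onelk$ node (plus the $\botlk$ node), and cut the auxiliary conclusion $1$ of $B_1$ against the principal conclusion $\bot$ of $B_2$. This is a correct net with conclusions $\bot,1$ (no $?$ anywhere), it contains a surface cut, and \emph{no} reduction applies: the $bot.el$ rule needs the principal $\bot$ to be cut against a $\onelk$ node, not against an auxiliary door of another {\bbox}. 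So net-level deadlock-freeness fails once $\bot$ (and similarly $!$) may occur in conclusions, the normal form of $R$ may retain blocked cuts, and your concluding plan --- an induction showing that $\M_T$ converges for every \emph{cut-and-sync-free} $T$ with no $?$ in its conclusions --- addresses the wrong class of nets: it never applies to such an $S$. (Note also that on this example the machine \emph{is} deadlock-free --- the token entering $B_1$ becomes stable, unlocks the $\onelk$ inside, whose token travels to $B_2$ and unlocks its $\onelk$, etc. --- which shows that for non-simple conclusions the theorem cannot be reduced to ``the normal form is trivial''; a genuinely different argument is needed.)

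The paper's proof is in fact your \emph{first}, abandoned idea, but carried out the other way around: rather than proving more about $R$, one \emph{closes} it. Each conclusion $A$ of $R$ is cut against a canonical net $S_{A\b}$ obtained by decorating the formula tree of $A\b$, where every $?$ in $A\b$ is introduced by a $?d$ node and the leaves are axioms or $\onelk$ nodes; this is box-free and is possible precisely because $A$ contains no $?$ (so $A\b$ contains no $!$) --- this is the only place the hypothesis is used. The closed net $\overline{R}$ is simple, so Lemma~\ref{net_termination} applies to it, and one observes that $\M_R$ deadlocks iff $\M_{\overline{R}}$ does. Note that this surgery deliberately \emph{creates} new cuts (that is what unblocks configurations like the one above, e.g.\ a $\onelk$ now faces the dangling $\bot$), and the correspondence one needs is only ``deadlocks iff deadlocks'', not your stronger requirement that the added material ``does not create cuts or syncs'' and preserves the set of runs; as you stated it, the surgery route is also off target. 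To repair your write-up, drop the claim that $S$ is cut-and-sync-free and instead formalize the closing construction and the transfer of deadlocks between $\M_R$ and $\M_{\overline{R}}$.
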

\begin{proof}
If $R$ has no $\bot$ and no $!$ in its conclusions, deadlock freeness is immediate consequence of 
Theorem~\ref{net_termination}.
However, the result is true also without this constraint,  because we can always ``close'' the net
$R$ into a net $\overline{R}$ in a way that cannot create any new deadlocks. 

$\overline{R}$ is the net obtained from $R$ when we cut each
  conclusion $A$ of $R$ with the conclusion $A\b$ of the net $S_{A\b}$
  which is defined as follows. $S_{A\b}$ has  the direct encoding of the formula  tree of $A\b$ above the
  conclusion $A\b$ (each modality $?$ is introduced by a $?d$ node);
  the atomic leaves are conclusion of an axiom in the case of $X,X\b,
  \bot$, or of a $\onelk$ node in the case of $\one$. Therefore,  $S_{A\b}$ has only conclusions $X\b,X,1$, \ie\ the other side of the axioms.
  
  To conclude we 
  observe that the \SIAM{} deadlocks 
    in $\overline{R}$ iff it deadlocks in $R$.
\end{proof}

We stress that in the statement above there is \emph{no
  assumption that the conclusions are simple formulas} (unlike in Lemma~\ref{net_termination}, or
Theorem~\ref{main_lem}).
 The constraint that the conclusions are
required not to contain the $?$ modality is instead a real limit,
which is intrinsic to most presentations of GoI (see, \eg,
~\cite{Girard89}).
\subsection{Computational Semantics}\label{semantics}
For the rest of this section, we assume the nets to be simple nets. The reason why  this is not a bothering restriction, is that  
the nets to which we are going to give computational
meaning in the Section~\ref{sec:beyond} are nets where all conclusions have type $1$. 

 The machine $\machine{\netone}$ implicitly gives a semantics to $\netone$.  By
Proposition~\ref{lem:diamProp}, all runs of $\M_R$ have the same
behaviour.  We can therefore say that $\M_R$ either \emph{converges}
(to a unique final state) or \emph{diverges}.  We write
$\M_R\Downarrow$ if all runs of the machine converge. We write
$R\Downarrow$ if all sequences of reductions starting from $R$
terminate in the (unique) normal form. In the previous section we have
established (Lemma~\ref{net_termination}) that
\begin{cor}[Adequacy]\label{thm:termination}
  $\M_R \Downarrow$ if and only if $R \Downarrow$.
\end{cor}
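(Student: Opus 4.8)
The plan is to read the Corollary off directly from the Mutual Termination Lemma (Lemma~\ref{net_termination}), using the two uniqueness-of-normal-forms results (Proposition~\ref{net_conf} for net rewriting, Proposition~\ref{lem:diamProp} for the machine) only to bridge between ``some run/reduction terminates'' and ``all of them do, reaching the unique normal form''. Concretely, I would first unfold the definitions in force in this section. On the one hand, $\M_R\Downarrow$ abbreviates ``every run of $\M_R$ converges'', which by Proposition~\ref{lem:diamProp} is equivalent to the existence of a single converging run. On the other hand, $R\Downarrow$ abbreviates ``every reduction sequence from $R$ terminates in the (unique) normal form'', which by Proposition~\ref{net_conf} is equivalent to the existence of a single terminating reduction sequence (weak normalization there coincides with strong normalization, and normal forms are unique).

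For the implication $\M_R\Downarrow \Rightarrow R\Downarrow$, I would assume $\M_R\Downarrow$; then in particular some run of $\M_R$ terminates, so by Lemma~\ref{net_termination}(1) every sequence of reductions starting from $R$ terminates, and Proposition~\ref{net_conf} promotes this to termination in the unique normal form, \ie\ $R\Downarrow$. For the converse $R\Downarrow \Rightarrow \M_R\Downarrow$, I would assume $R\Downarrow$; then in particular some reduction sequence from $R$ terminates, so by Lemma~\ref{net_termination}(2) every run of $\M_R$ terminates in a \emph{final} state, \ie\ every run converges, which is exactly $\M_R\Downarrow$.

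There is essentially no obstacle \emph{in this corollary itself}: all the substance sits in Lemma~\ref{net_termination} (which rests on the transfer properties of $\trsf$ from Lemma~\ref{trsf_prop}, on the monotonicity of the stable-token count from Lemma~\ref{derTokens}, and on Corollary~\ref{cutel}) and in the two confluence propositions. The one point that deserves care is the \emph{a priori} possibility that a terminating run of $\M_R$ stops in a non-final (deadlock) state, in which case ``terminates'' and ``converges'' would not coincide and the backward direction would break; this is precisely ruled out here because the blanket assumption of the section is that $R$ is simple, so its conclusions contain no $?$, and Lemma~\ref{net_termination}(2) (equivalently Theorem~\ref{SIAM deadlock free}) guarantees that any terminating run ends in a final state. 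With that observation in place, the equivalence is immediate from the two implications above.
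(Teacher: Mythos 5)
Your proof is correct and follows exactly the paper's route: the corollary is stated there as an immediate consequence of the Mutual Termination Lemma (Lemma~\ref{net_termination}), with the two confluence/uniqueness results (Proposition~\ref{net_conf} and Proposition~\ref{lem:diamProp}) silently bridging ``some run/reduction terminates'' and ``all do, with the same outcome''. Your explicit remark that the simple-net blanket assumption rules out deadlocking terminal runs is precisely the point the paper relies on via Lemma~\ref{net_termination}(2), so nothing is missing.
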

We also already know that:
\begin{cor}[Invariance]\label{basic_soundness}
  Assume $\netone \red \nettwo$.  $\M_R\Downarrow$ if and only if
  $\M_{S}\Downarrow$. 
\end{cor}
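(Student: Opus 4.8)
The plan is to obtain the statement from two results already in hand: the simulation properties of the state transformation $\trsf := \trsf_{\netone \red \nettwo}$ (Lemma~\ref{trsf_prop}) and the confluence of the machine (Proposition~\ref{lem:diamProp}). First I would recall that, by Proposition~\ref{lem:diamProp}, each of $\M_R$ and $\M_S$ is either convergent on \emph{every} run (to its unique final state) or divergent on every run; consequently $\M_R \Downarrow$ is equivalent to the assertion that \emph{some} run of $\M_R$ converges, and likewise for $\M_S$. So it suffices to produce one run of $\M_R$ and one run of $\M_S$ whose convergence behaviours agree.

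Next I would take the unique initial state $\stI_R$ together with an arbitrary run $\stI_R \redsiam \st_1 \redsiam \st_2 \redsiam \cdots$ of $\M_R$. By Lemma~\ref{trsf_prop}(2), applying $\trsf$ point-wise produces $\trsf(\stI_R) \redsiam^* \trsf(\st_1) \redsiam^* \cdots$, which is a run of $\M_S$; since a run starts from the (unique) initial state, $\trsf(\stI_R) = \stI_S$. By Lemma~\ref{trsf_prop}(\ref{trsf_main}), the chosen run of $\M_R$ converges/diverges/deadlocks exactly when its $\trsf$-image run of $\M_S$ does; in particular the one converges iff the other converges. Chaining the equivalences yields: $\M_R \Downarrow$ iff that run of $\M_R$ converges iff its image run of $\M_S$ converges iff $\M_S \Downarrow$, which is the claim.

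I do not expect a genuine obstacle at this level: all the effort is absorbed into Lemma~\ref{trsf_prop}, whose case analysis (Appendix~\ref{app:SIAM}) is the delicate point — in particular one must check that for the $d$- and $y$-steps $\trsf$ deletes exactly the ``stale'' stable tokens (cf.\ Fact~\ref{fact:trsf}) and that the multitoken side conditions (i)--(iii), which refer to $\id(S)$ and hence to the stable tokens on principal doors, are preserved under $\trsf$. As an independent sanity check I would also note a second, equally short route that does not reuse $\trsf$ directly: by Corollary~\ref{thm:termination} (Adequacy), $\M_R \Downarrow$ iff $R \Downarrow$ and $\M_S \Downarrow$ iff $S \Downarrow$; and $R \Downarrow$ iff $S \Downarrow$ because $\netone \red \nettwo$ and $\red$ is confluent with unique normal forms (Proposition~\ref{net_conf}) — if $R$ strongly normalizes then so does $S$ (any infinite reduction from $S$ extends to one from $R$), and conversely if $S \Downarrow$ then $R$ weakly normalizes through $S$, hence strongly normalizes by Proposition~\ref{net_conf}(2), so $R \Downarrow$. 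Either route closes the argument.
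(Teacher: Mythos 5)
Your main argument is exactly the paper's intended derivation: the corollary is stated as already known because it follows directly from Lemma~\ref{trsf_prop}(3) (the $\trsf$-image of a run of $\M_R$ is a run of $\M_S$ with the same convergence/divergence/deadlock behaviour) together with Proposition~\ref{lem:diamProp}, which lets one pass from a single run to all runs. Your alternative route via Corollary~\ref{thm:termination} and Proposition~\ref{net_conf} is also sound in this section (where nets are assumed simple and conclusions are preserved by reduction), but it is not needed; the verdict is that your proof is correct and essentially the same as the paper's.
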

We now introduce an equivalence on the machines which is finer than
the one induced by convergence. We associate a partial function $\sem
{R}$ to each net $R$ through the machine $\machine{\netone}$, and show
that $\sem {R}$ is a sound interpretation. This way we have a finer
computational model for $\SMELLY$, on which we will build in the next
sections. The \emph{interpretation} $\sem {\netone}$ of a net
$\netone$ is defined as follows
\begin{varitemize}
  \item  
    if $\machine{\netone}$ diverges, $\sem {\netone} := \Omega$,
  \item  
    if $\machine{\netone}$ converges, $\sem {\netone}$ is the partial
    function $\sem {\netone}: \POSI_{\netone} \partto \POSF_{\netone}$
    where $\sem {\netone}(\ss) := \pp$ if $\pp$ is a final position in
    the final state $\st$ of the machine (\ie, $\pp \in \scod \cap
    \POSF_{\netone}$) and $\orig(\pp) = \ss$.
\end{varitemize}
\begin{theorem}[Soundness]\label{soundness}
  If $\netone \red \nettwo$, $\sem {\netone}= \sem {\nettwo}$.
\end{theorem}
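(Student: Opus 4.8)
$\llbracket R \rrbracket = \llbracket S \rrbracket$ whenever $R \red S$.

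The plan is to leverage the transformation machinery $\trsf_{R \red S}$ and show it respects the extra structure recorded in the interpretation, namely the map $\orig$ from final positions back to initial positions. By Corollary~\ref{basic_soundness} we already know that $\M_R$ converges if and only if $\M_S$ converges, so the case where both diverge is immediate ($\llbracket R \rrbracket = \Omega = \llbracket S \rrbracket$). It remains to treat the convergent case, where we must prove equality of the two partial functions $\POSI_R \partto \POSF_R$ and $\POSI_S \partto \POSF_S$. The first observation is that $\trsf$ acts as the identity on all positions relative to edges that are conclusions of the net (conclusions are never touched by a reduction rule, which is applied strictly at surface and only at a redex), so $\trsf$ restricts to a bijection between $\POSI_R$ and $\POSI_S$ and between $\POSF_R$ and $\POSF_S$. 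Hence it makes sense to compare the two functions via this identification.

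The core of the argument is the following claim: if $\st$ is a final state of $\M_R$, then $\trsf(\st)$ is the final state of $\M_S$, and moreover for every final position $\pp \in \scod \cap \POSF_R$ we have $\orig_S(\trsf(\pp)) = \trsf(\orig_R(\pp))$. The first part follows from Lemma~\ref{trsf_prop}: the run of $\M_R$ terminating in $\st$ is mapped by $\trsf$ to a run of $\M_S$ terminating in $\trsf(\st)$, which by part~(3) of that lemma converges, hence terminates in a final state; by uniqueness of normal forms (Proposition~\ref{machine_conf}) this is \emph{the} final state of $\M_S$. For the compatibility with $\orig$, I would argue by induction along the run: $\orig$ is defined by tracing transitions backwards (Section~\ref{tracing}), so it suffices to check that $\trsf$ commutes with single backward transition steps — equivalently, that whenever $\ss \redsiam \pp$ as a single-token transition in $\M_R$, the images $\trsf(\ss)$ and $\trsf(\pp)$ are related by a (possibly empty, possibly length-one, in the $y$-case possibly longer) run in $\M_S$ that traces back consistently. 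This is exactly the single-token, condition-free content underlying Lemma~\ref{trsf_prop}(1), read with the bookkeeping of where tokens came from; it is a case analysis over the reduction rules in Fig.~\ref{fig:trsfOthers} and the transition rules in Fig.~\ref{fig:trRules}, and the non-identity cases ($a$, $d$, $y$, $bot.el$) are precisely the ones already dissected in the discussion preceding Lemma~\ref{trsf_prop}.

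The main obstacle I expect is the $d$-rule and the $y$-rule, where $\trsf$ is genuinely non-trivial: in the $d$-case the signature $*$ is stripped from formula stacks outside the box and from the bottom of box stacks inside it, while a stable token and the freshly-generated dereliction token are deleted; in the $y$-case the fate of a token depends on whether the bottom of its box stack has the form $y(\cdot,\cdot)$, and the box-stack length changes. I must check that a \emph{final} position (conclusion of the net, direction $\down$) can never be one of the deleted tokens — this holds because deleted tokens are either stable, or the generated dereliction token, or tokens strictly inside the box being opened, none of which are on a conclusion of the net — and that the relabelling of stacks on surviving tokens is faithfully tracked backwards to the corresponding relabelling of the initial position. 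Once the claim is established, soundness is immediate: for $\ss \in \POSI_R$, $\llbracket R \rrbracket(\ss) = \pp$ means $\pp \in \scod \cap \POSF_R$ with $\orig_R(\pp) = \ss$; applying $\trsf$ and the claim, $\trsf(\pp) \in \trsf(\st)$ is a final position with $\orig_S(\trsf(\pp)) = \trsf(\ss)$, and under the identification of $\POSI_R$ with $\POSI_S$ and $\POSF_R$ with $\POSF_S$ this says exactly $\llbracket S \rrbracket(\ss) = \pp$; symmetry of the roles (or running the same argument with the inverse bijection on conclusion-positions) gives the reverse inclusion, hence equality.
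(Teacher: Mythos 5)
Your proposal follows essentially the same route as the paper's own proof: both dispose of the divergent case by invariance, use $\trsf$ to map the terminating run of $\M_R$ onto the (unique) final state of $\M_S$, note that $\trsf$ is the identity on initial and final positions, and reduce everything to a lemma—proved by induction on the length of the run—stating that $\trsf$ commutes with the trace-back map $\orig$ (the paper's Lemma on $\orig$ in the appendix). The only loose point is your appeal to ``symmetry'' for the reverse inclusion, since $\trsf$ is not invertible; the paper instead phrases the commutation lemma as a biconditional ($\orig_R(\pp)\in\POSI_R$ iff $\orig_S(\trsf(\pp))\in\POSI_S$, and then the origins coincide), which is exactly what the same induction yields and is what you should state to close that direction cleanly.
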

The proof is given in Appendix~\ref{app:SIAM}.
\section{Beyond Nets: Interpreting Programs}\label{sec:beyond}

\SMELLY{} nets as defined and studied in Section~\ref{sec:SMELLY} are
purely ``logical''. In this section we introduce \emph{program nets},
which are a (slight) variation on \SMELLY{} nets in which external
data can be manipulated.  This allows us to interpret \PCF-like
languages. The machine running on these nets will be a very simple
extension of the \SIAM, of which it inherits all properties.
 
The intuition behind program nets is as follows. Assume a language
with a single base type. The base type is mapped to the formula
$\one$; values of the base type are stored in a \emph{memory}.
Elementary operations of the base type are modeled using sync nodes,
recursion is modeled by Y-boxes, conditional tests are captured by a
generalization of the $\bot$-box. Arrow and product types (and all the
usual $\lambda$-calculus constructions) are encoded by means of one of
the well-known mappings of intuitionistic logic into linear
logic~\cite{MaraistOTW95,phdmackie,Girard87}, depending on the chosen
evaluation strategy.

Before introducing program nets and interactive machines for them, let
us fix a language which will also be our main application.
\subsection{\PCF}
The language we shall consider in this section is nothing more than
Plotkin's \PCF, whose \emph{terms} ($ M,N,P$) and \emph{types} ($A,B$) are defined as
  follows:
  
{\footnotesize
\[
\begin{array}{lll}
  M
  &{:}{:}{=} &
  x \midd
  \lambda x.M \midd
  MM \midd
  \PCFlproj(M) \midd
  \PCFrproj(M) \midd\\ &&  
  \PCFpair{M,M} \midd
  \PCFn{n} \midd
  \PCFsucc(M) \midd
  \PCFpred(M) \midd\\ &&
  \PCFifzero{P}{M}{M} \midd
  \PCFletrec{f}{x}{M}{M},
  \\
  A
  &{:}{:}{=} &
  \PCFnat \midd
  A \PCFarrow A \midd
  A \PCFprod A,
\end{array}
\]
}\\
Here, $n$ ranges over the set of non-negative natural
numbers.  A \emph{typing context} $\Delta$ is a (finite) set of typed
variables $\{x_1:A_1,\dots,x_n:A_n\}$, and \emph{typing judgements}
are in the form $\Delta\PCFentail M:A$. We say that a typing judgement
is \emph{valid} if it can be derived from a standard set of typing
rules). Most term constructs are self-explanatory: we
only give a few words on the $\mathtt{letrec}$ construction. In
standard \PCF, the fixpoint is represented with a Y-combinator: while
this is fine in call-by-name evaluation, it does not behave well in the
context of call-by-value reduction. As the $\mathtt{letrec}$ makes
sense in both situations, we use it instead. Moreover, we only want to
allow recursive definitions of \emph{functions}. To syntactically
enforce this, we consider a $\mathtt{letrec}$ binding two variables:
one for the function to be defined, and one for its argument.

A typing context $\Delta$ is a (finite) set of typed variables
$\{x_1:A_1,\dots,x_n:A_n\}$, and a typing judgement is written as
\[
\Delta\PCFentail M:A
\]
A typing judgement is \emph{valid} if it can be derived from the
usual set of typing rules, presented in Table~\ref{tab:typrules}.

On PCF terms, we define a call-by-name and a call-by-value
evaluation, in a standard way.

\begin{table*}
\[
\infer{\Delta,x:A\PCFentail x:A}{}
\qquad
\infer{\Delta\PCFentail \lambda x.M:A\PCFarrow B}{\Delta,x:A\PCFentail
  M:B}
\qquad
\infer{\Delta\PCFentail MN:B}{
  \Delta\PCFentail M:A\PCFarrow B
  &
  \Delta\PCFentail N:A
}
\]
\[
\infer{\Delta\PCFentail \PCFlproj{(M)}:A}{
  \Delta\PCFentail M:A\PCFprod B
}
\qquad
\infer{\Delta\PCFentail \PCFrproj{(M)}:B}{
  \Delta\PCFentail M:A\PCFprod B
}
\qquad
\infer{\Delta\PCFentail \PCFpair{M,N}:A\PCFprod B}{
  \Delta\PCFentail M:A
  &
  \Delta\PCFentail N:B
}
\]
\[
\infer{\Delta\PCFentail\PCFn{n}:\PCFnat}{}
\quad
\infer{\Delta\PCFentail\PCFsucc{(M)}:\PCFnat}{
  \Delta\PCFentail M:\PCFnat
}
\quad
\infer{\Delta\PCFentail\PCFpred{(M)}:\PCFnat}{
  \Delta\PCFentail M:\PCFnat
}
\quad
\infer{\Delta\PCFentail \PCFifzero{P}{M}{N}:A}{
  \Delta\PCFentail P:\PCFnat
  &
  \Delta\PCFentail M:A
  &
  \Delta\PCFentail N:A
}
\]
\[
\qquad
\infer{\Delta\PCFentail \PCFletrec{f}{x}{M}{N}:C}{
  \Delta,f:A\PCFarrow B,x:A\PCFentail M:B
  &
  \Delta,f:A\PCFarrow B\PCFentail N:C
}
\]
\caption{Typing rules for PCF}
\label{tab:typrules}
\end{table*}

\subsubsection{Call-by-name reduction}

A {\em value} in the call-by-name setting is defined from the
following grammar:

{\footnotesize
\[
\begin{array}{lll}
  U
  &
  {:}{:}{=}
  &
  x \midd
  \lambda x.M \midd
  \PCFpair{M,N} \midd
  \PCFn{n}.
\end{array}
\]}
\\
A {\em call-by-name reduction context $C[-]$} is defined with the
following grammar:

{\footnotesize
\[
\begin{array}{lll}
C[-] &{:}{:}{=}
&
[-]\midd
C[-]N\midd \PCFlproj{C[-]}\midd \PCFrproj{C[-]}\midd
\\
&&\PCFsucc(C[-])\midd \PCFpred(C[-])\midd \PCFifzero{C[-]}{M}{N}.
\end{array}
\]}
\\
In call-by-name, $M$ rewrites to $N$, written as $M \PCFcbn N$, is
defined according to the rules presented in Table~\ref{tab:cbnrw}.
\begin{table*}
\begin{center}
\begin{minipage}{.6\textwidth}
(1) {\em Axiom rules.} 
\[
\infer{(\lambda x.M)N \PCFcbn M\{x:=N\}}{}
\qquad
\infer{\PCFlproj{\PCFpair{M,N}} \PCFcbn M}{}
\qquad
\infer{\PCFrproj{\PCFpair{M,N}} \PCFcbn N}{}
\]
\[
\infer{\PCFsucc(\PCFn{n}) \PCFcbn \PCFn{n+1}}{}
\qquad
\infer{\PCFpred(\PCFn{n+1}) \PCFcbn \PCFn{n}}{}
\qquad
\infer{\PCFpred(\PCFn{0}) \PCFcbn \PCFn{0}}{}
\]
\[
\infer{\PCFifzero{\PCFzero}{M}{N}\PCFcbn M}{}
\qquad
\infer{\PCFifzero{\PCFn{n+1}}{M}{N}\PCFcbn N}{}
\]
\[
\infer{\PCFletrec{f}{x}{M}{N} \PCFcbn N\{f := \lambda x.\PCFletrec{f}{x}{M}{f\,x}\}}{}
\]
(2) {\em Congruence rules.} Provided that $C[-]$ is a call-by-name context:
\[
\infer{C[M]\PCFcbn C[N]}{M\PCFcbn N}
\]
\end{minipage}
\end{center}
\caption{Call-by-name reduction strategy for PCF.}
\label{tab:cbnrw}
\end{table*}

\subsubsection{Call-by-value reduction}
A value in the call-by-value setting is defined from the following
grammar

{\footnotesize\[
\begin{array}{lll}
  U
  &
  {:}{:}{=}
  &
  x \midd
  \lambda x.M \midd
  \PCFpair{U,U} \midd
  \PCFn{n}.
\end{array}
\]
}\\
A {\em call-by-value
  reduction context $C[-]$} is defined with the following grammar:
{\footnotesize
\[
\begin{array}{lll}
C[-] &{:}{:}{=}
&
[-]\midd C[-]N\midd VC[-]\midd \PCFpair{C[-],N}\midd
\PCFpair{V,C[-]}\midd
\\
&&
\PCFlproj{C[-]}
\midd \PCFrproj{C[-]}\midd
\PCFsucc(C[-])\midd \PCFpred(C[-])\midd\\
&&\PCFifzero{C[-]}{M}{N}.\phantom{\midd}
\end{array}
\]}

\noindent
In call-by-value, $M$ rewrites to $N$, written as $M \PCFcbv N$, is
defined according to the rules of Table~\ref{tab:cbvrw}.
\begin{table*}
\begin{center}
\begin{minipage}{.6\textwidth}
(1) {\em Axiom rules.} 
\[
\infer{(\lambda x.M)U \PCFcbv M\{x:=U\}}{}
\qquad
\infer{\PCFlproj{\PCFpair{U,V}} \PCFcbv U}{}
\qquad
\infer{\PCFrproj{\PCFpair{U,V}} \PCFcbv V}{}
\]
\[
\infer{\PCFsucc(\PCFn{n}) \PCFcbv \PCFn{n+1}}{}
\qquad
\infer{\PCFpred(\PCFn{n+1}) \PCFcbv \PCFn{n}}{}
\qquad
\infer{\PCFpred(\PCFn{0}) \PCFcbv \PCFn{0}}{}
\]
\[
\infer{\PCFifzero{\PCFzero}{M}{N}\PCFcbv M}{}
\qquad
\infer{\PCFifzero{\PCFn{n+1}}{M}{N}\PCFcbv N}{}
\]
\[
\infer{\PCFletrec{f}{x}{M}{N} \PCFcbv N\{f := \lambda x.\PCFletrec{f}{x}{M}{f\,x}\}}{}
\]
(2) {\em Congruence rules.} Provided that $C[-]$ is a call-by-value context:
\[
\infer{C[M]\PCFcbn C[N]}{M\PCFcbn N}
\]
\end{minipage}
\end{center}
\caption{Call-by-name reduction strategy for PCF.}
\label{tab:cbvrw}
\end{table*}

\subsection{Program Nets and Register Machines}
\newcommand{\surfone}[1]{{\tt SurfOne}(#1)}
\newcommand{\syncnode}[1]{{\tt SyncNode}(#1)}
\newcommand{\syncname}{{\tt SyncNames}}
\newcommand{\oneindex}[1]{{\tt ind}(#1)}
\newcommand{\mapsyncname}[1]{{\tt mkname}(#1)}
\newcommand{\memories}{\mathrm{Mem}} 
\newcommand{\mem}[1]{\mathbf{m}_{#1}} 
\newcommand{\smem}{\mem{\st}} 
\newcommand{\R}{\mathbf{R}}
\renewcommand{\S}{\mathbf{S}}
\newcommand{\test}{\mathrm{test}}
\newcommand{\upd}{\mathrm{update}}
\newcommand{\initop}{\mathrm{init}}
\newcommand{\indset}{\mathrm{I}}
\newcommand{\arity}{\mathrm{arity}}
\newcommand{\nm}{\mathit{l}} 
In the rest of this paper, we assume that all atomic formulas are
units (\ie, $1$ and $\bot$). The language of formulas is therefore $
\formone \bnf \one\midd
\bot\midd\formone\otimes\formone\midd\formone\parr\formone
\midd!\formone\midd ?\formone.$

First of all, we need the
definition of a \emph{memory}:
\begin{deff}\label{def:mem}
  Let $\indset$ be a (possibly) infinite set whose elements are called
  \emph{addresses}. Let $\syncname$ be a finite set of names, where to
  each name we associate a positive number that we call {\em
    arity}. Given a set of \emph{values} $\mathbb{X}$, we define
  $\memories $ as the set $\indset\to\mathbb{X}$ of all functions from
  $\indset$ to $\mathbb{X}$, equipped with the following operations:
  $$
  \begin{array}{r@{{~~}:{~~}}l}
    \test & \indset\times\memories \to \mathrm{Bool}\times\memories;
    \\
    \upd & 
    \syncname\times\displaystyle(\indset^*)\times\memories
    \rightharpoonup \memories;
    \\
    \initop & \indset\times\memories \to \memories.
  \end{array}
  $$
  where the  partial function $\upd$ is defined on a triple $(\nm,x,\mem{})$ iff
      the length of $x$ equals the arity of $\nm$.
      
  A \emph{memory}\footnote{An even more fitting name would be
    \emph{memory states}, but we do not want to overload too much the
    term ``state''.} is any element $\mem{}$ of $\memories$, and we say
  that $\mem{}$ \emph{has values in} $\mathbb{X}$.
\end{deff}
Intuitively, $\mem{}$ represents a set of \emph{registers} which are
referenced by the elements of $\indset$ (the addresses). The operation
$\test$ is used to query the value of a register, $\upd$ to update its
value, and $\initop$ to set a register of the memory to a default
value. Some comments on the operations on $\memories$ are useful. The
reason why we have $\memories$ in the codomain of the operation
$\test$, is that we aim at a general model where $\test$ might have a
non-local effect on the memory, such as in a quantum setting (see
e.g.~\cite{esop}), though its implementation is beyond the scope of
this paper.  Notice also that the type of $\upd$ is really a
dependent-type.
\subsubsection{Program Nets}
Program nets are obtained as a light and natural extension of
\SMELLY{} nets, as follows:
\begin{varitemize}
\item   
  \bboxes\ are replaced by multi-\bboxes, which are meant to handle
  \emph{tests}. 
   A
  \emph{multi-$\bot$-box} is a $\bot$ node to which we associate
  \emph{two} structures with the same conclusions $\Gamma$, as shown
  in Fig.~\ref{multibox} and~\ref{SIAMmultibox} (these figures are
  fully explained later on). 
  An {\em extended {\SMELLY} net} is a {\SMELLY} net where
  multi-\bbox{}es\footnote{In some example pictures, it is still
    convenient to use simple \bboxes; they can be seen as a short-cut
    for multi-\bboxes\ with the same net in both places.  } are used in
  place of \bbox{}es.
\item 
  Given an (extended) net $R$, let $\surfone{R}$ be the set of all
  $\onelk$ nodes at the surface, and $\syncnode{R}$ be the set of
  {\em all} sync-nodes of the extended net $R$, whether at surface
   or not.  A {\em decoration} of $R$ with names $\syncname$
  consists of the following two pieces of data:
  \begin{varenumerate}
  \item 
    An injective partial map $\oneindex{R}:\surfone{R}\partto
    \indset$, \ie, $\onelk$ nodes are not necessarily decorated;
  \item 
    A {\em total} map $\mapsyncname{R}:\syncnode{R}\to \syncname$,
    where $\syncname$ is a finite set of names. This map is simply
    naming the sync nodes appearing in the extended net $R$. We assume
    that given a name of arity $k$, all the sync nodes decorated with
    that name have the same arity $k$, where the arity of a sync node
    is the total number of $1$'s in its premisses.
  \end{varenumerate}
\end{varitemize}
\begin{deff}
   Given a  set $\memories$ as in Definition~\ref{def:mem}, 
  a {\em program net} is a pair $\R= (R,\mem{R})$, where $R$ is a
  decorated, extended net and $\mem{R}\in \memories$ is a memory.
\end{deff}
Rewriting on {\SMELLY} nets easily extends to program nets as shown in
Fig.~\ref{multibox} (where we adopt the convention that the memory
associated to the net is $\mem{1}$ before reduction, and $\mem{2}$
after reduction). The rules are as follows. Rule ${\it decor}$ is a
new rewriting rule which associates to a surface node $\onelk$ an
address $r\in I$; when doing this, we are \emph{linking} the $\onelk$
node to the memory. Rule ${\it bot.el}$ is modified to reflect the use
of multi-$\bot$-boxes.  As shown in Fig.~\ref{multibox}, the reduction
depends on the memory, and is determined by the result of the
operation $\test$. For the other reduction rules, the underlying net
is rewritten exactly as for \SMELLY\ nets.  Concerning the memory,
only the rule ${\it s.el}$ modifies it, as follows:
$\mem{2}=\upd(\nm,(r_1,r_2,\dots r_k), \mem{1})$, where $k$ is the
arity of $\nm$. In all the remaining cases $\mem{1}=\mem{2}$ (i.e. the
memory is not changed)
\begin{figure}[htbp]
  \begin{center}
    \fbox{
      \begin{minipage}{.47\textwidth}
        \begin{center}
          \includegraphics[width=8.2cm]{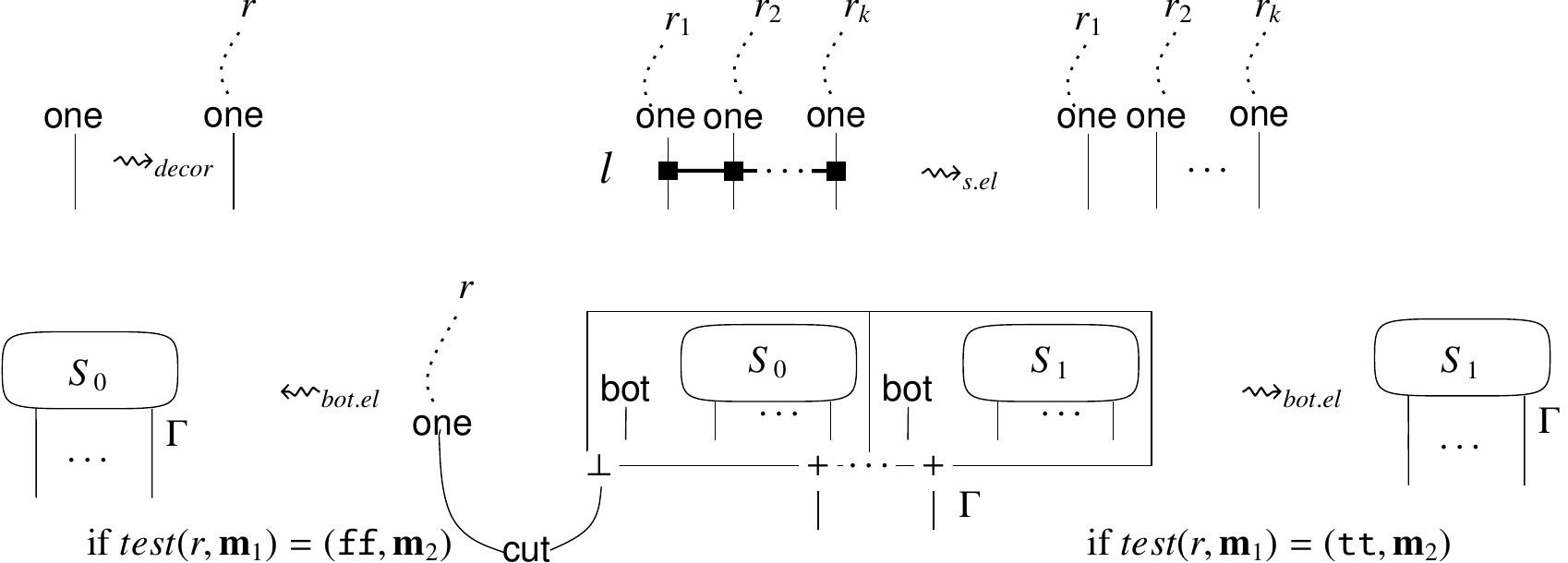}
        \end{center}
      \end{minipage}}
\end{center}
  \caption{Program Net Rewriting.}\label{multibox}
\end{figure}
What we have introduced so far is a general schema for program nets;
in order to capture specific properties, we need to define $\memories$
and the operations on it.  In the following section, we specialize the
construction to \PCF.
\subsubsection{\PCF\ nets}\label{pcf nets}
\newcommand{\ttrue}{\mathtt{tt}} \newcommand{\tfalse}{\mathtt{ff}} To
encode \PCF\ programs, we use a class of program nets.  Once
$\memories$ and the operations on it are appropriately defined, we are
able to gain more expressive power than in \SMELLY{}, while good
computational properties will be still guaranteed by the underlying
nets. A {\em \PCF\ net} is a program net where
$\memories$ has values in
$\mathbb{N}$, that is, $\memories ~{:}{=}~ \indset\to\mathbb{N}$.  The
set of sync-names is $\{\mathtt{max}, \mathtt{p}, \mathtt{s}\}$:
$\mathtt{max}$ is binary while $\mathtt{p}$ and $\mathtt{s}$ are
unary.  The operation $\upd$ is defined as follows.  { The sync node
  of label $\mathtt{p}$ acts as the predecessor, that is $\upd(\mathtt{è},r,\mem{1}) =
  \mem{2}$ where $\mem{2}(r)=\mem{1}(r)-1$ and $\mem{2}(k)=\mem{1}(k)$
  if $k\neq r$.  The node of label $\mathtt{s}$ acts as the successor, that is
  $\upd(\mathtt{s},n,\mem{1}) = \mem{2}$ where $\mem{2}(r)=\mem{1}(r)+1$ and
  $\mem{2}(k)=\mem{1}(k)$ if $k\neq r$.  Finally, the sync node of
  label $\mathtt{max}$ acts as follows: $\upd(\mathtt{max},r,q,\mem{1}) =
  \mem{2}$ where $\mem{2}(r)=\mem{2}(q)=\max(\mem{1}(r),\mem{1}(q))$
  and $\mem{2}(k)=\mem{1}(k)$ if $k\neq r$ and $k\neq q$.}  For the
other operations, $\test(r,\mem{})$ is defined to be $(\ttrue,
\mem{})$ if $\mem{}(r) = 0$, and $(\tfalse, \mem{})$ otherwise;
$\initop(r,\mem{})$ is defined to be the memory $\mathbf{n}$ where
$\mathbf{n}(r) = 0$ and $\mathbf{n}(k) = \mem{}(k)$ for $k \neq r$.
Any typing derivation is encoded as a \PCF\ net.  Two possible
encodings will be considered: one for call-by-value, one for
call-by-name, which correspond to two translations of intuitionistic
logic into linear logic~\cite{MaraistOTW95,phdmackie}.

\subsubsection{Register Machines}  
\begin{figure}
\begin{center}
  \fbox{
  \begin{minipage}{.47\textwidth}
    \begin{center}
    \includegraphics[width=7.2cm]{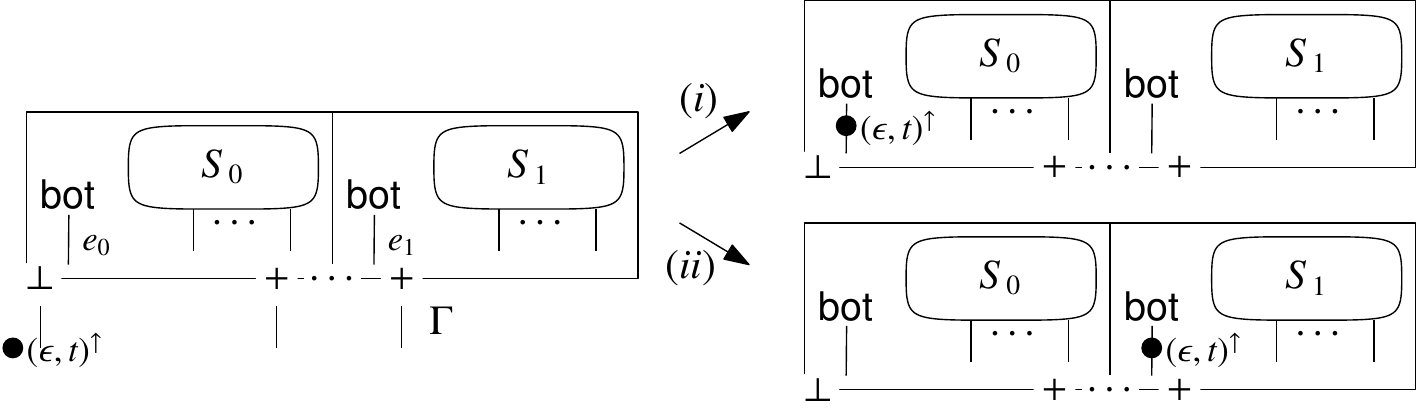}
    \end{center}
  \end{minipage}}
\end{center}
\caption{Multi-\bbox\ Transition for a Register Machine.}\label{SIAMmultibox}
\end{figure}
The \SIAM{}, as we defined it in Section~\ref{SIAM}, is readily
adapted to interpret \PCF{} nets. Let us first sketch a general
construction for the machine which is associated to a program net.
The dynamics of the machine is mostly inherited from the \SIAM; the
novelty is that the notion of state now includes a {memory}.  Let us
fix a set of memories $\memories{}$.  To a program net $\R= (R,
\mem{R})$ ($\mem{R}\in \memories$) is associated the machine $\M_{\R}$
whose memories, states and transitions are defined as follows. The
definition of position and set of positions is the same as in
Section~\ref{SIAM}.
\paragraph*{Memories} 
$\memories$ and the operations on it are the same as for the program
net $\R$. To illustrate the machine, we need however to make the
set of addresses $\indset$ precise. We take $\indset$ to be the set of
positions $\POSI_R \cup \ones_R$. We say that the access to the memory
is defined for all positions for which $\orig (\pp) \in \POSI_R
\cup \ones_R$.
\paragraph*{States}
A state of $\M_{\R}$ is a pair $(\st, \smem)$, where $\st$ is a state
in the sense of Section~\ref{SIAM}, and $\smem\in \memories{}$ is a
memory. An \emph{initial} state of $\M_{\R}$ is a pair $(\stI,
\mem{\stI})$, where $\mem{\stI}$ coincides with $\mem{R}$ for the
positions corresponding to decorated $\onelk$ nodes, is arbitrary on
$\POSI_R$, and is $0$ anywhere else.
\paragraph*{Transitions}
The transitions are the same as in \ref{SIAM}, except in the following
cases, which are defined only if the access to the memory is defined.
\begin{varitemize}
\item 
  Sync nodes. When the tokens $\pp_1, \dots, \pp_k$ cross a sync node
  with label $\nm$ and arity $k$, the operation $\upd(\nm,\pp_1,
  \dots, \pp_k, \mem{})$ opportunely modifies the memory $\mem{}$.
\item 
  Multi-$\bot$-box. Let the box be as in Fig.~\ref{SIAMmultibox},
  where $S_0$ and $S_1$ are the two nets associated to it, and the
  edges $e_0,e_1$ are as indicated.  When a token is in position
  $\pp=(e,\emp,\bstk)$ on the principal conclusion of the box, it
  moves to $ (e_0,\emp,\bstk)$ if $\test(\orig(\pp),\smem)$ returns
  the boolean $\mathtt{ff}$ (arrow (i) in Fig.~\ref{SIAMmultibox}) and
  it moves to $(e_1,\emp,\bstk)$ if $\test(\orig(\pp),\smem)$ returns
  $\mathtt{tt}$ (arrow (ii) in Fig.~~\ref{SIAMmultibox}). If a token
  $(f, \stk, \bstk)$ is on an auxiliary conclusion $f$, it moves to
  the corresponding conclusion in $S_0$ (resp. $S_1$) if $\bstk\in
  \id(S_0)$ (resp. $\bstk \in \id(S_1)$).
\end{varitemize}
\paragraph*{State Transformations}
Let $\R=(R,\mem{R})$ be a program net and $\R \red \S=(S, \mem{S})$.
The transformation $\trsf$ described in Fig.~\ref{fig:trsfOthers}
associates \emph{positions} of $R$ to \emph{positions} of $S$; this
allows us also to specify the transformation of the memory, hence
allowing us to map a memory of $\M_\R$ into a memory of $\M_{\S}$.
More precisely, each state $(\st, \smem )$ of $\M_\R$ is mapped 
into a state $(\trsf(\st), \trsf(\smem) )$ of  $\M_{\S}$.
\subsubsection{\PCF\ Machines}
A \PCF\ machine is a {register} machine where $\memories$ and the
operations on it are defined as for \PCF\ nets (Section \ref{pcf
  nets}).  As for the \SIAM, we have that $\trsf$ maps each run of
$\M_\R$ into a run of $\M_{\R'}$ which converges/diverges/deadlocks
iff the run on $\M_R$ does.  By combining \PCF\ nets and the
\PCF\ machine, it is possible to establish similar results to those in
Section~\ref{interplay} and~\ref{semantics}.  \condinc{}{
\begin{lemma} Let $\R$ be a \PCF\ net where all conclusions have type $\one$.
  The machine $\M_\R$ terminates in a final state (say $(\st,\smem)$)
  iff $\R$ reduces to a cut and sync free net (say $\mathbf S=(S, \mem{S})$).
  Moreover 
  \[ 
  \mem{S}=\sem {\smem}
  \] 
  
  where $\sem{\blue{\smem}}$ is the restriction of $\blue{\smem}$
  to the elements pointed to by final positions.
\end{lemma}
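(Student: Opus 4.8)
The plan is to obtain the statement by specialising, to the program‑net / \PCF‑machine setting, the net/machine interplay already established for \SMELLY{} nets and the \SIAM. Since every conclusion of $\R$ has type $\one$, the underlying net $R$ is \emph{simple}, so the hypotheses of Theorem~\ref{main_lem}, Corollary~\ref{cutel} and Lemma~\ref{net_termination} are all met. As announced just above, the register‑machine analogues of Proposition~\ref{net_conf}, Lemma~\ref{trsf_prop}, Lemma~\ref{net_termination} and Theorem~\ref{SIAM deadlock free} go through verbatim: the only new rewriting rules, ${\it decor}$ and the two‑place ${\it bot.el}$ (Fig.~\ref{multibox}), preserve correctness and confluence, introduce no infinite reduction, and act on states compatibly with $\trsf$, which now also carries a memory, $\trsf(\st,\smem)=(\trsf(\st),\trsf(\smem))$, as in the paragraph on state transformations. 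By the register‑machine version of Proposition~\ref{lem:diamProp}, if one run of $\M_\R$ terminates then all runs of $\M_\R$ terminate in the same state, so the pair $(\st,\smem)$ in the statement is well defined.

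\emph{The equivalence.} If $\M_\R$ terminates in a final state, then by the analogue of Lemma~\ref{net_termination}(1) every reduction sequence from $\R$ terminates; by confluence there is a unique normal form $\mathbf S=(S,\mem{S})$, and by the Cut‑Elimination analogue of Corollary~\ref{cutel} the net $S$ is built only from $\axlk,\onelk,\otimes,\parr$ nodes, hence is cut‑ and sync‑free. Conversely, if $\R\red^*\mathbf S$ with $S$ cut‑ and sync‑free, then $\M_{\mathbf S}$ converges: from the initial state every token runs through the multiplicative tree to a final position, there being neither box nor sync node to stop it, and the memory is untouched since only ${\it s.el}$/sync transitions modify it. The analogue of Lemma~\ref{trsf_prop}(3) sends runs of $\M_\R$ to runs of $\M_{\mathbf S}$ preserving divergence/convergence/deadlock, so $\M_\R$ can neither diverge nor deadlock, and any terminating run of $\M_\R$ ends in a final state by the deadlock‑freeness analogue of Theorem~\ref{SIAM deadlock free} (the conclusions have type $\one$, hence no $?$). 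This proves the ``iff''.

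\emph{The memory.} First inspect $\M_{\mathbf S}$. Since $S$ is cut‑free, has no $\botlk$ node, and has all conclusions of type $\one$, no axiom can occur in it — the $\bot$‑side of a $\one/\bot$ axiom could never be absorbed by $\otimes,\parr$ into a $\one$‑typed conclusion — so every conclusion of $S$ is the conclusion of a $\onelk$ node, and in the normal form $\mathbf S$ every surface $\onelk$ node is linked to an address (otherwise ${\it decor}$ would still apply). As $S$ has no sync node, no run of $\M_{\mathbf S}$ ever performs an $\upd$, so its final memory equals its initial one, which agrees with $\mem{S}$ exactly on the addresses of the decorated $\onelk$ nodes, i.e.\ exactly on the addresses pointed to by the final positions of $\M_{\mathbf S}$. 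Now transport this back: by the analogue of Lemma~\ref{trsf_prop}(2), $\trsf$ maps the terminating run of $\M_\R$ ending in $(\st,\smem)$ onto the run of $\M_{\mathbf S}$ ending in $(\trsf(\st),\trsf(\smem))$, which by uniqueness of machine normal forms is \emph{the} final state of $\M_{\mathbf S}$; hence $\trsf(\smem)$ is that final memory. Since $\trsf$ acts on memories purely by renaming/discarding addresses along the transformation of $\orig$, never altering the value stored at an address still pointed to by a final position, restricting $\trsf(\smem)$ and $\smem$ to the addresses of final positions yields the same function, i.e.\ $\mem{S}=\sem{\smem}$.

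\emph{The main obstacle.} Everything except the last paragraph is a routine repackaging of the already‑proved interplay; the real work is the memory‑aware refinement. One must follow, through each rewriting step — and especially through the box‑opening steps $d$ and ${\it bot.el}$, where stable tokens and the box copies they witness disappear (Fact~\ref{fact:trsf}, Lemma~\ref{derTokens}) — which addresses remain live, verify that $\trsf$ never changes the value at a live address feeding a final position, and match the surviving live addresses with the decorated $\onelk$ nodes of the normal form. This is exactly a memory‑decorated version of the Soundness theorem (Theorem~\ref{soundness}), and is where essentially all of the case analysis lives.
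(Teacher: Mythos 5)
Your proposal is correct and takes essentially the same route as the paper: the ``iff'' is obtained by transporting the \SMELLY{} net/machine interplay (mutual termination, cut-elimination, the $\trsf$-based transfer of runs and of deadlock-freeness) verbatim to \PCF{} nets, and the equality $\mem{S}=\sem{\smem}$ is reduced to the observation that $\trsf$ preserves the memory values read through $\orig$ at final positions (in the $s.el$ case because, in a final state, the multitoken has already crossed the reduced sync node, so $\trsf$ merely copies the memory). The per-rule verification you flag at the end is precisely what the paper delegates to its stated fact about register preservation under $\trsf$, so there is no gap beyond the level of detail the paper itself adopts.
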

}
Assume $\R$ is a \PCF\ net of conclusion $\one$. We write $\R
\Downarrow n$ if $\R$ reduces to $\mathbf S$, where the value in the
memory corresponding to the unique $\onelk$ node in $\mathbf S$ is
$n$. Similarly we write  $\M_{\R} \Downarrow n$, where $n$ is the value pointed to by the unique final position in the final state of $\M_{\R}$.
\begin{thm}[Adequacy]
$\R \Downarrow n$ if and only if $\M_{\R} \Downarrow n$.
\end{thm}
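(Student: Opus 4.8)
The plan is to lift the interplay results of Section~\ref{interplay} to the register-machine setting, combine them with a direct inspection of the (trivial) normal forms, and transport the numeral along the memory-aware state transformation $\trsf$. Note first that a conclusion of type $\one$ is simple and carries no $?$, so every hypothesis used in Section~\ref{interplay} is met by (the underlying net of) $\R$ in its \PCF\ restatement.

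\emph{Termination dichotomy.} By the \PCF\ version of Lemma~\ref{net_termination}, $\M_\R$ terminates iff $\R$ admits no infinite reduction, and by the register-machine form of Theorem~\ref{SIAM deadlock free} a terminating run of $\M_\R$ ends in a \emph{final} state. So exactly one of two situations occurs: either $\R$ diverges and $\M_\R$ diverges, whence neither $\R\Downarrow n$ nor $\M_\R\Downarrow n$ holds for any $n$ and the statement is vacuous; or $\R\red^{*}\mathbf S$ for a unique normal form $\mathbf S=(S,\mem{S})$ and $\M_\R$ converges. It remains to treat this second case and pin down the numeral.

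\emph{The machine on the normal form.} By the \PCF\ analog of Corollary~\ref{cutel}, $S$ is a cut- and sync-free net built only from multiplicative nodes; since its sole conclusion has type $\one$, $S$ is exactly one $\onelk$ node whose output is the conclusion edge, and by definition of $\R\Downarrow n$ this $\onelk$ node is decorated with an address holding the value $n$ in $\mem{S}$. The run of $\M_{\mathbf S}$ is then immediate: from the empty initial state rule (i) spawns the unique $\ONES$ token on that edge, which is at the same time the unique final position, so the resulting state is final; no sync node is crossed, so the memory is untouched. Hence $\M_{\mathbf S}\Downarrow n$, for the very $n$ occurring in $\R\Downarrow n$.

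\emph{Transport back and the main obstacle.} Iterating the register-machine form of Lemma~\ref{trsf_prop} along $\R\red^{*}\mathbf S$, $\trsf$ sends the converging run of $\M_\R$ to the converging run of $\M_{\mathbf S}$; beyond the plain soundness statement (Theorem~\ref{soundness}), one needs that $\trsf$ preserves the \emph{value read at the final position}. This is the heart of the matter, established by the case analysis of Fig.~\ref{fig:trsfOthers} enriched with memories, checking that: (a) the only rewriting rule that changes the memory, $s.el$, changes $\mem{R}$ by exactly the $\upd$ that the register machine applies when the corresponding box-copy family of tokens crosses the now-surfaced sync node; (b) the memory-dependent $bot.el$ step and the machine's multi-$\bot$-box transition read the same value of the relevant cell; and (c) the copy bookkeeping via $\id(S)$ makes the machine's eager sync/test semantics agree with net rewriting's surface-only, lazy one. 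I expect (c) to be the main obstacle: reconciling the fact that the \SIAM\ fires every sync and test, at every depth and for every box copy, with net rewriting firing them only at the surface --- the sets $\id(S)$ are exactly what makes the two accounts of the memory coincide, and this must be verified hand in hand with the register-machine restatements of Lemma~\ref{trsf_prop} and Theorem~\ref{soundness}. Granting it, in the convergent case $\M_\R\Downarrow n'$ forces $n'=n$, so with the dichotomy above $\M_\R\Downarrow n\iff\M_{\mathbf S}\Downarrow n\iff\R\Downarrow n$.
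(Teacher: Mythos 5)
Your proposal is correct and follows essentially the same route as the paper: lift the mutual-termination/interplay results and Lemma~\ref{trsf_prop} to the register-machine setting, inspect the trivial cut- and sync-free normal form, and conclude via the fact that the memory component of the state transformation preserves the value read at the final position. The point you flag as the main obstacle is exactly the paper's register-preservation fact, which it discharges by building the relevant case split (whether the tokens have already crossed the sync node being eliminated, in which case $\trsf$ merely renames the memory, and otherwise applies the corresponding $\upd$ itself) directly into the definition of $\trsf$ on memories, so that the machine's and the net's accounts of the memory coincide by construction.
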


\subsection{The Call-by-Value Encoding}
In the call-by-value encoding of \PCF\ into \PCF\ nets, the shape of
the net corresponding to $x_1:A_1,\dots,x_n:A_n\PCFentail M:B$ is
\[
\includegraphics[page=2,width=5cm]{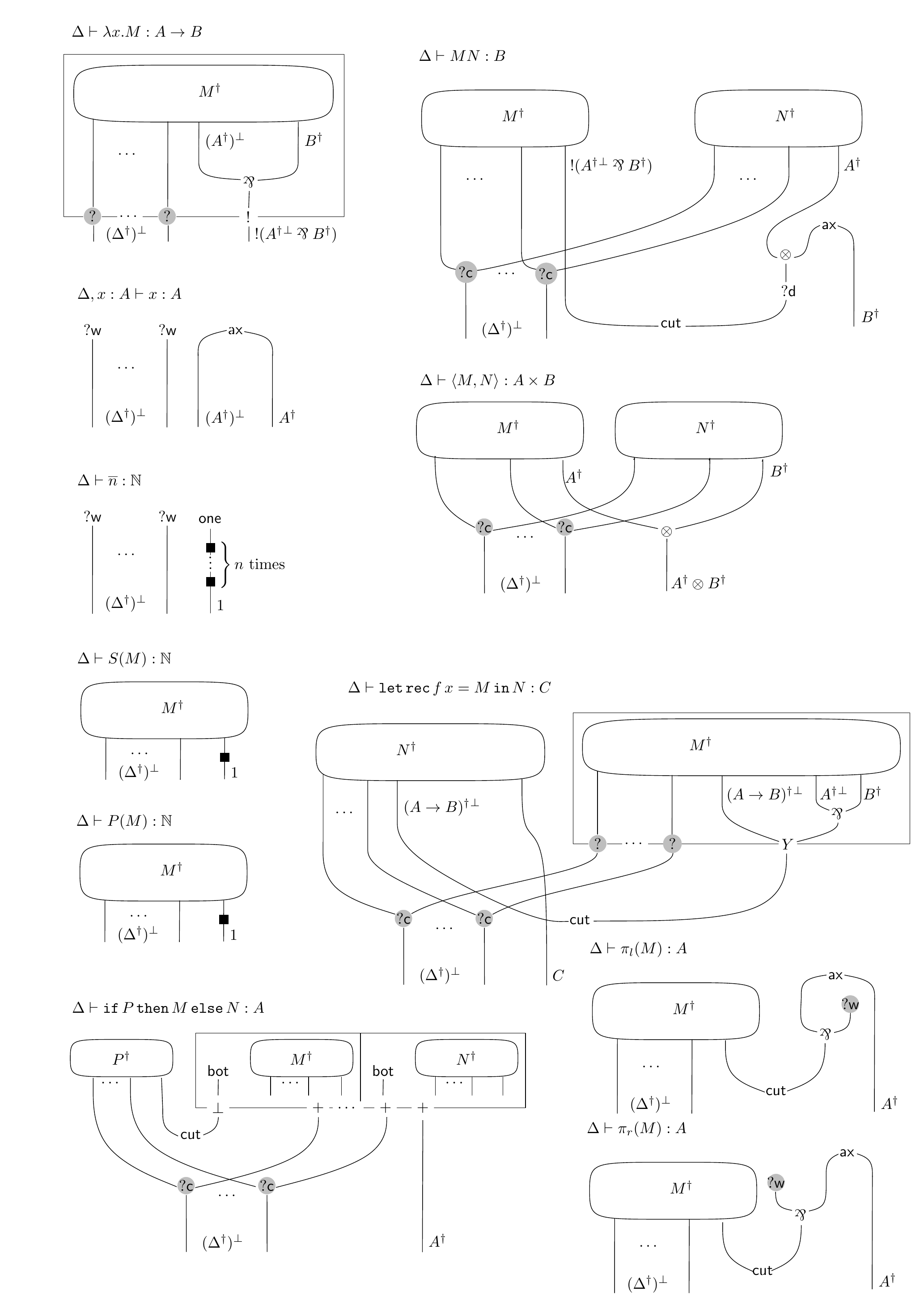}
\]
where $M^\dagger$ is a net and where $(\cdot)^\dagger$ is a mapping of types to
{\SMELLY} formulas, defined as follows:
\begin{align*}  \PCFnat^\dagger&:=~\one;\\
  (A\PCFarrow B)^\dagger&:=~{!(A^\dagger\b\parr B^\dagger)};\\
  (A\PCFprod B)^\dagger&:=~{A^\dagger}\tens{B^\dagger}.
\end{align*}

\noindent
In our translation, we have chosen to adopt an \emph{efficient}
encoding, rather than the usual call-by-value encoding. In other
words, we follow Girard's optimized translation of intuitionistic into
linear logic, which relies on properties of positive formulas
\cite{Girard87}\footnote{A good summary of the different translations
  is given at the address
  \url{http://llwiki.ens-lyon.fr/mediawiki/index.php/Translations_of_intuitionistic_logic}}.
We feel that this encoding is closer to call-by-value computation than
the non-efficient one; it however raises a small issue.  Notice in
fact that we map natural numbers into the type $1$, not $!1$. How
about duplication and erasure, then? We will handle this in the next
section, by using sync nodes, but let us first better clarify what the
issue is.

Girard's translation relies on the fact that $1$ and $!1$ are
logically equivalent (\ie, they are equivalent for
provability). However, this in itself is not enough to capture
duplication in our setting, because we need to also duplicate the
values in the memory, and not only the underlying net. We illustrate
this in Fig.~\ref{one_equiv}. The portion inside the dashed line
corresponds to a proof of $1\vdash {!1} $; when we look at an example
of its use (l.h.s. of the figure), we see that by using it we do
duplicate the node $\onelk$, but \emph{not} the value $n$ which is
associated to it. The value $n$ is not transmitted from the $1$ to the
$!1$ which is going to be duplicated. The logical encoding however
still correctly models weakening (r.h.s. of Fig.~\ref{one_equiv}).
\begin{figure}
\begin{center}
\fbox{
  \begin{minipage}{.47\textwidth}
    \begin{center}
    \includegraphics[width=7cm]{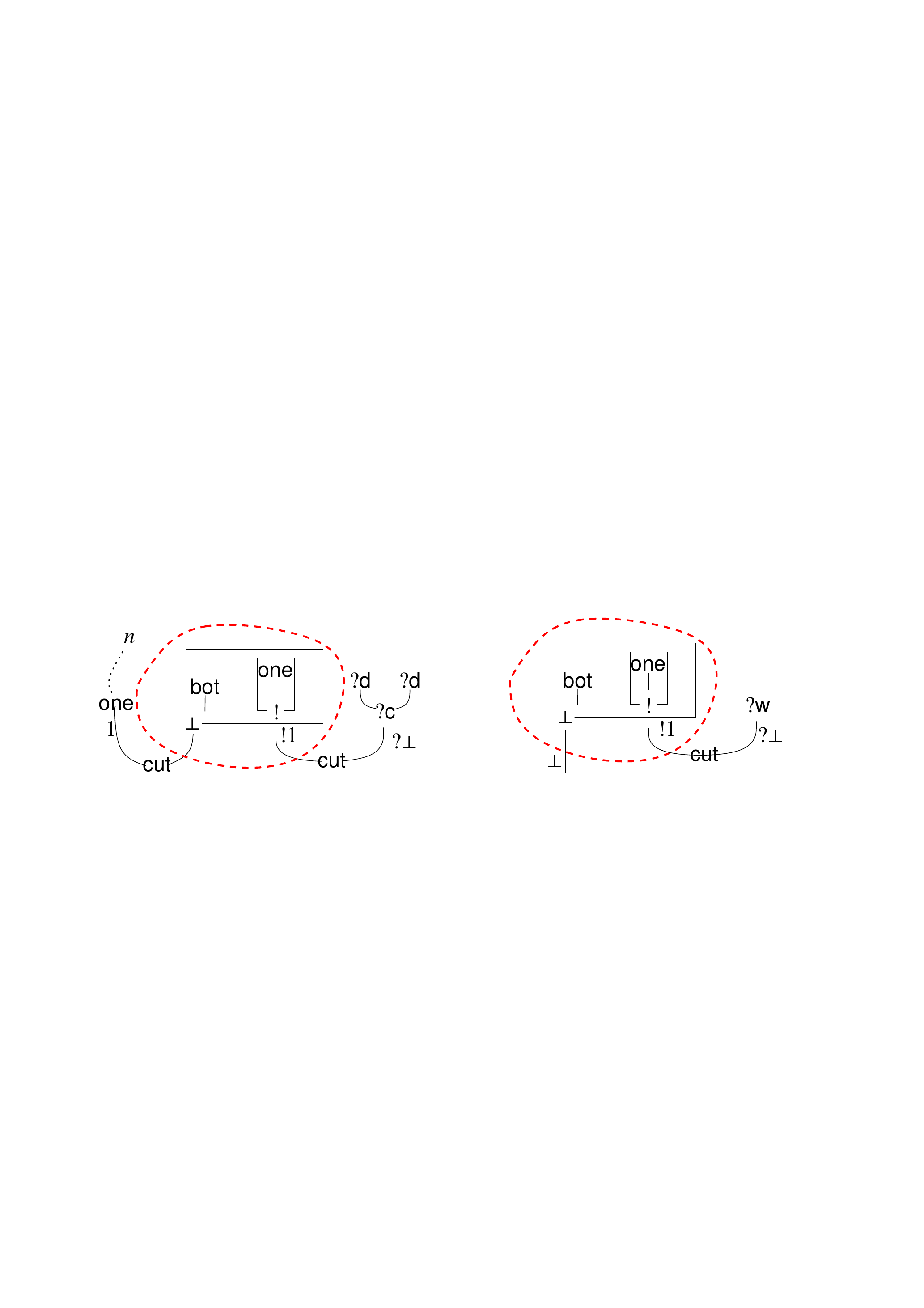}
    \end{center}
  \end{minipage}}
\end{center}
\caption{A Proof of $1\vdash {!1}$.}\label{one_equiv}
\end{figure}
\paragraph*{Exponential Rules and the Units}
The formula $\bot$ does not support contraction, weakening and
promotion ``out of the box'' in {\SMELLY} but it is nonetheless
possible to encode them as \PCF\ nets with the help of the binary sync
node {\tt max}.
\begin{figure}
\begin{center}
\fbox{
\begin{minipage}{.47\textwidth}
  \begin{center}
    \includegraphics[page=4,width=4cm]{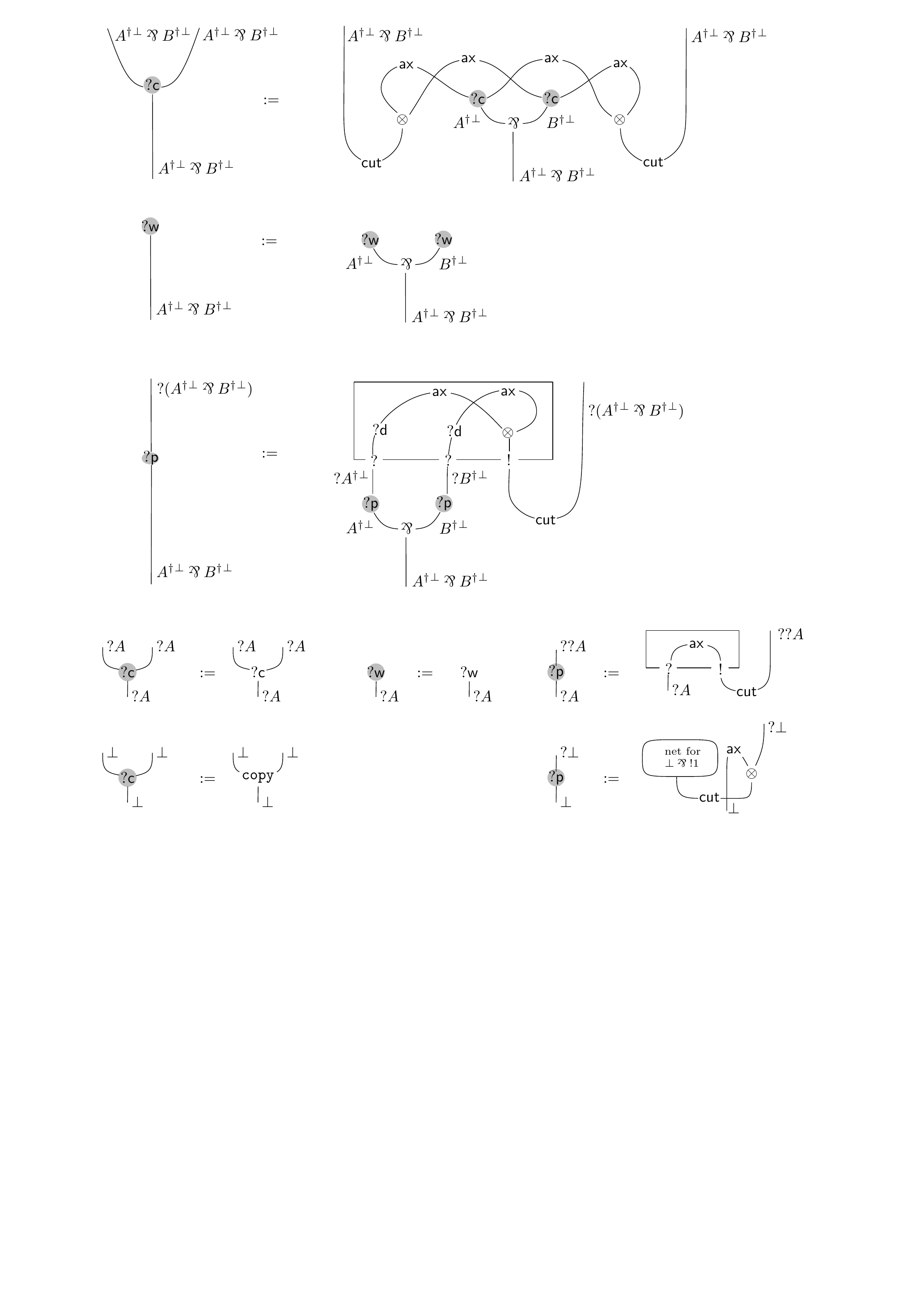}
  \end{center}
\end{minipage}}
\end{center}
\caption{Syntactic Sugar: Copying $\bot$.}\label{fig:copy-node}
\end{figure}
\begin{varitemize}
\item 
  {\em Contraction.} We encode contraction on $\bot$ by using a sync
  node {\tt max} and the syntactic sugar {\tt copy} defined in
  Fig.~\ref{fig:copy-node}. It duplicates the value associated to the
  incoming edge, and it does so in a call-by-value manner: it will
  only copy a $\onelk$ node (i.e. a result), not a whole
  computation. In particular, it should be noted that the rules of net
  rewriting are not modified.
\item 
  {\em Promotion.} We aim at the reduction(s) shown in
  Fig.~\ref{fig:prom-one-idea}: a $\onelk$ node with memory set to $n$
  is sent to a frozen computation (inside a $!$-box) computing the
  same $\onelk$ node. Since {\SMELLY} features recursion in the form
  of the $Y$-box, together with the copy operation already defined it
  is possible to write a net for the formula $\bot\parr{!\one}$, as
  shown in Fig.~\ref{fig:prom-one-net}.
\item 
  {\em Weakening.}  We can directly use the coding given on the
  r.h.s. of
  Fig.~\ref{one_equiv}.\\
\end{varitemize}
\begin{figure}
\begin{center}
\fbox{
  \begin{minipage}{.47\textwidth}
    \begin{center}
      \includegraphics[page=6,width=8.2cm]{PCFcbv-positive-nodes.pdf}
    \end{center}
  \end{minipage}}
\end{center}
\caption{Desired Behavior for the Mapping of $\one$ to ${!\one}$.}\label{fig:prom-one-idea}
\end{figure}
\begin{figure}
\begin{center}
\fbox{
  \begin{minipage}{.47\textwidth}
    \begin{center}
      \includegraphics[page=7,width=6cm]{PCFcbv-positive-nodes.pdf}
    \end{center}
  \end{minipage}}
\end{center}
\caption{\PCF\ Net Computing $\bot\! \parr {!1}$.}\label{fig:prom-one-net}
\end{figure}
\paragraph*{Exponential Rules for the Image $A^\dagger$ of any Type $A$}
The goal of this paragraph is to construct nets which behave like the
nodes $?c$, $?w$ and $?p$ of linear logic, this for any edge of type
$A^\dagger\b$.  For any type $A$, the formula $A^\dagger$ is a
multi-tensor of $\one$'s and $!$-ed types. We therefore construct the
grey contraction, weakening and promotion nodes inductively on the
structure of the type, as presented in Fig.~\ref{fig:cwp-pos-type}.
\begin{figure*}
\begin{center}
  \fbox{
    \begin{minipage}{14cm}
    \begin{center}
      \includegraphics[width=13cm,page=1]{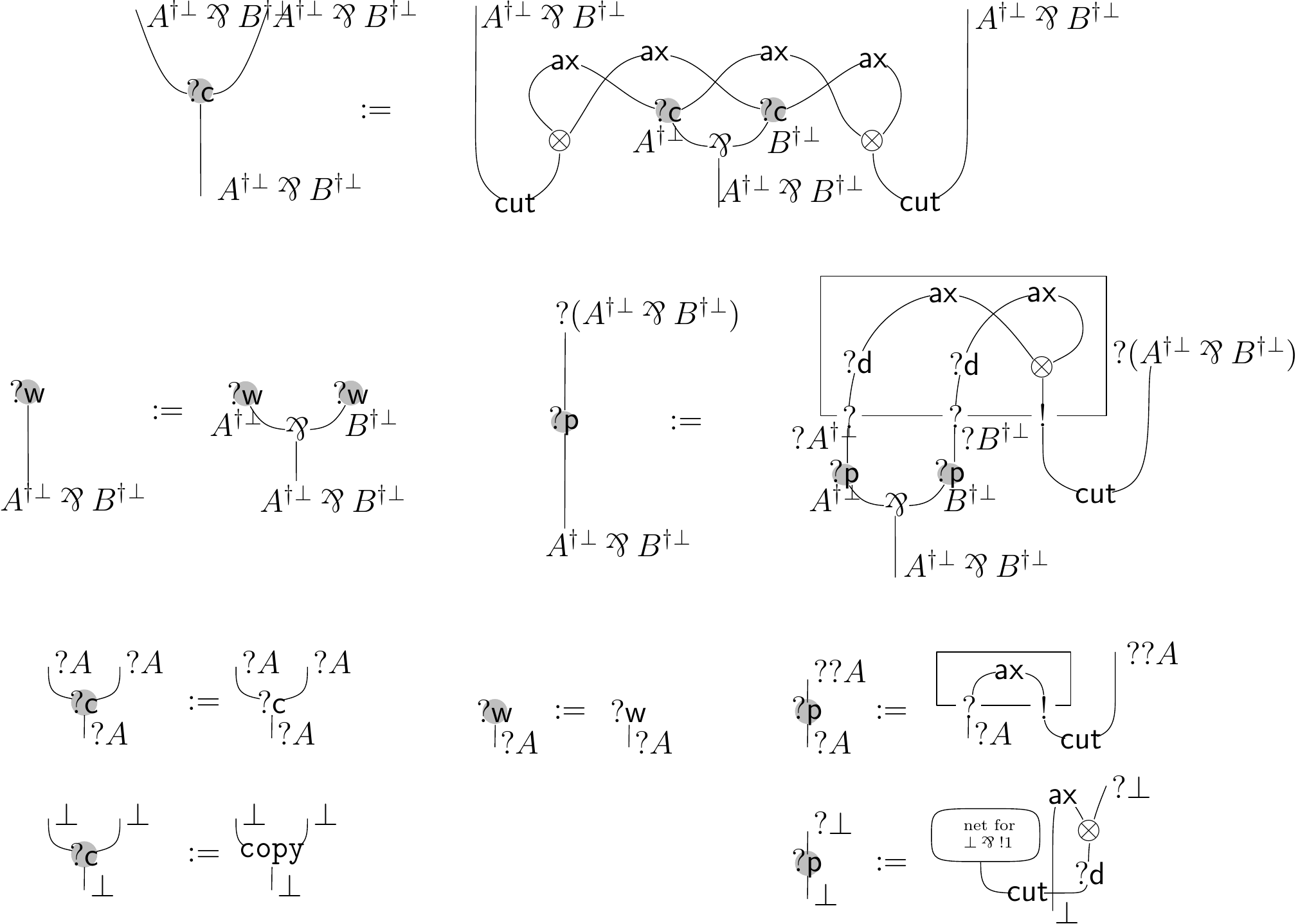}
    \end{center}
  \end{minipage}}
\end{center}
\caption{Inductive Definition of Contraction, Weakening and Promotion
  Nodes.}\label{fig:cwp-pos-type}
\end{figure*}
\paragraph*{Interpreting Typing Judgements}
Typing derivations are inductively mapped to \PCF\ nets as shown in
Fig.~\ref{fig:PCFcbv}. The grey nodes $\mathsf{?c}$ and $\mathsf{?w}$
were defined in Fig.~\ref{fig:cwp-pos-type} (the case of $?\bot$ has
been discussed above). The grey node ``$\mathsf{?}$'' is a shortcut
for the following construction:
\begin{center}
  \includegraphics[page=3,height=2cm]{PCFcbv-positive-nodes.pdf}
\end{center}
\begin{figure*}
\begin{center}
  \fbox{
    \begin{minipage}{15cm}
      \begin{center}
        \includegraphics[page=1,width=14cm]{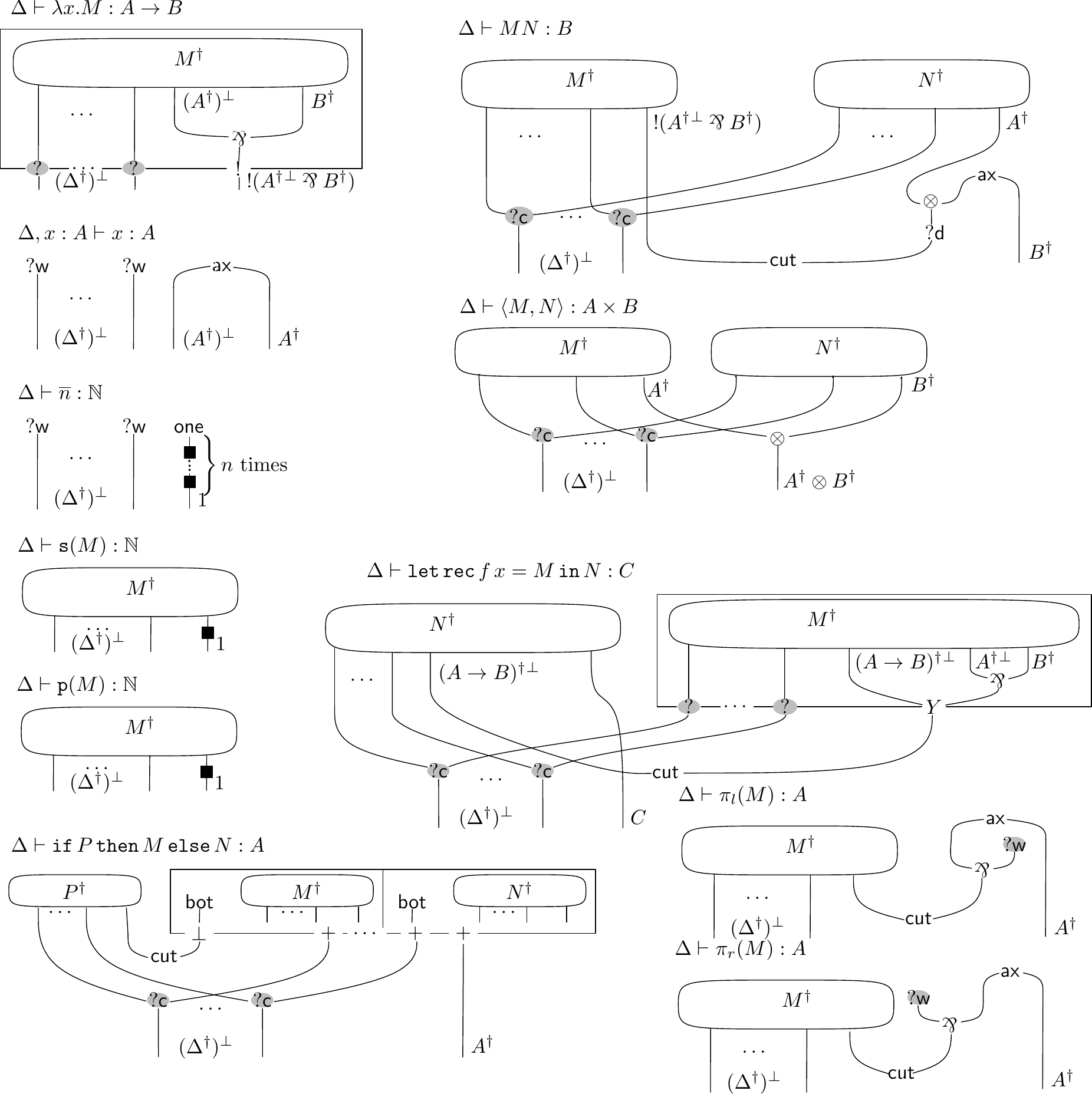}
      \end{center}
    \end{minipage}}
\end{center}
\caption{Call-by-Value Translation of \PCF\ into \PCF\ Nets.}\label{fig:PCFcbv}
\end{figure*}
\paragraph*{Adequacy}
We prove the following result, which relates the call-by-value
encoding into \PCF\ nets and the call-by-value reduction strategy for
terms:
\begin{theorem}
  \label{th:adequacy-cbv}
  Let $M$ be a closed term of type $\PCFnat$. Then $M\PCFcbv\PCFn{n}$
  if and only if $M^\dagger\Downarrow n$.
\end{theorem}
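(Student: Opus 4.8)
The plan is to prove adequacy by combining the simulation of call-by-value reduction at the level of nets with the adequacy result for \PCF\ machines already at hand (namely $\R\Downarrow n$ iff $\M_\R\Downarrow n$). The statement $M\PCFcbv\PCFn{n}$ iff $M^\dagger\Downarrow n$ then follows once we establish: (a) \emph{soundness} of the encoding with respect to reduction — if $M\PCFcbv N$ then $M^\dagger \red^* N^\dagger$ (or at least $M^\dagger$ and $N^\dagger$ have the same semantics via Corollary~\ref{basic_soundness} / Soundness, Theorem~\ref{soundness}), suitably extended to \PCF\ nets with memory; and (b) the base case — for the value $\PCFn{n}$, the net $\PCFn{n}^\dagger$ normalizes to a net consisting of a single decorated $\onelk$ node whose associated memory value is $n$, so $\PCFn{n}^\dagger\Downarrow n$.

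\textbf{Key steps, in order.} First I would verify the base case: inspecting the translation in Fig.~\ref{fig:PCFcbv}, the numeral $\PCFn{n}$ is sent to a net whose normal form is a single $\onelk$ node, with the memory holding $n$ at the corresponding address (obtained by $n$ applications of the successor sync node $\mathtt{s}$ starting from $\initop$); hence $\PCFn{n}^\dagger\Downarrow n$. Second, I would prove the simulation lemma: for each call-by-value axiom rule $M\PCFcbv N$ of Table~\ref{tab:cbvrw}, show $M^\dagger \red^+ N'$ where $N'$ and $N^\dagger$ agree (up to the $decor$ steps and the bookkeeping syntactic sugar for $\mathsf{?c},\mathsf{?w},\mathsf{?p}$ and {\tt copy}). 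The $\beta$-rule $(\lambda x.M)U\PCFcbv M\{x:=U\}$ uses that $U$ is a value, so $U^\dagger$ is (after surface reduction) a closed net of positive type which can be duplicated/erased by the grey exponential nodes — this is exactly why the efficient positive translation was chosen, and it is the crux. The projection, successor, predecessor, conditional, and $\mathtt{letrec}$ rules each match a corresponding net rewriting rule ($bot.el$ of the multi-$\bot$-box for $\PCFifzero{-}{-}{-}$, driven by $\test$; the $y$-rule for $\mathtt{letrec}$; sync steps $s.el$ with labels $\mathtt{s},\mathtt{p}$ for the arithmetic). Third, I would lift this to congruence: show that if $M^\dagger\red^* N^\dagger$ and $C[-]$ is a call-by-value evaluation context, then $C[M]^\dagger\red^* C[N]^\dagger$, using that the translation of a context places $M^\dagger$ in a position where surface reduction inside it is permitted (no box encloses the redex since the context is an evaluation context — this matches closed surface reduction). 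Finally, combining soundness with the facts that $\red$ preserves $\Downarrow n$ (Soundness/Invariance plus the memory-transformation $\trsf$ of Section~\ref{def:Transformation}) and that $\Downarrow n$ on nets matches $\Downarrow n$ on machines (Adequacy, preceding theorem), we conclude both directions: $M\PCFcbv\PCFn{n}$ gives $M^\dagger\red^*\PCFn{n}^\dagger\Downarrow n$; conversely, if $M^\dagger\Downarrow n$ then $M$ must call-by-value terminate (divergence of $M$ would, via the simulation run backwards together with Lemma~\ref{net_termination}, force $\M_{M^\dagger}$ to diverge), and by the value-case analysis its value is $\PCFn{n}$.

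\textbf{Main obstacle.} The hard part will be the $\beta$-step with a genuinely higher-order value $U$ whose type $A$ is not the base type: one must check that $U^\dagger$, once its surface cuts are normalized, really is a net that the grey contraction/weakening/promotion nodes of Fig.~\ref{fig:cwp-pos-type} can freely copy and freeze — and that doing so commutes with the memory (the {\tt copy} gadget duplicates only a $\onelk$ node \emph{and} propagates its memory value via the $\mathtt{max}$ sync node, as discussed around Fig.~\ref{one_equiv} and Fig.~\ref{fig:prom-one-net}). Making this precise requires a careful invariant: the translation of a value of any type reduces at surface to a net in which every $\onelk$ node is either already decorated or reachable, so that {\tt copy}/promotion behave as intended, and the memory is correctly threaded. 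Establishing this value-form invariant, and checking it is preserved by all the auxiliary reductions ($decor$, the unfoldings of the grey nodes, the $Y$-box for promotion), is where the real work lies; everything else is a routine case analysis matching term rules against net rules.
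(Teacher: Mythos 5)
The paper never actually spells out a proof of Theorem~\ref{th:adequacy-cbv} in the text you were given---its appendix only establishes the net/machine adequacy for \PCF{} nets ($\R\Downarrow n$ iff $\M_{\R}\Downarrow n$)---and your plan is precisely the intended completion: simulate each $\PCFcbv$ step by at least one surface reduction step on the \PCF{} net (up to administrative $\mathit{decor}$ and gadget steps), treat numerals as the base case, and conclude both directions via confluence/uniqueness of normal forms together with the machine-level adequacy, the converse using that divergence of the term forces an infinite net reduction. You also correctly locate the crux in the substitution case for higher-order values under the efficient positive translation, where duplication must pass through the grey contraction/weakening/promotion gadgets and thread the memory via $\mathtt{max}$; that value-form/substitution lemma is where the substance of a complete proof lies, and nothing else in your outline would fail once it is established. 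One small caution: your parenthetical fallback via Theorem~\ref{soundness} alone would not do, since that theorem relates a net to its own reduct (and $\sem{\cdot}$ as defined there ignores the memory), so the syntactic simulation you propose, phrased for program nets with their memories, is genuinely needed.
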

As a corollary, we conclude that the machine on $M^\dagger$ behaves as
$M$ in call-by-value.
\begin{corollary}
  Let $M$ be a closed term of type $\PCFnat$. Then $M$ call-by-value
  converges if and only if $\M_{M^\dagger}$ itself converges.
\end{corollary}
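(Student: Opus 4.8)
The plan is to derive the statement as a short chain of equivalences, using Theorem~\ref{th:adequacy-cbv} together with the preceding Adequacy theorem (``$\R\Downarrow n$ if and only if $\M_{\R}\Downarrow n$''), plus two bookkeeping observations spelling out what ``converges'' means on each side.

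First I would make precise convergence on the term side. For a closed term $M$ of type $\PCFnat$, call-by-value reduction is essentially deterministic (the reduction context isolating a unique redex at each step), and a standard progress argument shows that the only closed normal forms of type $\PCFnat$ are the numerals $\PCFn{n}$. Hence ``$M$ call-by-value converges'' is equivalent to ``there exists $n\in\NN$ with $M\PCFcbv\PCFn{n}$''.

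Next I would make precise convergence on the machine side. Since $M$ is closed of type $\PCFnat$, its translation $M^\dagger$ has a single conclusion, of type $\PCFnat^\dagger=\one$; in particular no $?$ occurs among its conclusions. Hence Theorem~\ref{SIAM deadlock free} applies (this deadlock-freeness property is inherited by the \PCF\ machine, whose transitions merely refine those of the \SIAM): every terminating run of $\M_{M^\dagger}$ ends in a \emph{final} state, and by Proposition~\ref{lem:diamProp} all runs then agree, terminating in the same final state, which carries a well-defined value $n$ (the value pointed to by the unique final position, as in the definition of $\M_{\R}\Downarrow n$). So ``$\M_{M^\dagger}$ converges'' is equivalent to ``$\M_{M^\dagger}\Downarrow n$ for some $n$''.

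It then suffices to compose the equivalences
\[
M \text{ converges}
\ \Longleftrightarrow\ \exists n.\ M\PCFcbv\PCFn{n}
\ \Longleftrightarrow\ \exists n.\ M^\dagger\Downarrow n
\ \Longleftrightarrow\ \exists n.\ \M_{M^\dagger}\Downarrow n
\ \Longleftrightarrow\ \M_{M^\dagger} \text{ converges},
\]
where the middle equivalence is Theorem~\ref{th:adequacy-cbv} and the third is the preceding Adequacy theorem. The single delicate point — and where I expect the (small amount of) real work to lie — is the deadlock-freeness step: one must ensure that no run of $\M_{M^\dagger}$ can get stuck in a non-final state, so that on the machine side divergence is genuinely the only alternative to convergence; this is exactly what Theorem~\ref{SIAM deadlock free} delivers, once one observes that the conclusion of $M^\dagger$ is the formula $\one$.
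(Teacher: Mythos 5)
Your proposal is correct and follows essentially the paper's intended route: the corollary is obtained by composing Theorem~\ref{th:adequacy-cbv} with the machine adequacy theorem ($\R\Downarrow n$ iff $\M_{\R}\Downarrow n$), exactly as in your chain of equivalences. The only minor remark is that the appeal to Theorem~\ref{SIAM deadlock free} is not really needed for this statement, since $\M_{M^\dagger}\Downarrow n$ already means, by definition, that the machine terminates in a \emph{final} state, so the last equivalence is essentially definitional (deadlock-freeness only tells you that divergence is the sole alternative to convergence, which the corollary does not require).
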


\subsection{The Call-by-Name Encoding}
Besides the encoding of call-by-value \PCF, which is non-standard, and
has thus been described in detail, program nets also have the
expressive power to encode call-by-\emph{name} \PCF. The encoding is
the usual one: a proof net corresponding to
$x_1:A_1,\dots,x_n:A_n\PCFentail M:B$ has conclusions
$\{?A_1^*\b,\ldots, ?A_n^*\b, B^*\}$
\[
\includegraphics[page=2,width=5cm]{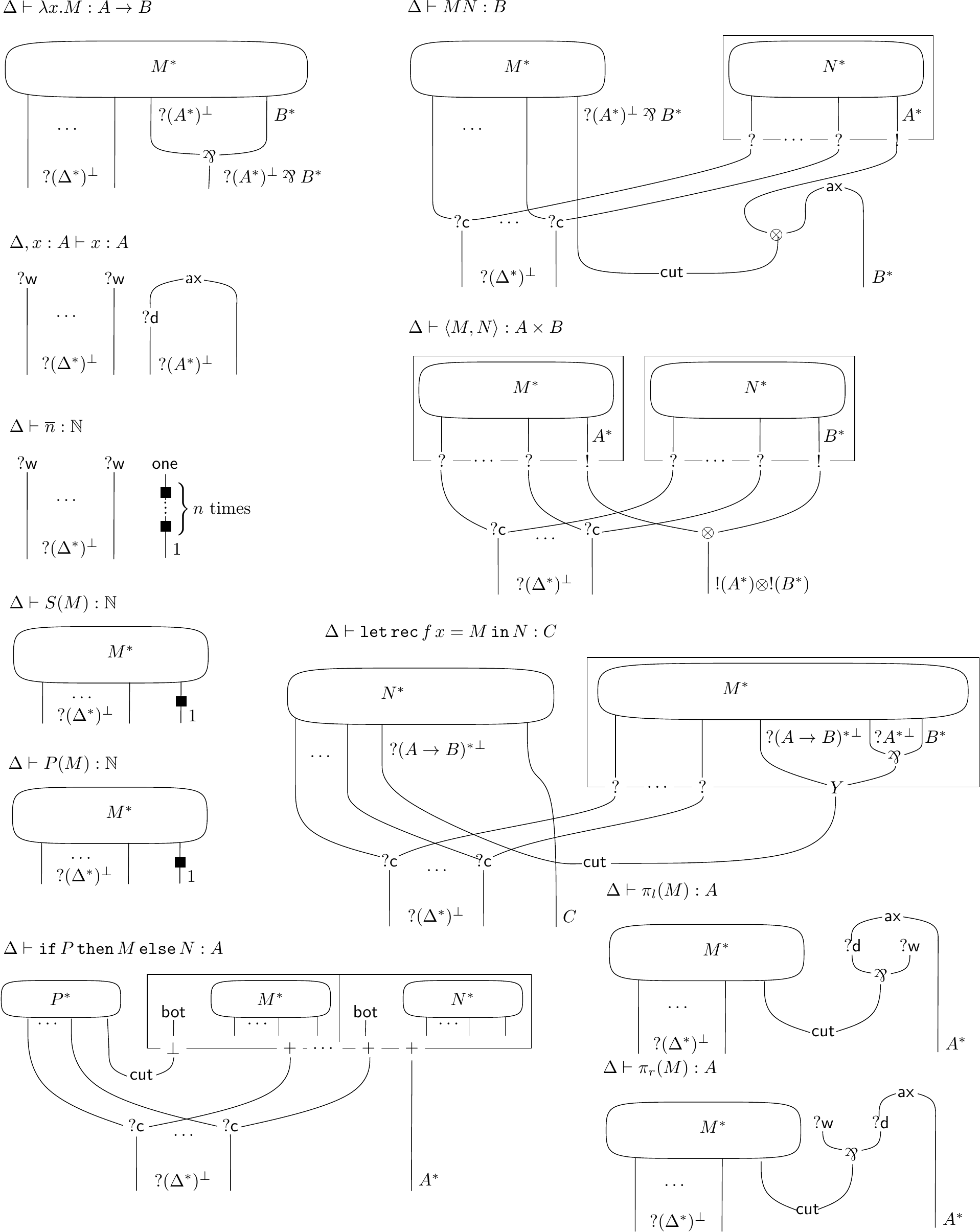}
\]
where $(\cdot)^*$ is a mapping
of types to {\SMELLY} formulas:
\begin{align*}
\PCFnat^*&{}:=~\one;\\
  (A\PCFarrow B)^* &{}:=~ ?(A^*)\b\parr B^*;\\
  (A\PCFprod B)^* &{}:=~ {!(A^*)}\tens{!(B^*)}.
\end{align*}
\noindent Typing derivations are mapped to \PCF{} nets essentially in
the standard way, and presented in Figure~\ref{fig:PCFcbn}. Note that
unlike the call-by-value translation, since every context is always in
$?$-form, we do not need special weakening, contraction and promotion
nodes.
Then, as in the previous section, one can relate the
call-by-name encoding in \PCF\ nets and the call-by-name reduction
strategy for terms.

\begin{figure*}
\begin{center}
  \fbox{
    \begin{minipage}{15cm}
      \begin{center}
        \includegraphics[page=1,width=14cm]{PCFcbn}
      \end{center}
    \end{minipage}}
\end{center}
\caption{Call-by-Name Translation of \PCF\ into \PCF\ Nets.}\label{fig:PCFcbn}
\end{figure*}

\begin{theorem}[Adequacy]
  \label{th:adequacy-cbn}
  Let $M$ be a closed term of type $\PCFnat$. Then $M\PCFcbn\PCFn{n}$
  if and only if $M^*\Downarrow n$.
\end{theorem}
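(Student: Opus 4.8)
The plan is to proceed in the standard two-step style used for GoI adequacy results, exploiting the machinery already set up in the preceding sections: namely the fact that the \PCF{} machine inherits soundness/invariance from the \SIAM{} (Corollaries~\ref{thm:termination}, \ref{basic_soundness}, Theorem~\ref{soundness} and their \PCF-machine analogues), and the Adequacy theorem relating $\R\Downarrow n$ and $\M_\R\Downarrow n$. The first step is to establish a \emph{simulation} of call-by-name reduction by net rewriting: if $M\PCFcbn N$ then $M^*$ and $N^*$ have the same interpretation, i.e.\ $M^*\red^* N^*$ up to the equivalence induced by $\sem{\cdot}$ (more precisely, one shows that reduction of a \PCF\ redex translates to a sequence of $\red$-steps producing a net with the same semantics, using Theorem~\ref{soundness} to absorb any auxiliary administrative reductions). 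The second step is to handle \emph{values}: one checks directly that for a numeral, $\PCFn{n}^*$ is (or reduces to) a net whose unique surface $\onelk$ node carries the value $n$ in memory, so that $\PCFn{n}^*\Downarrow n$, and conversely that a closed normal form of type $\PCFnat$ whose net converges with memory value $n$ must be the numeral $\PCFn{n}$.

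First I would set up the simulation lemma. For each call-by-name axiom rule in Table~\ref{tab:cbnrw} I would verify, by inspecting the translation in Fig.~\ref{fig:PCFcbn}, that the left-hand side translates to a net containing a surface cut (a $\lambda/@$ cut for $\beta$, a pairing/projection cut, a $\PCFsucc$/$\PCFpred$ sync redex against a $\onelk$, a conditional $\bot$-box redex, a $Y$-box redex for $\mathtt{letrec}$) and that firing that redex — together with the ensuing cascade of multiplicative, dereliction, box-opening and sync-elimination steps — yields exactly $N^*$, or a net with the same semantics. The $\mathtt{letrec}$ case uses the $y$-rule exactly as illustrated in the excerpt around Fig.~\ref{fig:Yexample}; the $\PCFifzero$ case uses the multi-$\bot$-box transition of Fig.~\ref{SIAMmultibox} together with the $\test$ operation, matching the two branches of the reduction rule according to whether the argument evaluates to $\PCFzero$. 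For the congruence rule $C[M]\PCFcbn C[N]$ I would argue by induction on the call-by-name reduction context: each context former ($C[-]N$, $\PCFlproj{C[-]}$, $\PCFsucc(C[-])$, $\PCFifzero{C[-]}{M}{N}$, etc.) translates to a net context in which the translated hole sits at the surface and is not inside any box, so the redex fired in $M$ is still a surface (hence valid) redex in $C[M]^*$; here I would invoke Theorem~\ref{main_lem} / Corollary~\ref{cutel} to know such surface reductions are always available, and Theorem~\ref{soundness} to conclude $C[M]^*$ and $C[N]^*$ have equal semantics.

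Combining the simulation lemma with the \PCF-machine Adequacy theorem ($\R\Downarrow n$ iff $\M_\R\Downarrow n$) and Invariance finishes the argument: if $M\PCFcbn^* \PCFn{n}$ then iterating the simulation lemma gives $\sem{M^*}=\sem{\PCFn{n}^*}$, and since $\PCFn{n}^*$ converges with memory value $n$ we get $M^*\Downarrow n$; conversely, if $M^*\Downarrow n$, then by Adequacy for nets $M^*$ has a normal form $S$ of conclusion $\one$ with memory value $n$, and by the simulation (using confluence, Proposition~\ref{net_conf}, and normalization of call-by-name evaluation on a term of ground type that converges) $M$ call-by-name evaluates to some value $U$ of type $\PCFnat$ with $U^*$ having the same semantics, forcing $U=\PCFn{n}$. \textbf{The main obstacle} I anticipate is the simulation lemma itself, and specifically two of its facets: (i) checking that each \PCF\ reduction step corresponds not to a single $\red$-step but to a bounded cascade of administrative net reductions, and that the leftover net is semantically (not necessarily syntactically) equal to $N^*$ — this is where Theorem~\ref{soundness} does the real work and where the bookkeeping is heaviest; and (ii) ensuring that in the $\mathtt{letrec}$/$Y$-box case the interaction between box-opening, the dereliction tokens, and the memory decoration is faithfully preserved, since this is precisely the delicate point the paper flags as non-trivial. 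Verifying that reductions inside call-by-name contexts remain \emph{surface} reductions after translation (so that the deadlock-freeness machinery applies) is the other place where care is needed, though the shape of the context grammar makes this essentially syntactic.
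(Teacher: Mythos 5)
Your overall route (a simulation lemma relating $\PCFcbn$-steps to net reduction, a base case for numerals, and then transfer through the net/machine adequacy) is indeed the intended one; the paper itself does not spell out a proof of Theorem~\ref{th:adequacy-cbn} (it defers to the call-by-value section), so I can only assess your plan on its own terms, and there is one point where it would not go through as written. You propose to let the simulation produce "a net with the same semantics" as $N^*$ and to let Theorem~\ref{soundness} "absorb administrative reductions". But $\sem{\cdot}$ and Theorem~\ref{soundness} live at the level of \SMELLY{} nets and the \SIAM: they speak only about the position-to-position partial function and say nothing about the memory. The statement $M^*\Downarrow n$ is a statement about the \emph{value stored in the memory} of the normal form of a decorated program net, so equality of $\sem{\cdot}$ cannot certify the numeral $n$; moreover, semantic equality between two nets that are not related by $\red$ gives you no way to compare their normal forms at all. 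The fix is to make the simulation syntactic and memory-aware: prove that if $M\PCFcbn N$ (for $M$ closed) then $M^*\red^+ N^*$ \emph{as program nets}, memories included. This is actually feasible precisely because call-by-name evaluation contexts never go under a binder, so the argument of any contracted $\beta$-redex is closed, its translation is a closed $!$-box, and all the exponential steps (dereliction, contraction, weakening, absorption into other boxes) are legal closed surface reductions producing exactly the translation of the substituted term; the $\mathtt{letrec}$, $\PCFsucc/\PCFpred$ and $\PCFifzero$ cases are handled by the $y$-rule, the $s.el$-rule with its $\upd$ on the memory, and the $\mathit{decor}$ plus multi-\bbox{} $bot.el$-rule with $\test$, respectively. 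Once you have $M^*\red^+ N^*$, the forward direction follows from confluence and uniqueness of normal forms (Proposition~\ref{net_conf}) together with the numeral base case, with no appeal to $\sem{\cdot}$ at all.

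Two smaller remarks. First, your converse direction is phrased somewhat circularly ("normalization of call-by-name evaluation on a term of ground type that converges" is close to what is being proved); the clean argument is the contrapositive: each $\PCFcbn$-step is simulated by at least one $\red$-step, so if call-by-name evaluation of $M$ diverged then $M^*$ would have an infinite reduction, contradicting $M^*\Downarrow n$ via Proposition~\ref{net_conf}(2); by progress/type soundness the terminating evaluation ends in a closed value of type $\PCFnat$, i.e.\ a numeral $\PCFn{m}$, and confluence of $\red$ forces $m=n$. Second, invoking Theorem~\ref{main_lem} and Corollary~\ref{cutel} in the congruence case is unnecessary (and Theorem~\ref{main_lem} is stated for simple nets anyway): for the simulation you exhibit the surface redex directly, using exactly the observation you already make, namely that the hole of a call-by-name context is translated at depth~$0$ and outside every box.
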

As a corollary, one can show that the  machine on $M^*$ behaves
as $M$ in call-by-name.
\begin{corollary}
  Let $M$ be a closed term of type $\PCFnat$. Then $M$ converges in
  call-by-name if and only if the register machine $\M_{M^*}$ itself converges.
\end{corollary}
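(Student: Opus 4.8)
The plan is to obtain this as a routine consequence of Theorem~\ref{th:adequacy-cbn} together with the definition of $\Downarrow n$ and the confluence/deadlock-freeness results for the machine. First I would unfold what ``$M$ converges in call-by-name'' means. Since $M$ is a closed term of type $\PCFnat$, standard syntactic properties of \PCF{} (subject reduction, plus the fact that the only closed values of type $\PCFnat$ are the numerals $\PCFn{n}$) give that the call-by-name evaluation of $M$ either diverges or terminates at a numeral $\PCFn{n}$ for a unique $n$. Hence ``$M$ converges in call-by-name'' is equivalent to ``$M\PCFcbn^{*}\PCFn{n}$ for some $n$''.

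Next, by Theorem~\ref{th:adequacy-cbn} we have $M\PCFcbn^{*}\PCFn{n}$ if and only if $M^{*}\Downarrow n$, i.e.\ the \PCF{} net $M^{*}$ reduces to a cut- and sync-free normal form whose (unique, decorated) $\onelk$ node carries the value $n$ in the memory. By the Adequacy theorem for register machines ($\R\Downarrow n$ iff $\M_{\R}\Downarrow n$), this is in turn equivalent to $\M_{M^{*}}\Downarrow n$, meaning that $\M_{M^{*}}$ terminates in a final state whose unique final position points to $n$.

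It then remains to identify ``$\M_{M^{*}}\Downarrow n$ for some $n$'' with ``$\M_{M^{*}}$ converges''. Since $M$ is closed of type $\PCFnat$, the net $M^{*}$ has the single conclusion $\one$, so no $?$ occurs among its conclusions; hence Theorem~\ref{SIAM deadlock free} applies (register machines inherit the dynamics, and hence the deadlock-freeness, of the \SIAM), and whenever a run of $\M_{M^{*}}$ terminates it does so in a \emph{final} state. By Proposition~\ref{lem:diamProp} all runs then behave alike, so $\M_{M^{*}}$ either diverges or converges to a unique final state; in the latter case the underlying net is the cut-free \MLL{} net of Corollary~\ref{cutel} with conclusion $\one$, whose relevant $\onelk$ node points to a definite value $n$, i.e.\ $\M_{M^{*}}\Downarrow n$. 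Chaining the three equivalences yields the statement.

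The only point requiring genuine care is the very first step, namely relating the operational notion ``$M$ converges in call-by-name'' to the existence of a numeral normal form; this rests on standard \PCF{} metatheory rather than on anything specific to the present machinery. Everything else is bookkeeping layered on top of the adequacy theorems and the basic properties of the \SIAM{} already established, so I expect no real obstacle.
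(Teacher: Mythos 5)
Your argument is correct and follows exactly the route the paper intends (and leaves implicit): unfold call-by-name convergence of a closed $\PCFnat$-term as reduction to a numeral via standard \PCF{} metatheory, chain Theorem~\ref{th:adequacy-cbn} with the machine adequacy theorem ($\R \Downarrow n$ iff $\M_{\R}\Downarrow n$), and close the loop using deadlock-freeness and confluence of the machine. Nothing essential is missing; the appeal to Theorem~\ref{SIAM deadlock free} is consistent with the paper's claim that register machines inherit the \SIAM's properties.
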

\section{Conclusions}
We have shown how the multitoken paradigm not only works well in the
presence of exponential and fixpoints, but also allows us to treat
different evaluation strategies in a uniform way. Some other
interesting aspects which emerged along the last section are worth
being mentioned.

In the call-by-value encoding of \PCF, we have used binary \emph{sync
  nodes} in an essential way, to duplicate values in the register:
without them, the efficient encoding of natural numbers would not have
been possible. This shows that sync nodes can indeed have an
interesting computational role besides reflecting entanglement in
quantum computation~\cite{lics2014}. In the future, we plan to further
the potential of such an use, in particular in view of
efficient implementations.

A key feature of \SMELLY{} nets rewriting is that it is
\emph{surface}. Surface reduction allows us to interpret recursion,
but how much do we lose by considering surface reduction instead of
usual cut-elimination? We think that a simple way to understand the
limitations of surface reduction is to consider an analogy to
Plotkin's weak reduction. In \PCF, $\lambda x.\Omega$ is a normal
form. As a consequence one loses, \eg, some nice results about the
shape of normal forms in the $\lambda$-calculus (which, in logic,
corresponds to the subformula property). In presence of fixpoints,
however, this is a necessary price to pay. Otherwise, any term
including a fixpoint would diverge. Of course there is much more to be
said about all this, and we refer the reader to, \eg, the work by
Simpson~\cite{SimpsonRTA2005}.

\section*{Acknowledgments}
The first author is partially supported by the
ANR project 12IS02001 PACE. The second and third authors were supported by the project
ANR-2010-BLANC-021301 ``LOGOI''.
\bibliographystyle{abbrv}
\bibliography{biblio}

\begin{thebibliography}{10}

\bibitem{AbramskyJM00}
S.~Abramsky, R.~Jagadeesan, and P.~Malacaria.
\newblock Full abstraction for {PCF}.
\newblock {\em Inf. Comput.}, 163(2):409--470, 2000.

\bibitem{AbramskyM97}
S.~Abramsky and G.~McCusker.
\newblock Call-by-value games.
\newblock In {\em Computer Science Logic, Selected Papers}, pages 1--17, 1997.

\bibitem{Asperti}
A.~Asperti and G.~Dore.
\newblock Yet another correctness criterion for multiplicative linear logic
  with mix.
\newblock In {\em Logical Foundations of Computer Science}, pages 34--46.
  Springer, 1994.

\bibitem{lics2014}
U.~Dal~Lago, C.~Faggian, I.~Hasuo, and A.~Yoshimizu.
\newblock The geometry of synchronization.
\newblock In {\em Proceedings of the Joint Meeting CSL-LICS 2014}, pages
  35:1--35:10, 2014.

\bibitem{lics2015}
U.~Dal~Lago, C.~Faggian, B.~Valiron, and A.~Yoshimizu.
\newblock Parallelism and synchronization in an infinitary context (long
  version).
\newblock In {\em Proceedings of LICS 2015}, 2015.

\bibitem{DanosRegnierMult}
V.~Danos and L.~Regnier.
\newblock The structure of multiplicatives.
\newblock {\em Archive for Mathematical logic}, 28(3):181--203, 1989.

\bibitem{DanosRegnier}
V.~Danos and L.~Regnier.
\newblock Reversible, irreversible and optimal lambda-machines.
\newblock {\em Theor. Comput. Sci.}, 227(1-2):79--97, 1999.

\bibitem{Ghica07}
D.~R. Ghica.
\newblock Geometry of synthesis: a structured approach to {VLSI} design.
\newblock In {\em POPL}, pages 363--375, 2007.

\bibitem{GhicaS10}
D.~R. Ghica and A.~Smith.
\newblock Geometry of synthesis {II}: From games to delay-insensitive circuits.
\newblock {\em Electr. Notes Theor. Comput. Sci.}, 265:301--324, 2010.

\bibitem{GhicaSS11}
D.~R. Ghica, A.~Smith, and S.~Singh.
\newblock Geometry of synthesis iv: compiling affine recursion into static
  hardware.
\newblock In M.~M.~T. Chakravarty, Z.~Hu, and O.~Danvy, editors, {\em ICFP},
  pages 221--233. ACM, 2011.

\bibitem{Girard87}
J.-Y. Girard.
\newblock Linear logic.
\newblock {\em Theor. Comput. Sci.}, 50:1--102, 1987.

\bibitem{Girard89}
J.-Y. Girard.
\newblock Geometry of interaction {I}: Interpretation of system {F}.
\newblock {\em Logic Colloquium 88}, 1989.

\bibitem{HoshinoMH14}
N.~Hoshino, K.~Muroya, and I.~Hasuo.
\newblock Memoryful geometry of interaction: from coalgebraic components to
  algebraic effects.
\newblock In {\em Proceedings of the Joint Meeting CSL-LICS 2014}, page~52,
  2014.

\bibitem{HylandO00}
J.~M.~E. Hyland and C.~L. Ong.
\newblock On full abstraction for {PCF:} {I}, {II}, and {III}.
\newblock {\em Inf. Comput.}, 163(2):285--408, 2000.

\bibitem{LaurentTorino}
O.~Laurent.
\newblock An introduction to proof nets.
\newblock Available at \url{http://perso.ens-lyon.fr/olivier.laurent/pn.pdf}.

\bibitem{phdmackie}
I.~Mackie.
\newblock {\em Applications of the Geometry of Interaction to language
  implementation}.
\newblock Phd thesis, University of London, 1994.

\bibitem{Mackie95}
I.~Mackie.
\newblock The geometry of interaction machine.
\newblock In {\em POPL}, pages 198--208, 1995.

\bibitem{MaraistOTW95}
J.~Maraist, M.~Odersky, D.~N. Turner, and P.~Wadler.
\newblock Call-by-name, call-by-value, call-by-need and the linear lambda
  calculus.
\newblock {\em Electr. Notes Theor. Comput. Sci.}, 1:370--392, 1995.

\bibitem{Montelatici03}
R.~Montelatici.
\newblock Polarized proof nets with cycles and fixpoints semantics.
\newblock In {\em {TLCA}}, pages 256--270, 2003.

\bibitem{PediciniQ07}
M.~Pedicini and F.~Quaglia.
\newblock {PELCR:} parallel environment for optimal lambda-calculus reduction.
\newblock {\em {ACM} Trans. Comput. Log.}, 8(3), 2007.

\bibitem{Pinto01}
J.~S. Pinto.
\newblock Parallel implementation models for the lambda-calculus using the
  geometry of interaction.
\newblock In {\em TLCA}, pages 385--399, 2001.

\bibitem{Plotkin}
G.~D. Plotkin.
\newblock {LCF} considered as a programming language.
\newblock {\em Theor. Comput. Sci.}, 5(3):223--255, 1977.

\bibitem{Schopp14}
U.~Sch{\"{o}}pp.
\newblock Call-by-value in a basic logic for interaction.
\newblock In {\em Proceedings of Programming Languages and Systems}, pages
  428--448, 2014.

\bibitem{SelingerValiron05}
P.~Selinger and B.~Valiron.
\newblock A lambda calculus for quantum computation with classical control.
\newblock In {\em TLCA}, pages 354--368, 2005.

\bibitem{SimpsonRTA2005}
A.~K. Simpson.
\newblock Reduction in a linear lambda-calculus with applications to
  operational semantics.
\newblock In {\em {RTA} 2005}, pages 219--234, 2005.

\bibitem{esop}
A.~Yoshimizu, I.~Hasuo, C.~Faggian, and U.~Dal~Lago.
\newblock Measurements in proof nets as higher-order quantum circuits.
\newblock In {\em ESOP}, pages 371--391, 2014.

\end{thebibliography}

 \appendix
 
 \section{Appendix}
 
 \subsection{\SMELLY: Proof of Theorem~\ref{main_lem} }\label{app:SMELLY}

 In this section we prove Theorem~\ref{main_lem}. We focus on  \SMELLY\ nets, however it is important to observe that \emph{the 
 exact same constructions, results and proofs hold also for program nets (PCF nets in particular)} as defined in Section~\ref{sec:beyond}, because the number of structures associated to a $\bot$-node plays no role at any point; the proof that a reduction is always possible is only concerned with the nodes at the surface.
 
 All along this section, we assume that  $R$ is a simple \SMELLY\ net; therefore,  no symbol  $\bot$, $?$ or $!$ appears in the conclusions.
 We say that an axiom is \emph{polarized} if its conclusions are polarized formulas, \ie\ the axiom has the form $P\b,P \}$.  W.l.o.g, we assume that \emph{all axioms are atomic}; this hypothesis is not necessary, but it limits the number of cases in the proofs. All polarized axioms have therefore the form  $\{\bot,1\}$.
 We say that\emph{ a net $R$ is in SM normal form} if no sync reduction and no multiplicative reduction  is possible.\\
 %

  The proof of Thoeorem~\ref{main_lem} relies on the definition of a strict partial order on the set $\topnodes_R$, which we define as follows.
    \begin{center}
      $\topnodes_R :=$ the set of the following nodes at depth 0 in  $R$:
       \emph{sync nodes,  
         boxes, and polarized axioms}. 
     \end{center}
 We first observe the following two facts (both are immediate because of the typing of the nodes).

 \begin{lemma}[Sync Normal Forms]\label{sync_norm_form}
 If no $s$ reduction  applies, then the only nodes which can be  above a $sync$ node  are of type:  sync, \bbox, polarized axiom or  $\onelk$.
 \end{lemma}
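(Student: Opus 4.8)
The statement to prove is Lemma~\ref{sync_norm_form}: if no $s$-reduction applies to $R$, then the only nodes which can occur immediately above a sync node are of type sync, \bbox, polarized axiom, or $\onelk$. The key observation is that this is a purely local, typing-driven fact. By the typing constraints in Fig.~\ref{SMELLYnets}, every premiss of a sync node has a \emph{positive} type, \ie\ a formula built from $\one$ by $\otimes$ only. So I would argue node by node: look at each possible node sort and ask whether one of its \emph{conclusions} can carry a positive type.

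\textbf{Key steps.} First I would list the nodes whose conclusions could possibly be positive: the only formula-constructors appearing in positive formulas are $\one$ and $\otimes$, and the only atomic positive formula is $\one$. Hence a node whose conclusion is positive must be a $\onelk$ node (conclusion $\one$), a $\otimes$ node (conclusion $A\otimes B$, which is positive exactly when both $A$ and $B$ are), an axiom (whose conclusions are $\{P^{\bot},P\}$ — but here $R$ is simple so all axioms are atomic, hence if one conclusion is positive, it is $\one$, and the other is $\bot$, \ie\ the axiom is polarized), a cut node (no conclusions, so irrelevant: nothing is ``above'' via a cut in the relevant sense — or rather a $\cut$ has no conclusion edge, so it cannot sit above a sync node), a box (the principal conclusion of a $\bot$-box has type $\bot$, which is \emph{not} positive, so a $\bot$-box cannot be above a sync via its principal door; but an auxiliary door of any box could carry a positive type — I need to handle this), or a sync node (all its conclusions are positive by definition). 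This already rules out $\parr$, $\otimes$-with-a-non-positive-component, $?d$, $?c$, $?w$, $!$-boxes and $Y$-boxes \emph{on their principal door}, and $\bot$-nodes. Second, I would deal with the remaining cases that a priori survive: a $\otimes$-node with both premisses positive, an auxiliary door of a box, and a $\onelk$-node reached not directly. A $\otimes$-node with positive conclusion is handled by appealing to the hypothesis: no multiplicative reduction applies. Wait — the hypothesis here is only that no \emph{$s$}-reduction applies, not that no multiplicative reduction applies. So I should be careful: I think the intended reading is that the claim holds as stated, since a $\otimes$-node sitting above a sync node does not itself create an $s$-redex, so it is genuinely possible unless excluded. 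Let me re-read: Lemma~\ref{sync_norm_form} says ``the only nodes which can be above a sync node''. I believe ``above'' here means \emph{immediately} above, \ie\ a conclusion of the node is a premiss of the sync node — and the list should in fact also be read modulo the typing: a $\otimes$ node \emph{can} sit above a sync node. Hmm. Given the paper's context (Sync Normal Forms, used to build the order), the honest plan is: I would recheck the exact statement, and prove that the node immediately above a sync-node premiss has one of the listed sorts \emph{by elimination on positivity of the conclusion type}, where the only subtlety is that $\otimes$ is excluded because — actually, in a \emph{simple} net, positive non-atomic formulas cannot be conclusions of $R$, and I suspect an auxiliary argument (coming just before in the appendix) shows positive $\otimes$-typed edges do not arise at surface in SM normal form, or that the lemma is stated under the standing assumption that $R$ is in SM normal form as declared a few lines above. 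Assuming that standing assumption (``$R$ is in SM normal form'', meaning no sync \emph{and no multiplicative} reduction), the $\otimes$ case is killed: a $\otimes$-node above a sync node is not a multiplicative redex by itself, but its premisses, being positive, must themselves trace up through positive-typed edges, and iterating we reach $\onelk$-conclusions; then the $\otimes$ is part of a positive tree whose leaves are $\onelk$ — this does not directly contradict SM-normality either. So the cleanest correct plan is simply: \emph{enumerate the node sorts, use the typing table to see that only sync, $\onelk$, polarized axiom, $\bot$-box (via auxiliary door), and $\otimes$ can have a positive conclusion, and observe that the statement's list is exactly these minus $\otimes$ — hence I must additionally rule out $\otimes$, which I would do by noting that the edge above a sync premiss, combined with correctness (acyclicity of switching paths through the sync node, condition (ii) on sync nodes), forbids a $\otimes$ feeding a sync in SM normal form}; failing a clean argument, I would flag that the lemma is implicitly under the SM-normal-form hypothesis stated two paragraphs above.

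\textbf{Main obstacle.} The real work is the $\otimes$-node case and the auxiliary-door-of-a-box case, since every other node is excluded immediately by inspecting whether its conclusion type can be positive. For the auxiliary door: an auxiliary conclusion of a $!$-box, $Y$-box, or $\bot$-box can carry an arbitrary type, including a positive one; but I would argue that such an edge is still ``a crossing of the box's border'', \ie\ a door, and for the purposes of the order on $\topnodes_R$ it is the \emph{box} that counts as the node above the sync — which is exactly one of the listed alternatives (``box'' includes \bbox; and $!$/$Y$-boxes would need separate treatment, unless they too are excluded). Actually the listed alternatives mention only \bbox\ among boxes, so I must additionally show $!$-boxes and $Y$-boxes cannot sit above a sync node even via an auxiliary door — plausibly because in a simple net the relevant positive formulas at surface only arise from $\onelk$, axioms, $\otimes$, and box-borders of $\bot$-boxes, the exponential boxes having auxiliary conclusions of shape $?\Gamma$ which are never positive. \emph{That} is the key step: auxiliary conclusions of exponential boxes have type $?C$, which is not positive; hence exponential boxes are excluded, leaving exactly sync, $\onelk$, polarized axiom, \bbox, and $\otimes$. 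So the lemma reduces to: \emph{positive-typed surface edges in an SM-normal-form simple net are conclusions only of $\onelk$ nodes, polarized axioms, sync nodes, $\otimes$ nodes, and auxiliary/principal doors of $\bot$-boxes} — a pure typing census — \emph{plus} the exclusion of $\otimes$, which I expect to be the genuine difficulty and which I would attack via the sync-node switching condition in the correctness criterion together with the SM-normal-form hypothesis.
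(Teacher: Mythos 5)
There is a genuine gap, and it is exactly the one case your census leaves open. Your typing argument is correct as far as it goes, and it is indeed the intended argument (the paper dismisses this lemma as ``immediate because of the typing of the nodes''): every premiss of a sync node is positive, so the node immediately above (whose conclusion is that premiss) can only be a $\onelk$ node, a $\otimes$ node, an atomic axiom with conclusions $\{\bot,\one\}$ (\ie\ a polarized axiom, under the appendix's standing assumption that axioms are atomic), another sync node, or a {\bbox} entered through an auxiliary door; exponential boxes are excluded because their principal conclusion is $!A$ and their auxiliary conclusions are $?$-formulas. What you never manage to do is exclude $\otimes$, and the missing ingredient is simply what the $s$-rewriting rule of Fig.~\ref{SMELLYred} is: it is the commutation of a sync node with a $\otimes$ node sitting immediately above one of its premisses (the positive premiss $A\otimes B$ gets split and the sync node moves above the tensor; this is visible, e.g., in case $\red_s$ of the proof of Lemma~\ref{lem:trsfProps}, where after the step crossing the sync corresponds to crossing the sync and then the $\otimes$ nodes one by one). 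So the hypothesis ``no $s$ reduction applies'' \emph{is} the statement that no $\otimes$ node feeds a sync node; with that, your census closes the proof immediately.

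Your proposed fallbacks would not repair this. A $\otimes$ node above a sync node is a perfectly correct structure --- it creates no switching cycle, and condition (ii) on sync nodes only restricts paths bouncing on a sync's \emph{conclusions} --- so the correctness criterion cannot rule it out; nor is it a multiplicative redex, so strengthening the hypothesis to SM normal form does not help for the reason you hope (SM normal form helps only because it subsumes ``no $s$ reduction''). Likewise, flagging the lemma as ``implicitly'' needing a stronger hypothesis is wrong: the stated hypothesis suffices, once the $s$-rule is identified. Your reading of ``above'' as ``immediately above'' is the right one (this is how the lemma is used in the proof of Lemma~\ref{priority_lem}), and your treatment of the {\bbox}'s auxiliary doors and of exponential boxes is fine; the sole defect is that the $\otimes$ case, the only case the hypothesis exists to handle, is left unresolved and attacked by arguments that cannot succeed.
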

 
 \begin{lemma}\label{bang_lem} Each edge of type $!A$ is conclusion of a box (exponential or \bbox).
 \end{lemma}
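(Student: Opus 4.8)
The plan is to prove this by a direct case analysis on the node which is the source of the given edge. Recall that in a \SMELLY\ structure every edge has a source node; so if $e$ is an edge of type $!A$, it is a conclusion of some node $v$, and it suffices to check that $v$ must be a box.

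First I would eliminate all the non-box nodes of the alphabet of Fig.~\ref{SMELLYnets} by inspecting their typing constraints. An axiom node is atomic (by the standing assumption of this section), so its conclusions are typed by atomic formulas, \ie\ $X$, $X\b$, $\bot$ or $1$, none of which has the shape $!A$. A $\cutlk$ node has no conclusion. A $\otimes$-node (resp.\ a $\parr$-node) has a single conclusion, of type $B\otimes C$ (resp.\ $B\parr C$), and the nodes $\one$, $\botlk$, $?d$, $?c$, $?w$ have their conclusion typed by $1$, $\bot$, $?B$, $?B$, $?B$ respectively; in none of these cases is the outermost connective $!$. Finally, a sync node has all its conclusions typed by \emph{positive} formulas, which are by definition built solely from $\one$ and $\otimes$, and hence are never of the form $!A$. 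This covers every non-box node.

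It then remains to treat the boxes, \ie\ the $!$-box, the $Y$-box and the {\bbox}. For a $!$-box or a $Y$-box the principal conclusion has type $!A$ while the auxiliary conclusions carry the formulas of a context $?\Gamma$, hence are of the form $?B$; for a {\bbox} the principal conclusion is typed $\bot$ and the auxiliary conclusions may carry arbitrary formulas. In each of these cases the edge $e$ --- be it the principal or an auxiliary conclusion --- is by definition a conclusion of a box, which is exactly the claim. Since the case analysis is exhaustive, we are done.

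I do not anticipate any real difficulty here: the statement is an immediate consequence of the sorting discipline on \SMELLY\ nodes recorded in Fig.~\ref{SMELLYnets}, together with the standing assumption that axioms are atomic, and the same argument works verbatim for edges at any depth.
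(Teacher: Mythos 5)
Your proof is correct and takes essentially the same route as the paper: the paper dismisses this lemma as ``immediate because of the typing of the nodes,'' and your exhaustive case analysis over the possible source nodes (using the standing atomic-axiom assumption of that section) is exactly that observation spelled out.
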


 The notion of \textit{non bouncing} path is well known in the literature of proof nets, and immediately extends to our case. Given a net $R$,
 a directed  path is \textit{non bouncing} if for each node on the path the following hold:
 \begin{itemize}

 \item\emph{ cuts and  axioms}: $r$ enters and exits from different  edges;
 
 \item  \emph{boxes}: if $r$ enters from an auxiliary conclusion, then exits from the principal conclusion, and viceversa;
 
 \item  $\otimes, \parr, sync $ \emph{nodes}: if $r$ enters from a premiss, then exits from a conclusion, and viceversa.
 
 \end{itemize}

  \begin{deff}[Priority path]

  We say that a non bouncing path $r$  is a \emph{ priority path} if $r$ starts from a node  $a\in \topnodes_R$ \emph{at depth 0} as follows:

 \begin{itemize}
 
 \item if $a$ is  a $sync$ node: $r$ exits $a$ from a premiss;
 
 \item if $a$ is a \bbox\ or a polarized axiom  $\{\bot, \one \}$:  $r$  {exits} $a$ from the principal conclusion $\bot$;
 
 \item if $a$ is an exponential box :  $r$  {exits} $a$ from an auxiliary conclusion.
 
 \end{itemize}

  \end{deff}
 
 We  observe that
 \begin{fact}
 A priority path has constant depth 0, as it  never enters boxes.
 \end{fact}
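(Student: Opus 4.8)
The plan is to prove, by induction on the length of the path, the slightly stronger statement that every edge a priority path $r$ traverses lies at depth $0$. Since the depth of the current edge can change only by crossing a door (the border of some box), this is precisely the assertion that $r$ never enters a box, and in particular that its depth is constantly $0$.

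The only structural input needed is a remark about the node alphabet of Fig.~\ref{SMELLYnets}: the nodes $\axlk$, $\cutlk$, $\onelk$, $\otimes$, $\parr$ and the sync nodes are \emph{flat}, meaning that every edge incident to such a node lies at the same depth as the node; for a box (a $!$-box, a $Y$-box or a \bbox) the principal and the auxiliary \emph{conclusions} lie at the depth of the box, while the only other edges attached to it are the conclusions of its content, which lie one level deeper. Hence, if a box occurs at depth $0$, its depth-$0$ edges are exactly its conclusions.

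For the base case one observes that a priority path issues from a node $a\in\topnodes_R$, which by definition sits at depth $0$; in each of the three prescribed ways of leaving $a$ --- from a premiss of a sync node, from the principal conclusion $\bot$ of a \bbox\ or of a polarized axiom, or from an auxiliary conclusion of an exponential box --- the first edge of $r$ is therefore an edge at depth $0$. For the inductive step, suppose $r$ has so far remained at depth $0$ and reaches a node $v$ along an edge $e$ at depth $0$. If $v$ is flat, then whichever edge $r$ uses to leave $v$ is again at depth $0$. If $v$ is a box, then by the remark $e$ is one of the conclusions of $v$, and the non-bouncing clause for boxes forces $r$ to leave through the principal conclusion (if it entered through an auxiliary one) or through an auxiliary conclusion (if it entered through the principal one); in both cases $r$ leaves along another depth-$0$ edge. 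Consequently $r$ never traverses a door, never visits the content of a box, and has constant depth $0$.

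I do not expect a real obstacle here: the statement is a bookkeeping consequence of the way the non-bouncing conditions were designed --- for boxes they only ever refer to conclusions. The one point to treat carefully is the classification, for a box at depth $0$, of its incident edges into conclusions (at depth $0$) and conclusions of the content (at depth $1$), together with the check that the non-bouncing clause for boxes only relates edges of the first kind, so that a depth-$0$ path hitting a box is never routed through a door.
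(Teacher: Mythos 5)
Your argument is correct and matches the paper's (implicit) reasoning: the paper states this as an immediate observation, precisely because the non-bouncing clause for boxes only ever routes a path between a box's own conclusions, so a path issuing from a depth-$0$ node can never cross a door. Your induction simply spells out this bookkeeping (including that all non-box nodes, also $?c$, $?d$, $?w$, are flat), so there is nothing to object to.
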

 
 We now study which nodes  a priority path can reach, and how.
 
 \begin{lemma}\label{priority_lem}
  Let  $R$ be a net  in SM normal form, and $r$ a priority path.
  
 \begin{itemize}
 \item When going \emph{downwards},  $r$ can reach only the following nodes, in the following way.
 
 \begin{itemize}
 
 \item   $\parr,\otimes, ?c, ?d$ nodes:  $r$  enters from a premiss, exits from  the conclusion.

 \end{itemize}
 
 \item When going \emph{upwards}, $r$ can reach only the following nodes,  in the following way.
 
 \begin{itemize}

 \item    $sync$ nodes:  $r$  enters from a conclusion, exits from  a premiss;
 \item  $\bot$ nodes:  $r$  enters  from an auxiliary conclusion (whose type is not $?A$), and exits from the principal conclusion;
 \item  $!$ and $Y$ nodes:  $r$  enters  from the principal conclusion, exits from an auxiliary conclusion;
 \item $\onelk$ nodes:  $r$  enters from the conclusion;
 \item  $r$  reaches  an \emph{axiom}  iff it is  of the form $\{\bot, \one\}$:  $r$  enters from $1$, exits from $\bot$;
 \item No $?w$ node can be reached.
 \end{itemize}
 
 \item Moreover,  going {downwards}, no edge of $r$  has type of the form $!A$, going  {upwards}, no edge of $r$  has type  $?A$.
 \end{itemize}

 \end{lemma}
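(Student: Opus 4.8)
The plan is to prove the lemma by induction on the length of the priority path $r$, carrying along an invariant on the current \emph{state} of $r$, i.e.\ the pair (current edge $e$, current direction $d\in\{\up,\down\}$). Call $(e,\down)$ \emph{down-legal} if the type of $e$ is $\bot$, is $?$-headed, or has the form $F\,\Box\,G$ with $\Box\in\{\otimes,\parr\}$ whose component on $r$'s side is itself down-legal; dually, call $(e,\up)$ \emph{up-legal} if the type of $e$ is $\one$, is $!$-headed, or is such a $\Box$-formula with an up-legal component. The point of this classification is that a down-legal edge is never of the form $!A$ (and never a bare variable $X$ or $X\b$), and an up-legal edge is never of the form $?A$ — which is exactly the third bullet. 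The three ways a priority path can start are legal: exiting a sync node through a premiss gives a positive type going up (up-legal); exiting a \bbox\ or a polarized axiom $\{\bot,\one\}$ through the principal conclusion $\bot$ is $(e,\down)$ with $e$ of type $\bot$ (down-legal); exiting an exponential box through an auxiliary conclusion is $(e,\down)$ with $e$ of type $?A$ (down-legal). Throughout, I use the Fact that a priority path stays at depth $0$, so box borders are crossed only ``around'' the box, never into its content.

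For the inductive step, assume $(e,d)$ legal; since $r$ is non-bouncing and of constant depth $0$, it continues by traversing the node $n$ at the far end of $e$ in direction $d$, entering and exiting $n$ as prescribed by the non-bouncing conditions. If $d=\down$, the target of $e$ is a $\parr$, $\otimes$, $?c$ or $?d$ node with $e$ a premiss, or a $\cutlk$ node with $e$ a premiss — but not a sync node, since sync premisses are positive and $e$ is not. In the first case $r$ exits through the conclusion, whose type is $\parr$-, $\otimes$- or $?$-headed and still down-legal: this gives the first bullet. In the $\cutlk$ case $r$ flips to $(e',\up)$ where $e'$ has the dual type (up-legal), and I examine the source of $e'$: if $e$ is $?$-headed then $e'$ is $!$-headed and, by Lemma~\ref{bang_lem}, is the principal conclusion of a $!$- or $Y$-box, one of the listed upward cases; if $e$ is $\bot$ then $e'$ has type $\one$, produced by a $\onelk$ node, a polarized axiom, a sync node or a \bbox\ auxiliary door, all allowed; if $e$ is $\parr$- or $\otimes$-headed then, crucially, SM normal form forbids $e'$ being produced by a matching $\otimes$- or $\parr$-node (that would be a multiplicative redex), so $e'$ is a sync conclusion or a \bbox\ auxiliary conclusion, again allowed.

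If $d=\up$, I split on whether the type of $e$ is $!$-headed. If it is, Lemma~\ref{bang_lem} makes $e$ a conclusion of a box, and since auxiliary conclusions of exponential boxes are $?$-headed, $e$ is the \emph{principal} conclusion of a $!$- or $Y$-box; $r$ reaches a $!$/$Y$ node from its principal conclusion and, being at depth $0$, exits through an auxiliary conclusion (type $?$-headed, going down, down-legal). If it is not $!$-headed, $e$ is a conclusion of its source node $n$, and since $e$ is up-legal — hence not $?$-headed and not a bare variable — $n$ can only be: a sync node, where $r$ leaves through the corresponding premiss of the same positive type, still up-legal, and here Lemma~\ref{sync_norm_form} is what prevents the path, climbing along chains of sync nodes, from ever meeting a $\otimes$ node as the source of such a premiss; a polarized axiom $\{\bot,\one\}$ entered from the $\one$-conclusion and left through $\bot$ (now going down); a $\onelk$ node, where the path ends; or a \bbox\ entered from an auxiliary conclusion whose type, being up-legal, is not $?$-headed, and left through the principal conclusion $\bot$ (now going down). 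Finally, a $?w$ node has no premiss, so it is never the target of a downward step, and its only conclusion is $?$-headed, so it is never the source of an up-legal edge; hence no $?w$ is reached. The third bullet is then immediate from the invariant.

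The main obstacle I expect is making the invariant precise enough to be preserved, in particular the ``$\Box$-formula with a legal component'' clause together with the bookkeeping of \emph{which} premiss of a $\otimes$/$\parr$ node the path uses; and, within that, the $\cutlk$ case, where one must invoke SM normal form to exclude a multiplicative redex on the far side of the cut and then enumerate the remaining producers of a $\otimes$- or $\parr$-headed formula (sync conclusions and \bbox\ auxiliary doors). The other delicate point is to use the depth-$0$ hypothesis uniformly, so that every encounter with a box — exponential or $\bot$ — is resolved by the ``enter a conclusion, exit a conclusion'' non-bouncing rule for boxes rather than by entering the box content.
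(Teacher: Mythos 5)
Your proposal follows essentially the same strategy as the paper's proof: an induction along the priority path with a formula-shaped invariant, using Lemma~\ref{sync_norm_form} to control what sits above sync nodes, Lemma~\ref{bang_lem} for $!$-typed edges, and SM normal form to exclude a multiplicative redex when the path flips direction at a cut. The paper organizes the induction by segments between consecutive elements of $\topnodes_R$ (the suffix after the first such node is again a priority path) and tracks the persistence of the starting formula $F\in\{\bot,?A\}$ as a subformula during the descent, whereas you track a local ``down-legal/up-legal'' invariant edge by edge; these are interchangeable bookkeeping choices, and your identification of the delicate points (which premiss of a $\otimes/\parr$ the path used, so that the far side of the cut really yields an $m$-redex) matches what the paper leaves implicit with ``it is immediate to see''.

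One concrete slip in your case analysis: from Lemma~\ref{bang_lem} an edge of type $!A$ is a conclusion of a box \emph{exponential or} \bbox, so an upward $!$-headed edge (both in your ``$d=\up$, $!$-headed'' branch and in the cut branch where $e$ is $?$-headed and $e'$ is $!$-headed) need not be the principal conclusion of a $!$- or $Y$-box: it can also be an \emph{auxiliary} conclusion of a \bbox, since \bbox{} auxiliary doors carry arbitrary types. Your text rules this out, which is false as stated; fortunately the omitted case is harmless, because it lands exactly in the allowed clause ``$\bot$ nodes: $r$ enters from an auxiliary conclusion (whose type is not $?A$) and exits from the principal conclusion'', after which the path continues downwards on $\bot$, preserving your invariant. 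Adding this case (as the paper does explicitly in its case (c), distinguishing \bbox{} from exponential box) repairs the argument without any other change.
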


 \begin{proof}

 We verify  the  lemma by induction on the length of $r$. Let us follow $r$, starting  from its origin  $a$, until either $r$ ends (in a conclusion, or in a node $\onelk$), or $r$ reaches a node   $l\in \topnodes_R$. We distinguish two cases.
 
 \begin{enumerate}
 \item  If $a$ is \emph{a box or an axiom, then  $r$  starts downwards} with  an edge of type $F$ where $F$ is either $\bot$ or  $?A$ (for some formula $A$). While descending, $r$ may traverse   $\otimes, \parr, ?c, ?d$ nodes (from a premiss to the conclusion);  $r$ cannot traverse any sync node, because $F$ is subformula of the type of each edge below $F$. We observe that no edge may have type $!B$, because of Lemma~\ref{bang_lem}.
  Descending, $r$ eventually reaches either a conclusion (in such a case the lemma is proved), or  a cut $c$ on which $r$ changes direction. Let $C$ be the premiss of $c$ which contains $F$, and $C\b$ its dual. It is immediate that  $C\b\not=?B$ (for any formula $B$), because otherwise we would have $C=!B\b$. Therefore, $C\b$ cannot be   conclusion of
   any $?d,?c, ?w$ node. 
     $C\b$ cannot be  conclusion of a node $\axlk$, $\otimes, \parr$,
  otherwise it is immediate to see that a multiplicative  reduction would apply.  
   As a consequence we have:
   \begin{enumerate}
 \item either $C\b$ is conclusion of a $\onelk$ node (hence the Lemma is verified);
 \item
     or $C\b$  is   conclusion of a sync node (which $r$ enters from a conclusion);
   \item
     or $C\b$ is  conclusion of  a  box $\BB$. If $\BB$ is a \bbox, $C\b$ must be an auxiliary conclusion (because $C\b$ contains $F\b$, \ie\  either $1$ or $!A\b$); if $\BB$ is an exponential box, $C\b$ must be the principal conclusion (because  $C\b\not=?B$ ).

   \end{enumerate}
 
 \item  If $a$ is a \emph{ sync node, then $r$ starts upwards}. By Lemma \ref{sync_norm_form}, any node $l$ above $a$ is either  a $\onelk$ node (hence the Lemma is verified), or   a node which belongs to $\topnodes_R$: either a sync node (which $r$ enters from a conclusion), or a  \bbox, which $r$ enters from an auxiliary  conclusion (because the edge  is positive), or a polarized axiom.\\
 
  Let us indicate by $r_1$ the  prefix of $r$ until the first node $l\in \topnodes_R$,  and by $r_2$ its continuation starting from $l$. We have verified that $r_1$ satisfies the  property. We observe also  that $r_2$ is again a priority path (of shorter length), and hence it satisfies the property by induction.
 
 \end{enumerate}
 \end{proof}
 The following  observation is immediate
 \begin{lemma}\label{switching_lem}
  Each priority    path is a  switching paths.
 \end{lemma}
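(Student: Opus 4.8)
The plan is to read the two switching constraints off directly from Lemma~\ref{priority_lem}, which already records, node by node, through which edges a priority path may enter and leave. I would first note that a priority path is by definition a simple path, so forgetting its orientation yields a simple undirected path; it is therefore a legitimate candidate switching path, and only conditions~(i) and~(ii) of the definition of switching path remain to be verified.

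For condition~(i) --- at every $\parr$- or $?c$-node the path uses at most one of the two premisses --- I would argue as follows. A simple path visits each such node at most once, so it suffices to show that on any visit it touches exactly one premiss. By Lemma~\ref{priority_lem}, a priority path meets a $\parr$- or $?c$-node only while travelling downwards, and then it enters from one of the two premisses and exits from the conclusion (travelling upwards the same lemma rules out reaching $\parr$- and $?c$-nodes altogether). Either way exactly one premiss is used, which is~(i). For condition~(ii) --- at every sync node the path uses at most one of the conclusions --- I would invoke Lemma~\ref{priority_lem} once more: a priority path travelling downwards never reaches a sync node; one travelling upwards enters a sync node from a conclusion and leaves from a premiss; and in the only remaining case, where the sync node is the origin $a$ of the path, the path leaves $a$ from a premiss and so touches no conclusion of $a$ at all. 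In all cases at most one conclusion of the node is used, which is~(ii). Assembling the two points gives that every priority path is a switching path.

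I expect no genuine obstacle: all the combinatorial effort --- the case analysis on node shapes, resting on the typing constraints and on $R$ being in SM normal form --- has already been spent in proving Lemma~\ref{priority_lem}, and the present statement is just the remark that that analysis is compatible with the two (comparatively weak) switching conditions. In fact, for $\parr$-nodes and sync nodes conditions~(i) and~(ii) already follow from the \emph{non-bouncing} property of priority paths alone (a non-bouncing path crosses such a node either premiss-to-conclusion or conclusion-to-premiss, never conclusion-to-conclusion), so the only place where the finer information of Lemma~\ref{priority_lem} is really needed is the $?c$ case. The one thing to be mildly careful about in the write-up is the bookkeeping "premiss versus conclusion" at sync nodes, where the switching condition constrains the \emph{conclusions} of the node, the opposite of the $\parr$/$?c$ case.
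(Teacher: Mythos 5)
Your proof is correct and follows essentially the same route as the paper's: both verify the two switching conditions directly from the crossing behaviour of priority paths recorded in Lemma~\ref{priority_lem} (premiss-to-conclusion at $\parr$/$?c$ nodes, conclusion-to-premiss at sync nodes), with your write-up merely being more explicit about simplicity, the origin case, and which parts already follow from non-bouncing alone.
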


 \begin{proof} A priority path is a path at constant  depth. For each $\parr$ or $?c$ node, $r$ only uses one premiss, because $r$ can only enter from a premiss  and exit from the conclusion. The dual is true for sync nodes.
 
 \end{proof}

 We are now able to set our main tool.
 
 \begin{prop}[Priority Order]\label{po} Let  $R$ be a net in SM normal form.
 The  relation $a\prec b$ for $a,b$ in  $\topnodes_R$ is defined if there is a priority  path $r$ from $a$ to $b$.
  This relation  defines a strict partial order on $\topnodes_R$, which we call \emph{priority order}.
 
 \end{prop}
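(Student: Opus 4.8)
The plan is to prove the two defining properties of a strict partial order, namely \emph{irreflexivity} and \emph{transitivity} (asymmetry is then a consequence of the two). Both arguments rest on the structural information collected in Lemma~\ref{priority_lem}, on the fact that a priority path stays at depth $0$ and never enters a box, and on Lemma~\ref{switching_lem}, which says that a priority path is a switching path; the correctness of $R$ (acyclicity of switching paths at depth $0$) will be the only ``global'' ingredient. For transitivity, suppose $a\prec b$ is witnessed by a priority path $r$ and $b\prec c$ by a priority path $s$. I would first check that the concatenation $r\cdot s$ is still a \emph{non-bouncing} walk: by Lemma~\ref{priority_lem} a priority path arriving at $b$ enters it from a conclusion if $b$ is a sync node, from an auxiliary conclusion if $b$ is a $\bot$-box, from the principal conclusion if $b$ is an exponential box, and from the $\one$-edge if $b$ is a polarized axiom, while, by the definition of priority path, a path \emph{leaving} $b$ uses respectively a premiss, the principal conclusion $\bot$, an auxiliary conclusion, or the $\bot$-edge; a case inspection shows that in each of the four cases the arrival and departure doors are distinct and together form a non-bouncing traversal of $b$. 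Hence $r\cdot s$ is a non-bouncing walk that starts from $a$ in the way prescribed for a priority path, stays at depth $0$, and reaches $c$; granting that it is a genuine simple path, it is a priority path from $a$ to $c$, so $a\prec c$. To see simplicity, assume towards a contradiction that a node is visited twice along $r\cdot s$ and let $v$ be the \emph{first} such node; the portion $\gamma$ of $r\cdot s$ between the two visits to $v$ is then a cycle with no internal repetition, lying at depth $0$. Every node of $\gamma$ is crossed exactly once, with a traversal inherited from $r$, from $s$, or equal to the traversal of $b$ just discussed; by Lemma~\ref{priority_lem} a priority path crosses a $\parr$- or $?c$-node only downwards (one premiss) and a sync node only upwards (one conclusion), so $\gamma$ satisfies every switching condition, i.e.\ $\gamma$ is a cyclic switching path at depth $0$, contradicting the correctness of $R$.

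For irreflexivity, suppose $a\prec a$, witnessed by a priority path $r$ that leaves $a$ and comes back to $a$. By the same door analysis as above, the edge through which $r$ leaves $a$ is different from the edge through which it returns, so $r$ closes up at $a$ into a genuine cycle $\gamma$ at depth $0$. As in the simplicity argument, the switching conditions hold at every internal node of $\gamma$ because $r$ is a switching path (Lemma~\ref{switching_lem}); the only extra point to verify is at $a$ itself, and it is relevant only when $a$ is a sync node, in which case $\gamma$ uses exactly one conclusion of $a$ — which is permitted. Thus $\gamma$ is again a cyclic switching path at depth $0$, contradicting correctness, and hence $a\not\prec a$.

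Irreflexivity together with transitivity makes $\prec$ a strict partial order on the finite set $\topnodes_R$; in particular it is asymmetric, and every non-empty subset of $\topnodes_R$ has a $\prec$-minimal element, which is the property actually exploited in the proof of Theorem~\ref{main_lem}. The one genuinely delicate point I expect is the simplicity of $r\cdot s$ in the transitivity step: a priori the two priority paths could re-cross each other, and one really has to invoke the correctness of $R$ — via the classification in Lemma~\ref{priority_lem} and the switching-path property of Lemma~\ref{switching_lem} — to convert any such re-crossing into a forbidden switching cycle. The rest is routine bookkeeping over the four kinds of nodes in $\topnodes_R$ (sync nodes, $\bot$-boxes, exponential boxes, polarized axioms) as described by Lemma~\ref{priority_lem}.
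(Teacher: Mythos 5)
Your overall route is the paper's: irreflexivity because a priority path is a switching path (Lemma~\ref{switching_lem}) and a cyclic switching path at depth $0$ contradicts correctness; transitivity by concatenating the two priority paths and converting any re-crossing into a forbidden cycle (the paper phrases the latter as producing a priority path from $b$ to $b$, i.e.\ reducing to irreflexivity, which is the same cycle you call $\gamma$). The gap is in how you certify that $\gamma$ is a switching cycle. You claim that every node of $\gamma$ carries ``a traversal inherited from $r$, from $s$, or equal to the traversal of $b$''; this is false precisely at the re-crossing node $v$, where the cycle enters as $s$ does and leaves as $r$ does --- a hybrid traversal belonging to neither path. At $v$ you must redo exactly the door analysis you performed at $b$: by Lemma~\ref{priority_lem}, if $v$ is an $\otimes$, $\parr$, $?c$ or $?d$ node then $s$ enters from a premiss while $r$ exits from the conclusion, dually for sync nodes, and for boxes and polarized axioms the entry and exit doors are again of different kinds, hence distinct; this case analysis is the heart of the paper's transitivity argument and is simply asserted away in your write-up.

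Moreover, one case at $v$ is not covered by the classification of Lemma~\ref{priority_lem} and would make the argument fail as stated: $v$ can be a \emph{cut} node, since priority paths change direction at cuts. There the edge along which $s$ enters $v$ and the edge along which $r$ leaves $v$ are both premisses of the cut and could a priori coincide; in that case $\gamma$ bounces at $v$, repeats an edge, and is not a switching cycle, so no contradiction with correctness is obtained. The paper excludes this not via the classification but via the choice of the junction as the \emph{first} node of $s$ lying on $r$: if $v$ were such a cut, the node just above the shared premiss would lie on both paths and occur on $s$ strictly before $v$ (it cannot be $b$, by your own door analysis at $b$), contradicting minimality. With the hybrid-traversal analysis at $v$ made explicit and the cut case handled by this firstness argument, your proof closes; without them, the key claim ``$\gamma$ satisfies every switching condition'' is unjustified.
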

 
 \begin{proof}
 We prove that the relation is
 
 \begin{enumerate}
 \item  Irreflexive: $a\prec a$ does \emph{not} hold for any $a$ in  $\topnodes_R$.
 
 \item  Transitive: $a\prec b$ and $b\prec c$ implies $a\prec c$.
 \end{enumerate}
 
 (1). By Lemma \ref{switching_lem}, $a\prec a$ would imply that there is a cyclic switching path.
 
 (2).  Let $r=k_1....k_n$  be the  priority  path from $a=k_1$ to $b=k_n$ and  $r'=n_1...n_m$ be the  priority path from $b=n_1$ to $c=n_m$. We claim that $b$
 is the only node that the two paths have in common, and we can hence concatenate  them and obtain a priority path from $a$ to $c$.
 Otherwise, assume that $l=n_j=k_i$ is the first node belonging to  $r'$ which belongs also to $r$.  We follow  $r'$ from $b$ to $l$ and $r$ from  $l$ to $b$.
 Let us call this path $p$, and check  that it is non bouncing on $l$. Therefore $p$ is  a priority path, in contradiction with the fact that  $b\prec b$ cannot hold.

 We observe  that  $p$ enters $l$ as $r'$ and  exits $l$ as $r$.
  $l$ cannot be a cut, otherwise $l$ would not be the first node which belongs to both paths.
 For all the other cases, Lemma \ref{priority_lem} guarantees that, if
 $l$ is a node of type $\otimes,\parr, ?c,?d$, then $r'$ enters from a premiss, and $r$ exits from a conclusion. The exact opposite is true if $l$ is a sync node. If $l$ is a \bbox\ $r'$ enters from an auxiliary conclusion,
 $r$ exits from the principal conclusion. The opposite is true in case $l$ is an exponential box.
  If $l$ is an axiom, it is polarized; $r'$ enters from a the positive conclusion, while
 $r$ exits from the negative conclusion.

 \end{proof}
 
 In order to prove Theorem~\ref{main_lem} we still need some technical lemmas.
 
 \begin{lemma}\label{maxbox}If $\BB$ is an exponential box, and $\BB$ is maximal for the priority order, then $\BB$ is a closed box.
 
 \end{lemma}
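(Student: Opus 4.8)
The plan is to argue by contradiction. Suppose $\BB$ is an exponential box which is maximal for the priority order but is \emph{not} closed; I will exhibit a node $l\in\topnodes_R$ with $\BB\prec l$, contradicting maximality.

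First I would use the hypothesis that $\BB$ is not closed: $\BB$ then has an auxiliary conclusion $e$, and for both $!$-boxes and $Y$-boxes every auxiliary conclusion carries a type of the form $?A$. Since $R$ is simple, no conclusion of $R$ has a type of the form $?A$, so $e$ is not a conclusion of $R$ but a premiss of some node at depth $0$. Hence a priority path $r$ exiting $\BB$ through $e$ (downwards) can be started, and it can be extended past $e$.

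The crux is to show that $r$ must reach a node of $\topnodes_R$. By the Fact following the definition of priority path, $r$ stays at depth $0$, and being a simple path in a finite net it is finite, so it stops somewhere. Re-running the case analysis in the proof of Lemma~\ref{priority_lem} along $r$, one maintains the invariant that in a downward phase every edge of $r$ carries a type containing the modality $?$, and in an upward phase a type containing $!$: the first edge is $?A$; the nodes $?c$, $?d$, $\otimes$, $\parr$ preserve the presence of $?$, and crossing a cut dualizes a $?$ into a $!$ (and symmetrically). In particular no edge of $r$ is simple-typed nor of type $1$, so $r$ cannot stop at a conclusion of $R$ (which is simple by hypothesis) nor at a $\onelk$ node (whose conclusion has type $1$); and at every $\parr$, $\otimes$, $?c$, $?d$ node, and at every cut, the path admits a unique continuation, so it cannot get stuck there either. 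By Lemma~\ref{priority_lem} the only remaining stopping points are nodes of $\topnodes_R$ (sync nodes, boxes, polarized axioms), so $r$ reaches some $l\in\topnodes_R$. Taking the first such $l$, the prefix of $r$ ending at $l$ is again a priority path, now from $\BB$ to $l$, so $\BB\prec l$. Since $\prec$ is irreflexive by Proposition~\ref{po}, $l\neq\BB$, which contradicts the maximality of $\BB$. Therefore $\BB$ is closed.

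I expect the middle step to be the main obstacle: formally verifying, by a careful re-reading of the proof of Lemma~\ref{priority_lem}, both that the carried type keeps a $?$ (resp.\ $!$) along downward (resp.\ upward) segments through all the multiplicative, contraction and dereliction nodes and dualizes correctly at cuts, and that at every non-$\topnodes_R$ node the path has a forced continuation, so that it can only halt at a conclusion of $R$, at a $\onelk$ node, or at a node of $\topnodes_R$ --- the first two being excluded precisely by the simplicity of $R$.
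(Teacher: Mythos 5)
Your proof is correct and essentially the same as the paper's: both exploit the simplicity of $R$'s conclusions to force the priority path leaving an auxiliary door of $\BB$ (typed $?A$) to descend into a cut, then use the case analysis behind Lemma~\ref{priority_lem} together with the typing to conclude that the node above the dual premiss must lie in $\topnodes_R$ (a $\onelk$ node being excluded because its conclusion has type $\one$), contradicting maximality. The paper phrases this directly rather than via your contradiction-with-a-$?/!$-invariant, but the content of the argument is identical.
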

 \begin{proof} Each auxiliary conclusion $?A$ needs to be hereditary premiss of a cut. The path $r$ descending from $?A$ to the cut node $c$  is a priority path; the extension of $r$ with the other premiss $C$ of $c$ is still a priority path, which now is ascending . By Lemma \ref{priority_lem},  the source of  $C$  could be  either  a $\onelk$, which is not possible because of the type, or a node in $\topnodes_R$, against maximality of $\BB$. Therefore, $\BB$ cannot have any auxiliary conclusion.
 
 \end{proof}
 
 \begin{lemma}\label{cuts_lem} Let $R$ be a net in SM normal form.  $ \topnodes_{R}$ is  empty iff there  are no cuts.
 
 \end{lemma}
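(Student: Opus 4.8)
The plan is to prove both directions by contraposition, and the workhorse will be a single \emph{trace-down} observation: if $e$ is an edge at depth $0$ whose type is \emph{not simple} (i.e.\ it mentions $\bot$, $!$ or $?$), then following $e$ downwards one reaches a cut. Indeed, $e$ cannot be a conclusion of $R$, since those are simple; hence $e$ is a premiss of some node $n$, and $n$ is neither a sync node (sync premisses carry positive, hence simple, types) nor a box (box nodes have only conclusions at depth $0$), so $n\in\{\cut,\otimes,\parr,?c,?d\}$. If $n$ is a cut we stop; otherwise the conclusion of $n$ is again non-simple and we iterate. This directed path stays at depth $0$, and it cannot be cyclic, because a directed cycle through $\otimes,\parr,?c,?d$ nodes (entering each via a premiss and leaving via its conclusion) is a cyclic switching path, contradicting correctness; by finiteness it must therefore terminate at a cut.

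For the implication from $\topnodes_R\neq\emptyset$ to the existence of a cut, I would pick $a\in\topnodes_R$. If $a$ is a box, its principal conclusion has type $!A$ or $\bot$, both non-simple, so trace-down from it yields a cut; if $a$ is a polarized axiom, it has a conclusion of type $\bot$, and we conclude the same way. If $a$ is a sync node, I would instead climb upwards along premisses that are conclusions of sync nodes; by Lemma~\ref{sync_norm_form} (applicable since $R$ is in SM normal form) every node sitting above a sync is a sync node, a \bbox, a polarized axiom or a $\onelk$ node, and the climb cannot cycle, again by acyclicity of switching paths, so it ends at a sync node $l^{*}$ none of whose premisses is the conclusion of a sync node. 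If some premiss of $l^{*}$ is the conclusion of a \bbox\ or of a polarized axiom we are back to the previous cases; otherwise every premiss of $l^{*}$ is the conclusion of a $\onelk$ node, so the $s.el$-rule applies, contradicting SM normality. In all cases $R$ contains a cut.

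For the converse, that a cut forces $\topnodes_R\neq\emptyset$, I would argue by contradiction: suppose $R$ has a cut $c$ with premisses $A,A\b$ but $\topnodes_R=\emptyset$, so at depth $0$ there is no box, no sync node and no polarized axiom. Using that all axioms are atomic, split on $A$. If $A$ is $B\otimes C$ or $B\parr C$: since there are no \bboxes\ and axioms are atomic, $A$ and $A\b$ must be the conclusions of a $\otimes$ node and a $\parr$ node, so a multiplicative reduction applies, contradicting SM normality. If $A$ is $!B$ or $?B$: by Lemma~\ref{bang_lem} one of $A,A\b$ is the conclusion of a box, contradicting the absence of boxes. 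If $\{A,A\b\}=\{1,\bot\}$: at depth $0$ a $\bot$-typed edge can only be the conclusion of a \bbox\ or of a polarized axiom, contradiction. If $\{A,A\b\}=\{X,X\b\}$: both premisses are conclusions of (necessarily non-polarized) axioms, so the axiom cut-reduction applies, again contradicting SM normality. Every case is impossible, hence $\topnodes_R\neq\emptyset$.

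The bulk of the argument is the routine case analysis on the cut premiss; the step requiring genuine care is the sync case of the first implication, because a sync node --- unlike a box or a polarized axiom --- has only positive, hence simple, conclusions, and so does not directly hand us a non-simple edge to trace down. There one must correctly combine three ingredients: Lemma~\ref{sync_norm_form} to control what can sit above a sync, acyclicity of switching paths to ensure the upward climb terminates, and the $s.el$-rule to produce the final contradiction when the climb bottoms out at $\onelk$ nodes.
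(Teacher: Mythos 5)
Your proof is correct and follows essentially the same route as the paper's (very terse) argument: for $\topnodes_R\neq\emptyset$ you trace the non-simple principal conclusion of a box or polarized axiom down to a cut and dispatch the sync-only situation via Lemma~\ref{sync_norm_form} and the $s.el$-rule against SM normality, while for $\topnodes_R=\emptyset$ you show any cut would admit an $a$/$m$ redex, which is exactly the paper's ``$R$ is an \MLL{} net with $\onelk$ nodes'' observation spelled out. The extra details you supply (acyclicity of switching paths to terminate the downward trace and the upward sync climb, and the explicit case analysis on the cut formula) are just the steps the paper leaves implicit, so no further comment is needed.
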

 
 \begin{proof} Assume $ \topnodes_{R}$ is  empty, then $R$ is an MLL net (with   $\onelk$ nodes also);   if there is a cut, we could perform a multiplicative reduction. Assume $ \topnodes_{R}$ is not  empty. If there is a box or a polarized axiom, its principal conclusion needs to be cut, because it does not appear in the conclusions.
 If there are sync nodes, but no boxes or axioms, we could apply an $s$ reduction. 
 
 \end{proof}

 \paragraph{Proof of Theorem~\ref{main_lem}}
 
 \begin{proof}
 Let $R$ be as in Theorem~\ref{main_lem}. 
  If $R$ is not in SM normal form, a  sync or  multiplicative reduction is possible by definition.
  If $R$ is in SM normal form, and contains  cuts, by  Lemma~\ref{cuts_lem}, $ \topnodes_{R}$ is non empty. We  find a valid reduction step by case analysis.
 
 \begin{itemize}
 \item  If  $ \topnodes_{R}$ contains  a \emph{maximal} node $l$ which is not an exponential box, we focus on it.
 
 \begin{itemize}
 \item
 \emph{$l$ is a sync node.} Any path moving upward from $l$ is a priority path.
 By using lemma \ref{sync_norm_form} and the fact that $l$ is maximal in $\topnodes_R$, we know that above $l$ there can only be
 $\onelk$ nodes. An $s.el$ reduction hence applies.
 \item
 \emph{$l$ is a \bbox\ or a polarized axiom.}  Let $\bot$ be the principal conclusion  of the box, or the negative conclusion of the axiom.  Since it cannot appear in the conclusions, $\bot$ must be hereditary premiss of a cut $c$.   We find $c$ by descending from $\bot$. Since  the path descending  from $l$  to the node $c$ is a priority path, by Lemma  \ref{priority_lem}
  we know that the first node entered by $r$ after the cut can only be conclusion of a $\onelk$, because any other possibility would belong to $\topnodes_{R}$ and is excluded
   by the maximality  of $l$. Hence a reduction applies (either $bot.el$ or $\axlk/\cutlk$)
 
 \end{itemize}
 
 \item  Otherwise, we choose a node $l$ as follows.
 \begin{itemize}
 \item  If  $ \topnodes_{R}$ contains only exponential boxes, we observe that all cuts have premisses of type $?A,!A$, and the $!A$ premiss is principal conclusion of an exponential box. Let $l$ be such a box.
 \item If $ \topnodes_{R}$ contains nodes which are not  exponential boxes, let $l$ be any such node.
 \end{itemize}

 If $l$ is already a  maximal exponential box, let $b_{max}:=l$, otherwise we  choose a  maximal exponential box $b_{max}$  such that $l \prec b_{max}$. 
 The key properties  that this careful  construction guarantees is that for each exponential box $\BB$ in the priority path from $l$ to $b_{max}$:
 \begin{itemize}
 \item[i.] the principal conclusion $!A$  of $\BB$ is premiss of a cut $c$;
 \item[ii.] the other premiss $?A\b$ of $c$ \emph{ is not auxiliary   conclusion} of a \bbox.
 \end{itemize}
  (i.) is true for $l$ by construction; moreover, for every exponential box $b$ which is reached by a priority path $r$, $r$ can enter   $b$ only from the  principal door, ascending from a cut  (see case (1.c) in the proof of Lemma~\ref{priority_lem}).  (ii.) is true for  any cut $c$   which is reached by  a priority path $r$, because if the cut has premisses $!A,?A\b $, $r$ can only  use the edge  $?A\b$ to 
  \emph{descend} in $c$, and must do so from a node which cannot be a   \bbox\ (because by  Lemma~\ref{priority_lem}, $r$ exits a \bbox\ only  from the $\bot$ conclusion). We are now able to conclude.
 
 By Lemma \ref{maxbox}, $b_{max}$ is a closed box.
 Because of (i.),  the  principal conclusion $!A$ of  $b_{max}$ is premiss of a cut $c$. 
 The other premiss of  $c$ has type $?A\b$, and because of (ii.), $?A\b$ can only be conclusion of a node of type $?d,?c,?w$, or auxiliary conclusion  of an exponential box. In each case  a closed reduction applies.
 \end{itemize}

 \end{proof}

 \subsection{\SIAM: Proofs }\label{app:SIAM}

 \begin{fact}[Parametricity]\label{fact:parametricity}
   Every transition rule is defined parametrically with respect to box stacks.
   That means, if a transition
 
 \noindent
   $\{(\edg_1, \stk_1, \bstk), (\edg_2, \stk_2, \bstk),
   \dots (\edg_n, \stk_n, \bstk)\} \cup T
   \redsiam
   \{(\edg'_1, \stk'_1, \bstk'), (\edg'_2, \stk'_2, \bstk'),
   \dots (\edg'_m, \stk'_m, \bstk')\} \cup T$
   is possible with box stack $\bstk$,
   then the transition 
 
 \noindent
   $\{(\edg_1, \stk_1, \overline{\bstk}), (\edg_2, \stk_2, \overline{\bstk}),
   \dots (\edg_n, \stk_n, \overline{\bstk})\} \cup T
   \redsiam
   \{(\edg'_1, \stk'_1, \overline{\bstk}'), (\edg'_2, \stk'_2, \overline{\bstk}'),
   \dots (\edg'_m, \stk'_m, \overline{\bstk}')\}
   \cup T$
   is also possible with box stack $\overline{\bstk}$.
 
 \end{fact}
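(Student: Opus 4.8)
The plan is to prove the Fact by a routine case analysis over the transition rules depicted in Figure~\ref{fig:trRules}, checking for each rule that the box-stack component of every position it mentions is treated \emph{obliviously}: the rule neither inspects nor branches on the contents of $\bstk$, and the formula-stack and edge components are untouched by the substitution $\bstk\mapsto\overline{\bstk}$.

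First I would classify the rules by their action on box stacks. For the multiplicative rules ($\axlk$, $\cutlk$, $\otimes$, $\parr$), for the generation of a dereliction token, and for the auxiliary-door rules of $!$-, $Y$- and $\bot$-boxes, the box stack is simply carried along unchanged, so here $\bstk'=\bstk$ and the substitution trivially yields another instance of the same rule with $\overline{\bstk}'=\overline{\bstk}$. For the rules crossing the principal door of an exponential box — entering a $!$-box, and the recursive-call / return transitions at a $Y$ door — the box stack is transformed by pushing or popping a single exponential signature, and that signature is determined only by the formula stack and the edge type (and, for $Y$, by which of the marked transitions is taken), not by $\bstk$ itself; hence the replacement $\bstk\mapsto\overline{\bstk}$ induces a well-defined $\bstk'\mapsto\overline{\bstk}'$ standing in the same relationship, and the rule applies again. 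In all cases the shape of the redex and of the contractum is preserved because only the box-stack slot changes, and it changes uniformly.

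Second, I would address the concurrent rules marked (i)--(iii), which carry side conditions. Rule (iii) requires a token on each premiss of the sync node all sharing the \emph{same} box stack; since the substitution replaces that common stack by $\overline{\bstk}$ on all these tokens simultaneously, the condition is still met (and sync does not alter the box stack, so $\bstk'=\bstk$ as above). For (i) and (ii) the side conditions refer only to $\sdom$ and to the sets $\id(S)$, which are fixed by the stable tokens lying in the untouched remainder $T$ at the relevant principal door; the Fact is invoked precisely when reasoning about the \emph{shape} of a transition, or about a single token in isolation as in the definition of $\orig$ and the proof of Lemma~\ref{lemma:traceback}, i.e. with those conditions set aside, and I would state this scope explicitly. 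The main (and essentially the only) point requiring care is bookkeeping: verifying rule by rule, from the explicit form in Figure~\ref{fig:trRules}, that the pushed or popped signature at the $Y$ door and at box entry is a function of the formula-stack/edge data alone, so that the bijective correspondence $\bstk\leftrightarrow\overline{\bstk}$ lifts coherently to $\bstk'\leftrightarrow\overline{\bstk}'$. There is no conceptual obstacle.
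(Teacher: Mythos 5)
The paper never actually proves this Fact: it is stated as an observation, to be read off Fig.~\ref{fig:trRules}, and then invoked in the proof of Lemma~\ref{lem:trsfProps}. So a rule-by-rule inspection, as you propose, is the right kind of argument in principle. The problem is that your central claim --- that no rule ``inspects or branches on the contents of $\bstk$'' and that ``the formula-stack and edge components are untouched by the substitution'' --- is false exactly for the rules that make the statement non-trivial. Whenever a token crosses a box border \emph{outwards} (principal or auxiliary door), the depth invariant on $\bstk$ forces a pop, and the popped signature is transferred into the formula stack (this is also forced by the backward determinism used to define $\orig$: if the signature were discarded, a position would have many predecessors). Hence the target formula stack of such a transition \emph{does} depend on $\bstk$, and your assertion that the popped signature ``is determined only by the formula stack and the edge type'' is wrong for these cases; under $\bstk\mapsto\overline{\bstk}$ the transition still exists, but with a different target formula stack, so even the literal statement with fixed $\stk'_i$ only survives if one lets the primed stacks be recomputed from $\overline{\bstk}$. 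Worse, at the principal door of a $Y$-box the machine must decide between ``return to $?A\b$'' and ``exit to $!A$'', and this decision is made by inspecting whether the top of the box stack has the form $y(\cdot,\cdot)$ (this is what makes the machine consistent with the $y$-unfolding and with the definition of $\trsf$ on $y$-steps); so for that door even the target \emph{edge} can change when $\bstk$ is replaced arbitrarily.

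Consequently your case analysis, as written, glosses over precisely the delicate cases, and your treatment of the side conditions is also too quick: condition (i)/(ii) literally tests $\bstk\in\id(S)$, so it is not ``fixed by the untouched remainder $T$'' independently of the box stack. A correct write-up has to weaken (or make precise) what ``parametric'' means: each rule reads and writes at most the top element of the box stack (and, for (i)--(iii), compares box stacks against $\id(S)$ or against each other), and is uniform in everything below that top element; the primed stacks in the conclusion must then be the corresponding recomputation from $\overline{\bstk}$, with the formula stacks co-varying where a pop occurs. This weaker, carefully scoped statement is what Lemma~\ref{lem:trsfProps} actually needs (there $\trsf$ modifies the \emph{bottom} of the box stacks, disjoint from the top element the rules manipulate, and the residual $Y$-door cases are handled by the explicit case analysis), but it is not what your proposal proves.
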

 
 \begin{lemma}[Properties of $\trsf$]\label{lem:trsfProps}
   For any reduction $\netone \red \nettwo$,
   \begin{enumerate}
   \item If $\st \redsiam \sttwo$ in $\machine{\netone}$
     then $\trsf(\st) \redsiam^* \trsf(\sttwo)$ in $\machine{\nettwo}$.
     
     \item If $\st$ is an initial state in $\machine{\netone}$,
         then so is $\trsf(\st)$ in $\machine{\nettwo}$.
     
   \item If $\st \stopsiam$ in $\machine{\netone}$
     then $\trsf(\st) \stopsiam$ in $\machine{\nettwo}$.
   
   \item If $\st$ is a final state in $\machine{\netone}$,
     then so is $\trsf(\st)$ in $\machine{\nettwo}$.
   \item If $\st$ is a deadlock state in $\machine{\netone}$,
     then so is $\trsf(\st)$ in $\machine{\nettwo}$.
     
 %
   \end{enumerate}
 \end{lemma}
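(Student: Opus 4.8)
The plan is to prove all five points simultaneously by a case analysis on the net-rewriting rule used in $\netone \red \nettwo$ --- the axiom rule, the multiplicative rules, the weakening rule $w$, $s.el$, $bot.el$, the dereliction rule $d$, and the unfolding rule $y$ --- relying on the explicit description of $\trsf$ given in Figures~\ref{fig:trsf}, \ref{fig:trsfOthers} and~\ref{fig:Ytrsf_example}. The guiding observation is that $\trsf$ is the identity on every position sitting on an edge that the rewriting rule leaves untouched, and acts by an explicitly described local modification (possibly deleting tokens) on the finitely many edges near the redex. Moreover, by Fact~\ref{fact:trsf}, for the two rules $d$ and $bot.el$ that open a box the set of deleted tokens always contains a stable token, and the only other tokens $\trsf$ ever deletes are the freshly spawned $?d$/$\onelk$ token near the redex, whose (image of its) starting position is stable. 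A further ingredient is Fact~\ref{fact:parametricity}: the rules $d$ and $y$ rename box stacks, and parametricity guarantees that enabledness of a transition is insensitive to such a uniform renaming.

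For Point~1 I would split each SIAM transition $\st \redsiam \sttwo$ according to whether it fires entirely on edges untouched by the redex or touches the redex. In the first case the very same transition fires from $\trsf(\st)$ to $\trsf(\sttwo)$; when the redex is $d$ or $y$ one invokes Fact~\ref{fact:parametricity} for the box-stack bookkeeping, and the side conditions (i)--(iii) are preserved because $\trsf$ maps, uniformly, the stable tokens witnessing a given $\id(\cdot)$ and the tokens participating in a synchronisation. In the second case one checks rule by rule that the transition either has a matching transition in $\nettwo$ (so $\trsf(\st) \redsiam \trsf(\sttwo)$) or has \emph{both} of its endpoints deleted by $\trsf$ (so $\trsf(\st) = \trsf(\sttwo)$ and the $\redsiam^*$ is the empty sequence); this is exactly why the statement is phrased with $\redsiam^*$. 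Points~2 and~4 are then short. The conclusions of $\netone$ and $\nettwo$ coincide (up to the renaming of a conclusion edge in the axiom case), $\trsf$ restricts there to a direction-preserving bijection, hence it maps $\POSI_\netone$ bijectively onto $\POSI_\nettwo$ and $\POSF_\netone$ onto $\POSF_\nettwo$; therefore $\trsf(\stI_\netone) = \stI_\nettwo$ (Point~2), and since in a final state every non-final token is stable and inspection of Figures~\ref{fig:trsfOthers}--\ref{fig:Ytrsf_example} shows that stable tokens are either deleted or sent to stable tokens, $\trsf$ of a final state is final (Point~4).

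The heart of the argument is Point~3, from which Point~5 follows. I would argue the contrapositive: assuming $\trsf(\st) \redsiam \sttwo'$ in $\machine{\nettwo}$, I exhibit a transition out of $\st$ in $\machine{\netone}$. Every token of $\trsf(\st)$ lies either on an edge inherited from $\netone$ with an identical local neighbourhood, or on one of the few redex-produced edges of $\nettwo$. If the enabled transition of $\nettwo$ fires on an inherited edge, the token of $\st$ that maps to it enables the corresponding transition in $\netone$ --- using Fact~\ref{fact:parametricity} once more, and checking that the side conditions (i)--(iii) \emph{reflect back}. This is the delicate point, since $\sdom$ and the sets $\id(\cdot)$ are global data whereas $\trsf$ both renames box stacks and deletes stable tokens; the fact that makes it go through is that a stable token is deleted by $\trsf$ only when the box at whose principal door it sits is itself opened or relocated by the redex, so for every box surviving in $\nettwo$ the set $\id(\cdot)$ of its box-copy identities is in $\trsf$-bijection with that of the corresponding box in $\netone$, whence a synchronisation or a multi-$\bot$-box move available in $\nettwo$ was already available in $\netone$. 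If instead the enabled transition of $\nettwo$ fires on a redex-produced edge, one verifies directly, rule by rule, that the token(s) of $\st$ mapping to it could already perform the corresponding redex-side transition (crossing the new cut or axiom, entering the unfolded box, etc.). This shows that $\st$ admits a transition, proving Point~3. For Point~5 it remains to see that $\trsf$ of a non-final state is non-final: a witnessing non-final, non-stable token of $\st$ is never among those deleted by $\trsf$ (those are all stable, or are the spawned $?d$/$\onelk$ tokens whose images are stable), and by inspection of the figures its $\trsf$-image is again non-final and non-stable. I expect the bookkeeping of the side conditions (i)--(iii) under $\trsf$, in Points~1 and~3, to be the only genuinely subtle part; everything else is routine checking against the figures.
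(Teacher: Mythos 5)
Your overall route coincides with the paper's: a rule-by-rule inspection of $\trsf$ in which transitions away from the redex carry over verbatim, Fact~\ref{fact:parametricity} absorbs the box-stack renaming in the exponential cases, Points~2 and~4 follow because initial (resp.\ final) positions are mapped to initial (resp.\ final) positions, Point~3 is argued contrapositively by reflecting an enabled transition of $\trsf(\st)$ back to $\st$ (the paper likewise first notes that tokens outside the redex cannot become enabled and then inspects only the redex tokens), and Point~5 is reduced to Point~3 plus preservation of non-finality, which the paper itself dispatches as an immediate consequence of items~3 and~4.

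There is, however, one concrete flaw in the supporting claim you use for Points~4--5: it is not true that the only tokens ever deleted by $\trsf$ are stable ones plus the freshly spawned $\dlk$/$\onelk$ tokens at the redex. The $w$-rule erases the entire closed exponential box together with the $?w$ node, so $\trsf$ deletes \emph{every} position inside that box, and such tokens need not be stable (think of a token parked on a premiss of a sync node inside the box); Fact~\ref{fact:trsf} only concerns the $d$ and $bot.el$ steps. Hence your argument that ``a witnessing non-final, non-stable token is never among those deleted'' does not go through as stated for the $w$-step. To repair it you must either observe that no token can in fact occupy the weakened box in the states to which the lemma is applied (its principal conclusion is cut against a $?w$, which has no premisses, so no token reachable from the initial state ever enters it), or restrict attention to states occurring in runs, which is how the lemma is used downstream. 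Two smaller inaccuracies: in Point~1 the collapsing case typically arises because the two endpoints are \emph{identified} by $\trsf$ (as in the axiom step), not deleted; and your dichotomy misses the one-to-many case of the $s$-rule, where a single crossing of the $\slk$ node in $\netone$ is simulated in $\nettwo$ by crossing the $\slk$ node followed by several $\otimes$-crossings---this, and not only the collapsing case, is why the statement is phrased with $\redsiam^*$.
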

 \begin{proof}
   Each statement can be proved by case analysis.
   Note that statement $1.$ includes $\trsf(\st) = \trsf(\sttwo)$.
   \begin{enumerate}
   \item 
     If the transition $\st \redsiam \sttwo$ is not on the redex of 
     $\netone \red \nettwo$ then the claim holds,
     because the positions of tokens and the structure are the same except around the redex.
     Else we examine each case of reductions, where we have to consider only
     such a transition $\st \redsiam \sttwo$ that moves a token on the redex:
     \begin{itemize}
     \item \underline{$\red_{a}$}
       The states $\st$ and $\sttwo$ are mapped to
       $\trsf(\st) = \trsf(\sttwo)$ by definition of $\trsf$.
     \item \underline{$\red_{m}$} Similarly we verify that
       $\trsf(\st) = \trsf(\sttwo)$ (if the transition crosses $\tensor$ or $\parr$ node)
       or $\trsf(\st) \redsiam \trsf(\sttwo)$ (if the transition crosses $\cutlk$).
     \item \underline{$\red_{s}$} If the transition crosses the $\tensor$ node
       then $\trsf(\st) = \trsf(\sttwo)$.
       If the transition crosses the $\slk$ node then
       $\trsf(\st) \redsiam \st'_1 \redsiam \dots \redsiam \st'_n \redsiam \trsf(\sttwo)$,
       where the first transition crosses the $\slk$ node and each of the other transitions
       crosses the $\tensor$ node one by one.
     \item \underline{$\red_{s.el}$} Similar to the case of $\red_{a}$.
     \item \underline{$\red_{bot.el}$} If the transition crosses the $\cutlk$ node
       or enters the box then $\trsf(\st) = \trsf(\sttwo)$.
       Else, for the transition $\st \redsiam \sttwo$ in $\machine{\netone}$
       inside the box allowed by a stable token,
       there is always a transition $\trsf(\st) \redsiam \trsf(\sttwo)$
       since the structure contained in $\netone$ in the box is now surface in $\nettwo$.
     \item \underline{$\red_{c}$}
       If the transition crosses the $?c$ node then $\trsf(\st) = \trsf(\sttwo)$.
       Else $\trsf(\st) \redsiam \trsf(\sttwo)$ since the transition tules are defined
       parametrically with respect to box stacks
       (thus if $\st \redsiam \sttwo$ is done with a box stack $\bstk.l(\sigma)$,
       $\trsf(\st) \redsiam \trsf(\sttwo)$ can be done with box stack $\bstk.\sigma$.)
     \item For the other exponential rules the situation is similar:
       if the transition is on a dereliction token the states collapse,
       else $\trsf(\st) \redsiam \trsf(\sttwo)$ is possible by the same rule as
       $\st \redsiam \sttwo$ with a different box stack.
     \end{itemize}
 
   \item 
    Immediate.  Any token in an initial position is mapped to an initial position.

   \item 
   
      First we observe that  is impossible for tokens in $\st$ which are outside  the redex   to become able to move
          in $\trsf(\st)$, since their positions are  not modified.
          Thus we examine only the tokens in the redex.
            By case analysis, and by Fact~\ref{fact:parametricity}, the 
            existence of some state $\sttwo$ s.t.\ $\trsf(\st) \redsiam \sttwo$
            contradicts to $\st \stopsiam$.

   \item 
     Immediate. Any token in a  final position is mapped to a final position.
 
   \item 
   Immediate consequence of items 3. and 4.
     
   \end{enumerate}
 \end{proof}

 \begin{lemma}\label{lem:trsf_run}\quad\\
 \begin{itemize}
 \item[i.]  If  $\st_0 \to \cdots \to \st_n \redsiam \cdots$ is  a run of $\M_R$, then  $\trsf(\st_0) \to^{*} \cdots \to^{*} \trsf(\st_n) \redsiam \cdots$ is a run of $\M_{R'}$.   Moreover
 
 \item[ii.]   $\st_0 \to \cdots \to \st_n \redsiam \cdots$ is infinite/converges/deadlocks iff  $\trsf(\st_0) \to^{*} \cdots \to^{*} \trsf(\st_n) \redsiam \cdots$ does.
 
 \end{itemize}
 \end{lemma}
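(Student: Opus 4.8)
The plan is to lift the single-step transfer property of $\trsf$ (Lemma~\ref{lem:trsfProps}) to whole runs; the only real content is checking that the image sequence is \emph{maximal}. Fix a run $\rho = (\st_0 \redsiam \st_1 \redsiam \cdots)$ of $\M_R$, where $R \red R'$. Lemma~\ref{lem:trsfProps}(1) gives $\trsf(\st_i) \redsiam^{*} \trsf(\st_{i+1})$ in $\M_{R'}$ for each $i$, so concatenating these (possibly empty) segments produces a transition sequence $\sigma$ of $\M_{R'}$ that, by Lemma~\ref{lem:trsfProps}(2), starts at the unique initial state of $\M_{R'}$. Both (i) and (ii) then reduce to the single claim that $\rho$ and $\sigma$ terminate simultaneously and, when they do, in states of the same kind.

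If $\rho$ is finite, ending in $\st_n$ with $\st_n \stopsiam$, then $\sigma$ ends in $\trsf(\st_n)$, which satisfies $\trsf(\st_n) \stopsiam$ by Lemma~\ref{lem:trsfProps}(3); hence $\sigma$ is a finite, and so maximal, run, and items (4)--(5) of Lemma~\ref{lem:trsfProps} transfer ``final'' and ``deadlock'' from $\st_n$ to $\trsf(\st_n)$. Thus a converging $\rho$ yields a converging $\sigma$ and a deadlocking $\rho$ a deadlocking $\sigma$. Since convergence, deadlock and divergence are mutually exclusive and exhaustive on each side, the ``iff'' in (ii) will follow as soon as the divergent case is settled.

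The divergent case is the crux, and the step I expect to be the main obstacle: assuming $\rho$ infinite, I must show $\sigma$ is infinite as well, for then $\sigma$ is automatically a maximal run and both sides diverge. The difficulty is that a step $\st_i \redsiam \st_{i+1}$ can be \emph{collapsed} by $\trsf$, i.e.\ $\trsf(\st_i) = \trsf(\st_{i+1})$, so a priori $\rho$ could become entirely collapsed from some point on. From the case analysis behind Lemma~\ref{lem:trsfProps}(1) one reads off that a step is collapsed only when it moves a token that is either (a) crossing one of the redex's own $\axlk$, $\cutlk$, $\otimes$, $\parr$ or $?c$ nodes -- i.e.\ moving among the depth-$0$ edges that $\trsf$ actually identifies -- or (b) the dereliction or $\onelk$ token spawned at the redex of a $d$- or $bot.el$-step, which by Fig.~\ref{SIAM_stable} becomes stable -- and hence frozen, since no transition acts on a stable position -- after boundedly many moves (this stable token is precisely the one $\trsf$ deletes, cf.\ Fact~\ref{fact:trsf}). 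Moreover, by the shape of each redex, a token enters this ``collapsing zone'' from outside, or leaves it, only by crossing an edge that $\trsf$ leaves untouched, i.e.\ via a \emph{non}-collapsed step; so each token performs at most a bounded number of collapsed steps per visit to the zone, each visit preceded or followed by a non-collapsed step. Consequently, if $\rho$ had only finitely many non-collapsed steps, it would have only finitely many collapsed ones too (the finitely many frozen special tokens of (b) contributing merely a constant), so $\rho$ would be finite -- a contradiction. Hence $\rho$ has infinitely many non-collapsed steps, $\sigma$ is infinite, and (i)--(ii) are proved. The routine parts I am glossing over are the per-rule verification that the collapsing zone of each rewriting rule is as described, and that $\trsf$ is injective away from the tokens it deletes.
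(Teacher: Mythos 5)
Your proposal is correct and follows essentially the same route as the paper: part (i) and the terminating case come from the per-step transfer and the preservation of stuck/final/deadlock states (Lemma~\ref{lem:trsfProps}), and the divergent case is handled, as in the paper, by splitting transitions into collapsing ones (those $\trsf$ identifies, confined to the redex and to the soon-to-be-stable spawned token) and non-collapsing ones, and showing an infinite run must contain infinitely many of the latter. The paper phrases the counting slightly differently (any block of consecutive collapsing transitions is bounded by twice the number of tokens on the redex edges), but this is the same argument as your per-visit bound.
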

 \begin{proof}
 (i) is direct consequence of Lemma~\ref{lem:trsfProps}.   To prove (ii), we first  prove (1.) and (2.) below:
 \begin{enumerate}
 \item 
 \emph{If $\st_0 \to \cdots \to \st_n \redsiam \cdots $ is an infinite run then  $\trsf(\st_0) \to^{*} \cdots \to^{*} \trsf(\st_n) \redsiam \cdots$ is an infinite run}

   \begin{proof}
     \renewcommand{\qedsymbol}{$\blacksquare$}
 
   First, note that in every state $\st$ in a run, the set $\scod$ is finite.
   We call a transition $\st \redsiam \sttwo$ that satisfies $\trsf(\st) = \trsf(\sttwo)$
   (resp.\ $\trsf(\st) \redsiam^+ \trsf(\sttwo)$) a \emph{collapsing transition}
   (resp.\ a \emph{non-collapsing transition}).
   We show the following:
 \begin{itemize}
 \item 
 \emph{ For any state $\st$ s.t.\ $\st_0 \redsiam^* \st$,
   an infinite sequence of transitions $\st \redsiam \cdots$
   contains infinitely many non-collapsing transitions.
  }
 
   \noindent
   Let $\R \red_{a} \R'$, and $\edg_1,\edg_2,\edg_3$ be the edges in Figure~\ref{fig:trsf_re}.
   \begin{figure}[htbp]
     \centering
 
     \includegraphics{trsfAxEdges}
     \caption{Edges of the redex}\label{fig:trsf_re}
   \end{figure}
   Since $\scod$ is finite, the set\linebreak
   \noindent
   $\{(\edg_i,s,\epsilon) \separ i \in \{1,2,3\} \} \cap \scod$ is finite.
   Let $n$ be the number of elements in this set of positions.
   It is straightforward to check that
   the length of a sequence of transitions from $\st$ only using collapsing transitions
   is bounded by $2n$. (We cannot apply more than two collapsing transitions on each token.)
   Thus an infinite sequence $\st \redsiam \cdots$ of transitions must contain
   a non-collapsing transition.
   By repeating this argument for all states in the run, we see that the infinite sequence $\st \redsiam \cdots$
   contains infinitely many non-collapsing transitions.
 \end{itemize}
 
   We conclude that there are infinitely many transitions
   $\st_i \redsiam \st_{i+1}$ s.t.\ $\trsf(\st_i) \redsiam^+ \trsf(\st_{i+1})$,
   and therefore the run $\trsf(\st_0) \to^{*} \cdots \to^{*} \trsf(\st_n) \to \cdots$
   is infinite.
   A similar argument applies for all other  reduction steps.
  
   \end{proof}
 
 \item\emph{ If $\st_0 \to \dots \to \st_n $ is a run which  terminates then  $\trsf(\st_0) \to^{*} \dots \to^{*} \trsf(\st_n)$ is a run which terminates.
 In this case, 
 \begin{enumerate}
 \item if $\st_n$ is final, so is $\trsf(\st_n) $
 \item if $\st_n$ is  not final, so is $\trsf(\st_n) $
 \end{enumerate}
 }
 \begin{proof}
  By using Lemma~\ref{lem:trsfProps}, items 3.,  4. and 5. 
 \end{proof}

 \end{enumerate}
 We conclude by noticing that (1.) and (2.) together  are  equivalent to either of the following:
 
 \begin{itemize}
   \item  $\st_0 \to \cdots \to \st_n \cdots $ is an infinite run iff   $\trsf(\st_0) \to^{*} \cdots \to^{*} \trsf(\st_n) \cdot$ is an infinite run
  
  \item  $\st_0 \to \dots \to \st_n $ is a run which  terminates iff   $\trsf(\st_0) \to^{*} \dots \to^{*} \trsf(\st_n)$ is a run which 
 
 \end{itemize}
 
 Similarly, (2.a) and (2.b) are equivalent to either of the following:
 
 \begin{itemize}
 \item  $\st_n$ is final iff  $\trsf(\st_n) $ is final;
 
 \item $\st_n$ is a deadlock iff  $\trsf(\st_n) $ is a deadlock.
 
 \end{itemize}
 \end{proof}
 
 \paragraph{Proof of Theorem~\ref{soundness}~(Soundess).}

 Let $R$ be a net of conclusions $A_1,\dots,A_n$, and
   $R\red R'$;  we adopt the
   following convention: we identify each conclusion of a net with the
   occurrence of formula $A_i$ typing it, so that there is no
   ambiguity. In particular, $R$ and $R'$ have\emph{ the same initial and
   final positions}.  We now show that the interpretation of a net is
   preserved by all normalization steps.
   
   We first observe the following.

 \begin{lemma}\label{orig}
   Let $\stI_R \redsiam \dots \redsiam \st $ a run in $\machine R$ and   $\stI_{R'} \redsiam \dots \redsiam \trsf (\st) $
     the corresponding  sequence of transitions in $\machine {R'}$. For each $\pp\in \scod$,
      we have the following:
     \begin{itemize}
     \item $\orig_R(\pp)\in \POSI_R$ iff $\orig_{R'}(\trsf (\pp))\in \POSI_{R'}= \POSI_R$
     \item if  $\orig_R(\pp)\in \POSI_R$ then  $\orig_{R}(\pp)= \orig_{R'}(\trsf (\pp))$  
     \end{itemize}
  
  \end{lemma}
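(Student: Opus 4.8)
The plan is to follow the single-token \emph{origin path} of a token and track, rule by rule, how the state transformation $\trsf$ acts along it. Fix $\pp \in \scod$ and let $\orig_R(\pp) = \pp_0 \redsiam \pp_1 \redsiam \cdots \redsiam \pp_k = \pp$ be the backward trace of $\pp$ in $\M_R$; it exists by Lemma~\ref{lemma:traceback}, and it is unique because, as observed in Section~\ref{tracing}, each position of $R$ comes from at most one position via a transition. I would use two elementary facts about $\M_{R'}$ in the same vein: backward transitions in $R'$ are also deterministic, so $\orig_{R'}(\qq)$ is just the starting position one reaches by iterating ``the position $\qq$ comes from''; and \emph{no transition of $\M_{R'}$ produces a position in $\POSI_{R'}$} (an initial position lies on a conclusion of the net with direction $\up$, and conclusions are dangling, so nothing can arrive there). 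In particular, once the backward trace of $\qq$ meets an initial position it stops there, and that position is $\orig_{R'}(\qq)$.

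The second ingredient is a case inspection of Fig.~\ref{fig:trsfOthers} recording the action of $\trsf$ on starting positions and on single-token moves: (a) $\trsf$ is the identity on initial positions (they sit on conclusion edges, untouched by any redex, up to the relabelling of conclusions in the axiom step, which is absorbed by the identification of a conclusion with its formula occurrence); it maps a \emph{surviving} $\ONES_R$ (resp.\ $\DEREL_R$) position to a $\ONES_{R'}$ (resp.\ $\DEREL_{R'}$) position (in the $d$- and $y$-steps the box stack of such a position may be shortened or lengthened, but the underlying edge stays the conclusion of a $\onelk$, resp.\ $?d$, node); and the only starting positions on which $\trsf$ is undefined are the $\ONES_R$ positions feeding the sync erased by an $s.el$-step and the $\DEREL_R$ position on the $?d$ node consumed by a $d$-step --- \emph{never a position in $\POSI_R$}. (b) If a single token moves from $\qq$ to $\qq'$ and $\trsf(\qq),\trsf(\qq')$ are both defined, then $\trsf(\qq) \redsiam^* \trsf(\qq')$ in $\M_{R'}$; this is the single-token reading of Lemma~\ref{lem:trsfProps}(1).

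Assembling: if $\orig_R(\pp)=\pp_0\in\POSI_R$, I would first check that $\trsf$ is defined on the whole trace. The only token-deleting reductions are $d$, $bot.el$, $s.el$; by Fact~\ref{fact:trsf} the deleted token in a $d$- or $bot.el$-step is a \emph{stable} token at a box's principal door, which is terminal (no outgoing transition) and whose origin is in $\DEREL_R\cup\ONES_R$ as observed when discussing box copies, while in an $s.el$-step the deleted tokens sit on the erased sync and the $\onelk$'s above it and can only have started there, i.e.\ have origin in $\ONES_R$ --- none of these can occur on a trace issued from $\POSI_R$. Hence all $\trsf(\pp_i)$ are defined, $\trsf(\pp_0)\redsiam^*\cdots\redsiam^*\trsf(\pp_k)=\trsf(\pp)$ is a trace of $\M_{R'}$ by (b), and since $\trsf(\pp_0)=\pp_0\in\POSI_{R'}$ with nothing preceding it, $\orig_{R'}(\trsf(\pp))=\pp_0=\orig_R(\pp)$; this gives the ``only if'' of the first bullet and the equality in the second. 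For the converse, assume $\orig_R(\pp)=\ss_0\in\ONES_R\cup\DEREL_R$ and (otherwise vacuous) that $\trsf(\pp)$ is defined. If $\trsf$ is defined along the whole trace, then as above $\orig_{R'}(\trsf(\pp))=\trsf(\ss_0)\in\ONES_{R'}\cup\DEREL_{R'}$ by (a). If not, the trace meets a deleted token of an $s.el$- or $d$-step; the $d$-case is impossible since that token is terminal and would equal $\pp$, contradicting $\trsf(\pp)$ defined; in the $s.el$-case the surviving suffix of the trace in $R'$ starts at a conclusion of one of the fresh $\onelk$ nodes created by $s.el$, so $\orig_{R'}(\trsf(\pp))\in\ONES_{R'}$. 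In all cases $\orig_{R'}(\trsf(\pp))\notin\POSI_{R'}$, because $\POSI_{R'}$ is disjoint from $\ONES_{R'}\cup\DEREL_{R'}$ (initial positions have direction $\up$ and sit on conclusions of the net, whereas $\ONES$/$\DEREL$ positions sit on conclusions of $\onelk$/$?d$ nodes).

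The step I expect to be the real work is the bookkeeping for the three token-deleting reductions ($d$, $bot.el$, $s.el$): one must verify that $\trsf$ never deletes a token whose origin is an initial position, and --- in the $s.el$ case, where the origin is effectively re-routed through the erased sub-net --- that the re-routed origin in $R'$ is still not initial (it becomes one of the freshly created $\onelk$ positions). The remainder is the routine per-case inspection of Fig.~\ref{fig:trsfOthers}, reusing the determinism of backward transitions already exploited for $\orig$ in Lemma~\ref{lemma:traceback}.
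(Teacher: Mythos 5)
Your argument is correct, and it amounts to the same underlying content as the paper's proof, just organized differently. The paper proves Lemma~\ref{orig} by a forward induction on the length of the run, with the inductive step stated as ``immediate'': tracing back from the positions of $\trsf(\st_n)$ reaches the same positions as from $\trsf(\st_{n-1})$, which is exactly the commutation of $\trsf$ with single-token backward steps that you establish. You instead fix one position $\pp$, unfold its whole backward origin path (which, by backward determinism, coincides with the token's actual history), and push $\trsf$ along it, which forces you to do explicitly the bookkeeping the paper leaves implicit: that $\trsf$ never deletes a position visited by a token whose origin is in $\POSI_R$, that it fixes initial positions (modulo the identification of conclusions with their formula occurrences), and that starting positions have no predecessors in $\M_{R'}$ so the transported path really is the backward trace there. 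That extra explicitness is a genuine plus, since it isolates where Fact~\ref{fact:trsf} and the case analysis of Fig.~\ref{fig:trsfOthers} are used. Two small inaccuracies, neither fatal: the $s.el$-step does not create fresh $\onelk$ nodes --- it erases the sync and keeps the existing $\onelk$ nodes, so the re-routed origin in $R'$ is simply the $\ONES_{R'}$ position of the same $\onelk$ node (and in fact the corresponding positions are collapsed rather than deleted, as in the axiom case of Lemma~\ref{lem:trsfProps}); and the $w$-step is also formally position-deleting (the whole closed box disappears), which you omit --- harmless here, because no token of a reachable state can sit inside a closed box cut against a $?w$ node, but worth a sentence. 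Finally, you use without proof that $\ONES_{R'}$ and $\DEREL_{R'}$ positions have no incoming transitions (you state this only for $\POSI_{R'}$); it holds for the same reason and is needed to conclude that the transported trace stops at $\trsf(\ss_0)$ in the non-initial case.
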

 
 \begin{proof}
   By induction on the length of the run.
   If $\st = \stI_R$ it is immediate by definition of $\trsf$.
   If $\stI_R = \st_0 \redsiam \dots \redsiam \st_{n-1} \redsiam \st_n$,
   it is immediate to check that tracing back from the positions in $\trsf(\st_n)$
   reaches the same positions as $\trsf(\st_{n-1})$.
 \end{proof}

 We now can  prove that 
   $\sem{\netone} = \sem{\nettwo}$ as partial functions.

 \begin{proof}

 We know that $\POSI_R=\POSI_{R'}$ (let us call this set $\POSI$)   and  $\POSF_R = \POSF_{R'}$  (let us call this set $\POSF$). 
 Assume that $\M_R$ terminates in the state $\st$ and $\M_{R'}$ in the state $\st'$.
 We know  that $\trsf~ T = T'$ (because $\trsf~ T$ is a terminal final state of $\M_{R'}$, and such a state is unique);
 therefore  $ \scod \cap \POSF = \scod' \cap \POSF $. Let us call this set $X$. 
  It is immediate that if $\pp \in X$ then $\pp=\trsf ~\pp$. 
 Finally, by Lemma~\ref{orig}, for each $\ss\in \POSI$ and each $\pp \in X$,  $\orig_R(\pp) = \ss$ iff $ \orig_{R'}(\pp) =\ss$.
 From this we  conclude that $\sem{R}(\ss)= \sem{R'}(\ss)$.

 \end{proof}

\end{document}